\newtheorem{theorem}{Theorem}[section]
\newtheorem{lemma}[theorem]{Lemma}
\newtheorem{proposition}[theorem]{Proposition}
\newtheorem{corollary}[theorem]{Corollary}
\newtheorem{fact}[theorem]{Fact}
\newtheorem{definition}[theorem]{Definition}
\newtheorem{remark}[theorem]{Remark}
\newcommand{\ignore}[1]{}
\newcommand{\rnote}[1]{}
\newcommand{\dnote}[1]{}
\newcommand{\pnote}[1]{}
\newcommand{\todobox}[1]{}
\renewcommand{\Pr}{\mathop{\bf Pr\/}}
\newcommand{\E}{\mathop{\bf E\/}}
\newcommand{\Cov}{\mathop{\bf Cov\/}}
\newcommand{\cov}{\mathop{\bf Cov\/}}
\newcommand{\R}{\mathbb R}
\newcommand{\N}{\mathbb N}
\newcommand{\F}{\mathbb F}
\newcommand{\eps}{\epsilon}
\newcommand{\sgn}{\mathrm{sgn}}
\newcommand{\vsgn}{\overrightarrow{\mathrm{sgn}}}
\newcommand{\poly}{\mathrm{poly}}
\newcommand{\polylog}{\mathrm{polylog}}
\newcommand{\littlesum}{\mathop{{\textstyle \sum}}}
\newcommand{\wt}{\widetilde}
\newcommand{\calL}{{\cal L}}
\newcommand{\calF}{{\cal F}}
\newcommand{\calH}{{\cal H}}
\newcommand{\calN}{{\cal N}}
\newcommand{\calP}{{\cal P}}
\newcommand{\calD}{{\cal D}}
\newcommand{\calC}{{\cal C}}
\newcommand{\bh}{\boldsymbol{h}}
\newcommand{\bp}{\boldsymbol{p}}
\newcommand{\bx}{\boldsymbol{x}}
\newcommand{\bw}{\boldsymbol{w}}
\newcommand{\by}{\boldsymbol{y}}
\newcommand{\bz}{\boldsymbol{z}}
\newcommand{\bX}{\boldsymbol{X}}
\newcommand{\bY}{\boldsymbol{Y}}
\newcommand{\bZ}{\boldsymbol{Z}}
\newcommand{\bH}{\boldsymbol{H}}
\newcommand{\bS}{\boldsymbol{S}}
\newcommand{\bT}{\boldsymbol{T}}
\newcommand{\bG}{\boldsymbol{G}}
\newcommand{\abs}[1]{\left\lvert #1 \right\rvert}           
\newcommand{\len}[1]{\left\lVert #1 \right\rVert}           
\newcommand{\rnorm}[1]{\pmb{\lVert} #1 \pmb{\rVert}} 
\newcommand{\nice}{b}
\newcommand{\bThm}{\boldsymbol{X}}
\newcommand{\bDerand}{\boldsymbol{Y}}
\newcommand{\Ind}[1]{{\bf 1}\!\left[#1\right]}       
\newcommand{\sign}{\mathrm{sign}}
\newcommand{\zo}{\{0,1\}}
\newcommand{\set}[1]{\{ #1 \}}
\newcommand{\pmo}{\ensuremath \{ -1, 1\}}
\newcommand{\domain}{\Omega}
\newcommand{\aaa}{a}
\newcommand{\bbb}{b}
\newcommand{\btheta}{\boldsymbol{\theta}}
\newcommand{\littleprod}{\mathop{{\textstyle \prod}}}
\newcounter{this-list}
\newenvironment{tightenumerate}{
\vspace{2pt}
\begin{list}{\arabic{this-list}.}{\usecounter{this-list}
                                 \setcounter{this-list}{0}
  \setlength{\itemsep}{0pt}%
  \setlength{\parsep}{0pt}%
  \setlength{\topsep}{0pt}%
    \setlength{\partopsep}{0pt}%
  \setlength{\leftmargin}{3.5ex}%
  \setlength{\labelwidth}{4.5ex}%
  \setlength{\labelsep}{1ex}%
}} {\end{list}\vspace{2pt}}
\begin{document}

\title{Fooling Functions of Halfspaces under Product Distributions}

\author{Parikshit Gopalan\\
Microsoft Research SVC\\
\texttt{parik@microsoft.com}
\and
Ryan O'Donnell\thanks{Work was partially done while the author
  consulted at Microsoft Research SVC. Supported by NSF grants CCF-0747250 and CCF-0915893, BSF grant 2008477, and Sloan and Okawa fellowships.}\\
Carnegie Mellon University\\
\texttt{odonnell@cs.cmu.edu}
\and
Yi Wu\thanks{Work done while an intern at Microsoft Research SVC.} \\
Carnegie Mellon University\\
\texttt{yiwu@cs.cmu.edu}\\
\and
David Zuckerman\thanks{Work was partially done while the author consulted at Microsoft Research SVC. 
Partially supported by NSF Grants CCF-0634811
and CCF-0916160 and THECB ARP Grant 003658-0113-2007.}\\
UT Austin\\
\texttt{diz@cs.utexas.edu}\\
}

\maketitle

\setcounter{page}{0}

\maketitle

\thispagestyle{empty}
\begin{abstract}

We construct pseudorandom generators that fool functions of halfspaces (threshold functions)
under a very broad class of product distributions. This class includes not only familiar cases such as the uniform distribution on the discrete cube, the uniform distribution on the solid cube, and the multivariate Gaussian distribution, but also includes any product of discrete distributions with probabilities bounded away from $0$.

Our first main result shows that a recent pseudorandom generator construction of Meka and
Zuckerman~\cite{MZ09}, when suitably modified, can fool arbitrary
functions of $d$ halfspaces under product distributions where each
coordinate has bounded fourth moment. To $\eps$-fool any size-$s$,
depth-$d$ decision tree  of
halfspaces, our pseudorandom generator uses 
seed length $O((d \log(ds/\eps) + \log n)\cdot \log(ds/\eps))$.
For monotone functions of $d$ halfspaces, the
seed length can be improved to $O((d \log(d/\eps) + \log n)\cdot\log(d/\eps))$.
We get better bounds for larger $\eps$; for example, to $1/\polylog(n)$-fool
all monotone functions of $(\log n)/\log\log n$ halfspaces, our generator
requires a seed of length just $O(\log n)$.

Our second main result generalizes the work of Diakonikolas et al.\ \cite{DGJSV:09} to show that bounded independence suffices to
fool functions of halfspaces under product distributions.
Assuming each coordinate satisfies a certain stronger moment condition,  
we show that any function computable by a size-$s$, depth-$d$ decision tree
of halfspaces is $\eps$-fooled by $\tilde{O}(d^4s^2/\eps^2)$-wise
independence. 

Our technical contributions include: a new multidimensional version of the classical Berry-Esseen theorem; a derandomization thereof; a generalization of Servedio~\cite{Ser:07}'s regularity lemma for halfspaces which works under any product distribution with bounded fourth moments; an extension of this regularity lemma to functions of many halfspaces; and, new analysis of the sandwiching polynomials technique of Bazzi~\cite{Baz09} for arbitrary product distributions.

\end{abstract}

\newpage

\section{Introduction}
\newcommand{\mc}[1]{\ensuremath{\mathcal{#1}}}
\newcommand{\mb}[1]{\ensuremath{\mathbb{#1}}}
\newcommand{\TCO}{\ensuremath{\mathsf{TC^0}}}
\newcommand{\BPP}{\ensuremath{\mathsf{BPP}}}
\newcommand{\PTIME}{\ensuremath{\mathsf{P}}}

\emph{Halfspaces}, or threshold functions, are a central class of Boolean-valued functions.
A halfspace is a function $h: \R^n \rightarrow \{0,1\}$ of the form
$h(x_1, \dots, x_n) = \Ind{w_1x_1 + \cdots + w_nx_n \geq \theta}$
where the weights $w_1,\ldots,w_n$ and the threshold $\theta$ are
arbitrary real numbers. These functions have been studied extensively in theoretical computer
science, social choice theory, and machine learning. In computer
science, they were first studied in the context of switching circuits;
see for instance~\cite{Dertouzos:65,Hu:65,LewisCoates:67,Sheng:69,Muroga:71}.   Halfspaces (with non-negative weights) have also been studied
extensively in game theory and social choice theory as models for
voting; see e.g.~\cite{Penrose:46,Isbell:69,DubeyShapley:79,TaylorZwicker:92}.  Halfspaces are also ubiquitous in machine learning contexts, playing a key role in many important algorithmic techniques, such as Perceptron , Support Vector Machine, Neural Networks, and AdaBoost.   One of the outstanding open problems in circuit lower bounds is to
find an explicit function that cannot be computed by a depth two
circuit (``neural network'') of threshold gates~\cite{HMP+:87,Krause:91,KrauseWaack:91,FKL+:01}.

In this work we investigate the problem of constructing explicit \emph{pseudorandom generators} for functions of halfspaces.  
\begin{definition}
A function $G: \{0,1\}^s \to B$ is a pseudorandom generator (PRG) with seed length $s$ and error $\epsilon$ for a class $\calF$ of functions from $B$ to $\zo$
under distribution $\calD$ on $B$  --- or more succinctly, $G$ $\epsilon$-fools $\calF$ under $\calD$ with seed length $s$ ---
if for all $f \in \calF$,
\[\Bigl|\Pr_{\bX \sim \calD}\,[f(\bX) = 1] - \Pr_{\bY \sim \{0,1\}^s}\,[f(G(\bY)) = 1]\,\Bigr| \leq \epsilon.\]
\end{definition}

Under the widely-believed complexity-theoretic assumption $\BPP =
\PTIME$, there must be a deterministic algorithm that can approximate
the fraction of satisfying assignments to any polynomial-size circuit
of threshold gates.  Finding such an algorithm even for simple
functions of halfspaces has proven to be a difficult derandomization
problem.  Very recently, however, there has been a burst of progress
on constructing PRGs for halfspaces~\cite{RS:08,DGJSV:09,MZ09}.  The
present paper makes progress on this problem in several different
directions, as do several concurrent and independent
works~\cite{HKM:09,DKN:09,BLY:09}. 

This flurry of work on PRGs for functions of halfspaces has several
motivations beyond its status as a fundamental derandomization task.
For one, it can be seen as a natural geometric problem, with
connections to deterministic integration; for instance, the problem of
constructing PRGs for halfspaces under the uniform distribution on the $n$-dimensional sphere
amounts to constructing a $\poly(n)$-sized set that hits every spherical cap with
roughly the right frequency~\cite{RS:08}.
Second, PRGs for halfspaces have applications in streaming
algorithms~\cite{GR:09}, while PRGs for functions of halfspaces can be
used to derandomize the Goemans-Williamson Max-Cut algorithm,
algorithms for approximate counting, algorithms for dimension
reduction and intractability results in computational learning~\cite{KS09}. Finally, proving lower
bounds for the class $\TCO$ of small depth threshold circuits is an outstanding open
problem in circuit complexity. An explicit PRG for a class
is easily seen to imply lower bounds against that class. Constructions
of explicit PRGs might shed light on structural properties of threshold
circuits and the lower bound problem.

\subsection{Previous work}
\newcommand{\etal}{{et al.\ }}

The work of Rabani and Shpilka~\cite{RS:08}  constructed a hitting set generator for halfspaces
under the uniform distribution on the sphere. Diakonikolas \etal \cite{DGJSV:09} constructed the first PRG
for halfspaces over bits; i.e., the uniform distribution on $\pmo^n$. They showed
that any $k$-wise independent distribution $\eps$-fools halfspaces with respect to the uniform
distribution for $k = \tilde{O}(1/\eps^2)$, giving PRGs
with seed length $(\log n)\cdot \tilde{O}(1/\eps^2)$. 

Meka and Zuckerman constructed a pseudorandom generator that $\eps$-fools degree-$d$ polynomial threshold 
functions (``PTFs'', a generalization of halfspaces) over uniformly random bits with
seed length $(\log n)/\eps^{O(d)}$ \cite{MZ09}. Their generator is a simplified
version of Rabani and Shpilka's hitting set generator. In the case of halfspaces, they combine their generator
with generators for small-width branching programs due to Nisan and Nisan-Zuckerman
\cite{Nis,NZ} to bring the seed length down to $O((\log n) \log(1/\eps))$.
This is the only previous or independent work where the seed length depends logarithmically on~$1/\eps$.

\subsection{Independent concurrent work}
Independently and concurrently, a number of   other researchers have extended some of the aforementioned results, mostly to intersections of halfspaces and polynomial threshold functions over the hypercube or Gaussian space.

Diakonikolas \etal~\cite{DKN:09} showed that $O(1/\eps^{9})$-wise independence
suffices to fool degree-2 PTFs under the uniform distribution on the
hypercube and under the Gaussian distribution. They also prove
that $\poly(d,1/\eps)$-wise independence suffices to fool
intersections of $d$ degree-2 PTFs in these settings.

Harsha \etal~\cite{HKM:09} obtain a PRG that fools
intersections of $d$ halfspaces under the Gaussian distribution with
seed length $O((\log n)\cdot \poly(\log d,1/\eps))$.  They
obtain similar parameters for intersections of $d$ ``regular'' halfspaces
under the uniform distribution on $\pmo^n$ (a halfspace is regular if
all of its coefficients have small magnitude compared to their sum of squares).

Ben-Eliezer \etal~\cite{BLY:09} showed that roughly $\exp((d/\eps)^d)$-wise
independence $\eps$-fools degree-$d$ PTFs which depend
on a small number of linear functions.

\subsection{Our Results}

In this work, we construct pseudorandom generators for arbitrary
functions of halfspaces under (almost) arbitrary product
distributions. Our work diverges from previous work in making minimal
assumptions about the distribution we are interested in, and in
allowing general functions of halfspaces. For both of our main
results, we only assume that the distribution is a product
distribution where each coordinate satisfies some mild 
conditions on its moments. These conditions include most
distributions of interest, such as the Gaussian distribution,
the uniform distribution on the hypercube, the uniform distribution on
the solid cube, and discrete distributions with probabilities bounded away from~$0$.  
Our results can also be used to fool the uniform distribution on the
sphere, even though it is not a product distribution. This allows us to
derandomize the hardness result of Khot and Saket~\cite{KS09} for learning
intersections of halfspaces.

We also allow for arbitrary functions of $d$ halfspaces, 
although the seed length improves significantly if we
consider monotone functions or small decision
trees. In particular, we get strong results for
intersections of halfspaces.  

\subsubsection{The Meka-Zuckerman Generator}

We show that a suitable modification of the Meka-Zuckerman (MZ) generator
can fool arbitrary functions of $d$ halfspaces under any product
distribution, where the distribution on each coordinate has
 bounded fourth moments. More precisely, we consider product distributions on $\bX =
 (\bx_1,\ldots,\bx_n)$ where for every $i \in [n]$,  $\E[\bx_i] = 0, \E[\bx^2_i] =1, \ \E[\bx^4_i]  \leq C$
where $C \geq 1$ is a parameter of the generator $G$. We say that
the distribution $\bX$ has $C$-bounded fourth moments.

We get our best results for monotone functions of $d$ halfspaces, such as intersections of $d$ halfspaces.
For distributions with polynomially bounded fourth moments,
our modified MZ PRG fools the intersection of $d$ halfspaces with polynomially small error using a seed of length $O(d\log^2 n)$.
Many natural distributions have $O(1)$-bounded fourth moments.  Even for $\polylog(n)$-bounded fourth moments,
our PRG
fools the intersection of $(\log n)/\log\log n$ halfspaces with error $1/\polylog(n)$ using a seed of length just $O(\log n)$.
Both of these cases are captured in the following theorem.

\begin{theorem}
\label{thm:main-1}
Let $\bX$ be sampled from a product distribution on $\mathbb{R}^n$ with $C$-bounded
fourth moments. 
The modified MZ generator
$\eps$-fools any monotone function of $d$ halfspaces with seed length
$O((d \log(Cd/\eps) + \log n)\log(Cd/\eps))$.
When $Cd/\eps \geq \log^{-c} n$ for any $c > 0$, the seed length becomes
$O(d \log(Cd/\eps) + \log n)$.
\end{theorem}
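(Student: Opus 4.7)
The plan is to combine three ingredients all mentioned in the abstract: a \emph{regularity lemma} for collections of $d$ halfspaces under product distributions with bounded fourth moments; a new \emph{multidimensional Berry--Esseen theorem}; and a \emph{modification of the Meka--Zuckerman (MZ) generator} that works for general product distributions rather than only Bernoulli bits. First, the regularity lemma decomposes any collection $h_1,\ldots,h_d$ of halfspaces into a small decision tree (of depth $\poly(d,C,\log(1/\eps))$) whose internal nodes restrict ``heavy'' coordinates of some halfspace and whose leaves leave all $d$ halfspaces $\tau$-regular for a small $\tau$. By a standard hybrid argument, to $\eps$-fool $F(h_1,\ldots,h_d)$ under $\bX$ it suffices to $\eps'$-fool each leaf restriction (for a suitably small $\eps'$) while also pseudorandomly selecting the path through the tree.

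At each leaf, the $d$ regular halfspaces are linear forms $L_1,\ldots,L_d$, and the new multidimensional Berry--Esseen theorem says that $(L_1(\bX),\ldots,L_d(\bX))$ is close to the matching $d$-dimensional Gaussian $\calN(\mathbf{0},\Sigma)$ in a natural joint-cdf (multidimensional Kolmogorov) distance, with error governed by $\tau$ and by $C$. For a \emph{monotone} Boolean function $F$, the expectation $\E[F(h_1(\bX),\ldots,h_d(\bX))]$ is expressible via inclusion--exclusion as a signed sum of at most $2^d$ multidimensional cdf values, so joint-cdf closeness directly implies approximation of $\E[F]$ without paying the extra $\log s$ factor that appears in the theorem for general size-$s$ decision trees of halfspaces. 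This is precisely what makes the monotone bound ``$s$-free''.

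To derandomize, the modified MZ generator hashes the $n$ coordinates into $L = O(d\log(Cd/\eps))$ buckets using a $k$-wise independent hash, and within each bucket draws from a derandomization of the one-dimensional marginal of the product distribution (achievable with $O(\log(Cd/\eps))$ bits per bucket under the $C$-bounded fourth-moment hypothesis). Combining the $O(\log n)$ hash seed with the per-bucket seed and analyzing the error via a bucket-by-bucket version of the multidimensional BE bound yields the claimed total seed length $O((d\log(Cd/\eps)+\log n)\cdot\log(Cd/\eps))$; the improved bound when $Cd/\eps \geq \log^{-c} n$ follows by simulating the outer decision-tree traversal with the Nisan/Nisan--Zuckerman space-bounded PRG as in the original MZ paper, which removes one $\log$ factor. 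The main obstacle will be the \emph{derandomized multidimensional} Berry--Esseen: classical Berry--Esseen is one-dimensional, so to control the joint cdf of $d$ correlated linear forms under pseudorandomly drawn samples one must adapt a Lindeberg-style swap argument to respect all $d$ forms simultaneously, and verify that the moment control afforded by $k$-wise independent bucketing together with $C$-bounded fourth moments is enough for the swap error to telescope cleanly across the $L$ buckets while remaining polynomial in $d$ and $C$.
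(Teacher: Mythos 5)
There is a genuine gap in where you locate the role of the ``monotone'' hypothesis, and this propagates into incorrect seed-length accounting.

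In the paper, the multidimensional Berry--Esseen theorem (Theorem~\ref{thm:be}) and its derandomization (Theorems~\ref{thm:be-derandomized},~\ref{thm:very-long}) apply to \emph{arbitrary} translates of unions of orthants, i.e.\ to arbitrary Boolean functions of $d$ halfspaces, not just monotone ones, with an error polynomial in $d$ (namely $O(d^{13/8})$). So the Berry--Esseen/regularity portion of the argument never exploits monotonicity, and your claim that ``joint-cdf closeness via inclusion--exclusion'' is what makes the monotone bound $s$-free is not where the gain comes from. (Also note that writing $\E[F]$ as a signed sum of $\leq 2^d$ cdf values is available for \emph{any} $F$, not just monotone $F$, and would naively cost a factor $2^d$ in the error, which the paper avoids.) The extra $\log s$ in Theorem~\ref{thm:main-2} comes from a union bound over the $s$ leaves of the decision tree in Corollary~\ref{cor:dt}, and the reason Theorem~\ref{thm:main-1} is $s$-free is simply that a single monotone function of halfspaces already falls into the scope of Theorem~\ref{thm:genmonotonebp} with $s=1$.

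The deeper point you misplace is the role of the monotone branching program sandwich. The modified MZ generator, as analyzed in Theorem~\ref{thm:mz-main}, uses a number of buckets $t$ that is \emph{polynomial} in $d/\eps$ (not $O(d\log(Cd/\eps))$; that quantity is roughly the size $|H_0|$ of the heavy-index set, whereas $t = \Theta((dL)^2/\eps)$ with $L$ already $\poly(d/\eps)$). Consequently, the raw MZ generator has seed length $\poly(d/\eps)\cdot\log(n/\eps)$, far worse than claimed. The logarithmic dependence on $1/\eps$ is obtained \emph{only after} applying the monotone branching program sandwich (Lemma~\ref{lem:mzsandwich}, generalized in Theorem~\ref{thm:genmonotonebp}), which exploits the monotonicity of $F$ to replace the width-$2^S$ branching programs (read bucket by bucket) by width-$\poly(T d/\eps)$ ones, and then composing with Nisan's or Nisan--Zuckerman's PRG for small-width ROBPs. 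This composition is what produces both the base bound $O((d\log(Cd/\eps)+\log n)\log(Cd/\eps))$ (via Nisan) and the improved bound $O(d\log(Cd/\eps)+\log n)$ (via Nisan--Zuckerman), not only the latter; your plan treats the space-bounded PRG as an optional refinement, but without it you do not get $\log(1/\eps)$ dependence at all.
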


As a corollary, we get small seed length for functions of
halfspaces that have small decision tree complexity.
In the theorem below we could even take $s$ to be the minimum of the number of 0-leaves and 1-leaves.

\begin{theorem}
\label{thm:main-2}
Let $\bX$ be as in Theorem \ref{thm:main-1}. The modified MZ generator $\eps$-fools
any size-$s$, depth-$d$ function of halfspaces,
using a seed of length
$O((d \log(Cds/\eps) + \log n)\log(Cds/\eps))$.
When $Cds/\eps \geq \log^{-c} n$ for any $c>0$, the seed length becomes
$O(d \log(Cds/\eps) + \log n)$.
\end{theorem}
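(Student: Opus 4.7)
The plan is to reduce Theorem~\ref{thm:main-2} to Theorem~\ref{thm:main-1} by a leaf-by-leaf union-bound argument. Fix a size-$s$, depth-$d$ decision tree $T$ of halfspaces computing a function $f$, and let $L_1$ denote its set of $1$-leaves; by passing to $1-f$ if necessary we may assume $|L_1| \leq s$ (this is what allows $s$ to be the smaller of the two leaf counts). Each $\ell \in L_1$ is reached by a unique root-to-leaf path of length at most $d$, along which $\bx$ satisfies or fails a sequence of halfspace tests. The negation of a halfspace $\Ind{w\cdot x \geq \theta}$ is again a halfspace (flip the sign of $w$ and $\theta$; strict vs.\ non-strict inequalities can be reconciled by an infinitesimal threshold perturbation, harmless because the underlying regularity/Berry-Esseen machinery of Theorem~\ref{thm:main-1} is insensitive to this). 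Hence the event $E_\ell$ that $T$ routes $\bx$ to leaf $\ell$ is the conjunction of at most $d$ halfspaces, which is in particular a monotone function of $\leq d$ halfspaces --- precisely the class handled by Theorem~\ref{thm:main-1}.

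The leaf events $\{E_\ell\}_{\ell \in L_1}$ are pairwise disjoint and $\Ind{f(\bx)=1} = \sum_{\ell \in L_1}\Ind{\bx \in E_\ell}$, so the triangle inequality yields
\[ \bigl|\Pr_{\bX}[f(\bX)=1] - \Pr_{\bY}[f(G(\bY))=1]\bigr| \leq \sum_{\ell \in L_1} \bigl|\Pr[\bX \in E_\ell] - \Pr[G(\bY) \in E_\ell]\bigr|. \]
I would therefore instantiate the modified MZ generator once, with error parameter $\eps' := \eps/s$. Crucially, the conclusion of Theorem~\ref{thm:main-1} is uniform over all monotone functions of $\leq d$ halfspaces, so a single sample of this generator simultaneously $(\eps/s)$-fools every $E_\ell$; summing over the at most $s$ leaves in $L_1$ then gives total error at most $\eps$ on $f$ itself.

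Substituting $\eps' = \eps/s$ into the seed length of Theorem~\ref{thm:main-1} gives
\[ O\bigl((d\log(Cd/\eps') + \log n)\log(Cd/\eps')\bigr) = O\bigl((d \log(Cds/\eps) + \log n)\log(Cds/\eps)\bigr), \]
matching the stated bound. For the improved regime, note that $Cds/\eps \geq \log^{-c} n$ implies $Cd/\eps' \geq \log^{-c} n$, so Theorem~\ref{thm:main-1}'s improved bound yields seed length $O(d\log(Cd/\eps') + \log n) = O(d\log(Cds/\eps) + \log n)$, as required. Since the whole argument is essentially bookkeeping on top of Theorem~\ref{thm:main-1}, I do not anticipate a substantive obstacle; the only minor technicality --- handling the strict halfspaces produced by negation --- is routine, and the factor-of-$s$ blowup in $1/\eps$ is absorbed logarithmically into the seed length.
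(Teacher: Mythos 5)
Your proof is correct and follows essentially the same route the paper takes: decompose the decision tree into its (at most $s$) $1$-leaves, observe that each leaf event is an intersection of at most $d$ halfspaces (using closure of halfspaces under negation), and then run the generator with per-leaf error $\eps/s$ so the triangle inequality over the disjoint leaf events gives total error $\eps$. The paper packages this slightly differently — it goes through Corollary~\ref{cor:dt}, which is phrased at the level of monotone branching programs rather than invoking Theorem~\ref{thm:main-1} as a black box — but the decomposition, the union bound over $s$ leaves, and the substitution $\eps \mapsto \eps/s$ into the seed length are identical, and your observation that $s$ may be taken as the smaller of the $0$- and $1$-leaf counts is the same remark the paper makes following Corollary~\ref{cor:dt}.
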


Since the decision tree complexity is at most $2^d$, we deduce the following.

\begin{corollary}
Let $\bX$ be as in theorem \ref{thm:main-1}. The modified MZ generator
$\eps$-fools any function of $d$~halfspaces, using a seed of length
$O((d^2 +d \log(Cd/\eps) + \log n)(d+\log(Cd/\eps)))$.
When $Cd2^d/\eps \geq \log^{-c} n$ for any $c>0$, the seed length becomes
$O(d^2 + d \log(Cd/\eps) + \log n)$.
\end{corollary}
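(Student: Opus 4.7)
The proof plan is a direct reduction to Theorem~\ref{thm:main-2}. The key observation is that every Boolean function $f$ of $d$ halfspaces $h_1,\ldots,h_d$ is trivially computed by a complete binary decision tree of depth $d$: at level $i$ one queries $h_i(x)$, and each of the (at most) $2^d$ leaves is labeled by the value of $f$ on the corresponding $\pmo$-setting of the $d$ halfspaces. This decision tree has depth $d$ and size $s \leq 2^d$, so $f$ is a size-$s$, depth-$d$ decision tree of halfspaces in the sense of Theorem~\ref{thm:main-2}.

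Now I would simply plug $s = 2^d$ into the bounds of Theorem~\ref{thm:main-2}. The general seed length $O((d\log(Cds/\eps) + \log n)\log(Cds/\eps))$ becomes, upon substituting $s = 2^d$ and using $\log(Cd2^d/\eps) = d + \log(Cd/\eps)$,
\[
O\bigl((d(d + \log(Cd/\eps)) + \log n)\cdot(d + \log(Cd/\eps))\bigr) = O\bigl((d^2 + d\log(Cd/\eps) + \log n)(d + \log(Cd/\eps))\bigr),
\]
exactly as stated. For the improved regime, the hypothesis $Cd2^d/\eps \geq \log^{-c} n$ is precisely the hypothesis $Cds/\eps \geq \log^{-c} n$ of Theorem~\ref{thm:main-2} with $s = 2^d$, so that theorem yields seed length $O(d\log(Cd2^d/\eps) + \log n) = O(d^2 + d\log(Cd/\eps) + \log n)$, again matching the claim.

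There is essentially no obstacle: the only thing to verify is that the decision-tree representation is legitimate (which it is, since the tree is allowed to be unbalanced and can assign any Boolean label to its leaves), and that the arithmetic simplification $\log(Cd2^d/\eps) = d + \log(Cd/\eps)$ is absorbed into the big-$O$. So the proof I would write is two sentences: one sentence producing the depth-$d$, size-$2^d$ decision tree, and one sentence citing Theorem~\ref{thm:main-2} with these parameters.
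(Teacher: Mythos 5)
Your proposal is correct and matches the paper's own reasoning exactly: the paper notes immediately before the corollary that ``the decision tree complexity is at most $2^d$,'' and the corollary is obtained by substituting $s = 2^d$ into Theorem~\ref{thm:main-2} and simplifying $\log(Cd2^d/\eps) = d + \log(Cd/\eps)$ inside the big-$O$. Nothing further is needed.
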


\subsubsection{Bounded Independence fools functions of halfspaces} 

We prove that under a large class of product distributions, bounded
independence suffices to fool functions of $d$ halfspaces. This
significantly generalizes the result of Diakonikolas
\etal~\cite{DGJSV:09} who proved that bounded independence fools halfspaces under the
uniform distribution on $\pmo^n$.  The condition necessary on the product distributions is unfortunately somewhat technical; we state here a theorem that covers the main cases of interest:

\begin{theorem} \label{thm:main-dwise} Suppose $f$ is computable as a size-$s$, depth-$d$ function of halfspaces over the independent random variables $\bx_1, \dots, \bx_n$. If we assume the $\bx_j$'s are discrete, then $k$-wise independence suffices to $\eps$-fool $f$, where 
\[
k = \wt{O}(d^4 s^2 / \eps^2) \cdot \poly(1/\alpha).
\]
Here $0 < \alpha \leq 1$ is the least nonzero probability of any outcome for an $\bx_j$.  Moreover, the same result holds with $\alpha = 1$ for certain continuous random variables $\bx_j$, including Gaussians (possibly of different variance) and random variables which are uniform on (possibly different) intervals.
\end{theorem}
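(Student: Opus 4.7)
The plan is to follow the sandwiching polynomial paradigm of Bazzi~\cite{Baz09}, extending the approach of Diakonikolas \etal~\cite{DGJSV:09} from uniform $\pmo^n$ to general product distributions and from a single halfspace to functions of $d$ halfspaces. First I would decompose the depth-$d$, size-$s$ decision tree into at most $s$ root-to-leaf paths; each path is a conjunction of at most $d$ halfspaces (or their negations), and by the triangle inequality it suffices to $(\eps/s)$-fool each such conjunction. Next I would apply the generalized regularity lemma promised in the abstract, which works for product distributions with bounded fourth moments, to the $d$ linear forms $L_i$ simultaneously: either every $L_i$ is $\tau$-regular --- so that the joint distribution of $(L_1(\bX),\ldots,L_d(\bX))$ is close in cube-discrepancy to a $d$-dimensional Gaussian by a multidimensional Berry-Esseen estimate --- or a small set $H$ of ``head'' coordinates can be identified whose restriction leaves a regular sub-instance.

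For the regular case, I would construct polynomials $p_\ell, p_u$ of degree $k$ that sandwich the conjunction's indicator pointwise, $p_\ell(x) \le \Ind{\bigwedge_i L_i(x) \ge \theta_i} \le p_u(x)$ for all $x \in \R^n$, while satisfying
\[
\E_{\bX \sim \calD}\bigl[p_u(\bX) - p_\ell(\bX)\bigr] \le \eps/s.
\]
Because $p_u, p_\ell$ have degree at most $k$ and the $\bx_j$ are independent, any $k$-wise independent distribution matches their expectations against $\calD$ exactly, so fooling the conjunction follows. These polynomials would be built from one-variable polynomials tightly approximating the step function in $L^1$ with respect to the Gaussian; the multidimensional $L^1$ error is controlled by the Berry-Esseen estimate above, combined with an inclusion-exclusion argument that glues per-coordinate sandwiches into a true multidimensional sandwich without destroying one-sidedness.

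For the non-regular case I would enumerate restrictions of the head coordinates in $H$. In the discrete setting each such coordinate has at most $1/\alpha$ possible outcomes, so the number of restrictions is $(1/\alpha)^{|H|}$, producing the $\poly(1/\alpha)$ factor in the theorem; after restriction each remaining halfspace is either trivial or regular and the previous case applies. For Gaussians and uniform-on-interval random variables no explicit case-split is needed, since both enjoy strong anti-concentration (and, for Gaussians, rotational invariance) that lets the heavy coordinates be absorbed directly into the Berry-Esseen / sandwiching estimate, giving $\alpha = 1$.

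The principal obstacle is the sandwiching construction. Naively multiplying per-coordinate polynomial approximators of the step function destroys the pointwise sandwich property; and even when one succeeds in $\R^d$ against the Gaussian, the sandwich must be lifted back to the original product distribution $\calD$ while preserving the pointwise inequality on all of $\R^n$ (not merely on $\mathrm{supp}(\calD)$), so that low-degree $k$-wise independence can be invoked. A secondary difficulty is establishing the multidimensional Berry-Esseen and the multi-halfspace regularity lemma in a form giving a cube-discrepancy bound simultaneously for $d$ halfspaces under arbitrary bounded-moment product distributions; this multidimensional dependence, combined with the per-path error budget of $\eps/s$, is what ultimately dictates the $\wt{O}(d^4 s^2/\eps^2)$ scaling in the bound on $k$.
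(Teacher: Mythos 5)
Your proposal captures the right opening moves — decomposing the decision tree into at most $s$ conjunctions, sandwiching polynomials, the regularity/critical-index lemma, and the discrete-vs-continuous case split — but it diverges from the paper's argument at several points where the paper's specific choices are load-bearing, and a couple of your proposed steps would not work as stated.

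\textbf{Degree versus order.} You speak throughout of polynomials "of degree $k$." But for random variables $\bx_j$ with more than two outcomes, a function of $L$ head coordinates (e.g.\ the indicator of a restriction, or of the event $\mathbf{FAR}$) is in general not a low-degree ordinary polynomial. The paper replaces "degree" with the notion of a \emph{generalized polynomial of order $k$}, meaning a sum of $k$-juntas, and then shows (Proposition~\ref{prop:sandwiching-fools}) that order-$k$ sandwiching plus $k$-wise independence is what is actually needed. Missing this distinction means your sandwich, built from ordinary polynomials, would have degree blowing up with the alphabet size, not just $\poly(1/\alpha)$.

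\textbf{Where the $\poly(1/\alpha)$ factor really comes from.} You propose to enumerate all $(1/\alpha)^{|H|}$ restrictions of the head $H$ and argue that this "produces the $\poly(1/\alpha)$ factor." That quantity is exponential in $|H|$, and $|H| = L$ is itself polynomial in $d/\eps$, so this would be catastrophically large. In the paper, the head is absorbed directly into the junta structure: the events $\mathbf{BAD}$, $\mathbf{NEAR}$, $\mathbf{FAR}$, and the shifted threshold $\btheta'$ are all functions of the head variables, so the entire object is a $(L + O(T/d))$-junta sum without any enumeration. The $\poly(1/\alpha)$ enters purely through the hypercontractivity constant: a discrete variable with least nonzero probability $\alpha$ is $\eta$-HC with $\eta = \Theta(\alpha^{1/4})$, and $L$ and $T$ scale as $\poly(1/\eta)$.

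\textbf{How the $d$ halfspaces are glued.} You propose a multidimensional Berry–Esseen estimate plus inclusion-exclusion to build a true multidimensional sandwich, flagging that "naively multiplying per-coordinate polynomial approximators destroys the pointwise sandwich property." In fact the paper does \emph{not} use its multidimensional Berry–Esseen theorem in the bounded-independence proof (that machinery is for the MZ generator). Instead, it first observes that the class of size-$s$, depth-$d$ functions of halfspaces is closed under negation, so only an \emph{upper} sandwiching polynomial is needed (Proposition~\ref{prop:upper-suffices}) — and the product $p_1 p_2 \cdots p_d$ of nonnegative upper approximators \emph{does} preserve the one-sided inequality. The real obstacle is not one-sidedness but controlling the expectation of the product when some $\bp_i$ slightly exceeds $1$. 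The paper handles this by a telescoping/hybrid argument (Lemma~\ref{lem:hybridize}) that works only because the per-halfspace polynomials $p_i$ are constructed to satisfy two extra properties beyond being $\eps_0$-sandwiching: they exceed $1 + 1/d^2$ only with probability $\gamma$ exponentially small, and their $2d$-norm is at most $1 + 2/d^2$. These boundedness conditions are the technical heart of the argument, and nothing in your proposal forces them; without them, the telescoping bound does not close.

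\textbf{Hypercontractivity requirement.} You state that bounded fourth moments suffice. The regularity lemma does hold under fourth-moment (equivalently $\eta$-HC) assumptions, but the tail bounds needed to get the boundedness properties above — that $\bp_i$ rarely exceeds $1+1/d^2$ and that $\rnorm{\bp_i}_{2d}$ stays near $1$ — require a much stronger $(T,2,4/t)$-hypercontractivity hypothesis for a large even $T$, which is exactly where the discreteness-with-probability-$\geq \alpha$ (or Gaussian/uniform-on-interval) assumption is used. Your proposal does not surface this extra moment requirement, and without it the Case~Near and Case~Far tail estimates fail.

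So the proposal is pointed in roughly the right direction but omits the negation-closure observation (so it tries to construct both $p_\ell$ and $p_u$ simultaneously), attempts a multidimensional-Gaussian/inclusion-exclusion route where the paper uses a one-dimensional, one-halfspace-at-a-time telescoping product, does not introduce the junta/order notion that is essential when the alphabet is not $\pm1$, misattributes the origin of the $\poly(1/\alpha)$, and does not recognize the need for strong hypercontractivity beyond fourth moments.
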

For example, whenever $\alpha \geq 1/\polylog(d/\eps)$ it holds that
$\wt{O}(d^6/\eps^2)$-wise independence suffices to $\eps$-fool
intersections of $m$ halfspaces.  For random variables that do not
satisfy the hypotheses of Theorem~\ref{thm:main-dwise}, it may still
be possible to extract a similar statement from our techniques.
Roughly speaking, the essential requirement is that the random
variables $\bx_j$ be ``$(p, 2, p^{-c})$-hypercontractive'' for large values of $p$ and some constant $c < 1$.  

\newcommand{\cmatrix}{M}

\paragraph{\bf Notation: }
Throughout, all random variables take values in $\R$ or $\R^d$. 
Random variables will be in boldface.  Real scalars will be lower-case
letters; real vectors will be upper-case letters.  If $X$ is a
$d$-dimensional vector, we will write $X[1], X[2], \dots, X[d]$ for
its coordinates values and $\len{X} = \sqrt{\sum_{i=1}^d X[i]^2}$ for
its Euclidean length. When $\cmatrix$ is a matrix, we also use the notation
$\cmatrix[i,j]$ for its $(i,j)$ entry.  If $\bX$ is a vector-valued random
variable, we write $\rnorm{\bX}_p = \E[\len{\bX}^p]^{1/p}$.  We
typically use $i$ to index dimensions and $j$ to index sequences.
Given $x \in \R$ we define $\sgn(x) = 1$ if $x \geq 0$ and $\sgn(x) =
-1$ if $x < 0$.  If $X$ is a $d$-dimensional vector, then $\vsgn(X)$
denotes the vector in $\{-1,1\}^d$ with $\vsgn(X)[i] = \sgn(X[i])$. 

Our results concern arbitrary functions of $d$ halfspaces.
Thus we have vectors $W_1,\ldots, W_n,\Theta \in \R^d$, and we're interested in functions $f : \{-1,1\}^d \to \{0,1\}$ of the vector
$\vsgn(x_1W_1 + \ldots + x_nW_n - \Theta)$, which we abbreviate to
$\vsgn(W\cdot X - \Theta)$ where $W = (W_1,\ldots,W_n)$ and $X =(x_1,\ldots,x_n)$.

\paragraph{\bf Organization: }We give an overview of our results and
their proofs in \ref{sec:overview}. We prove the multi-dimensional
Berry-Esseen type theorems in Section \ref{sec:berry-e}. In Section
\ref{sec:crit}, we prove a regularity lemma for multiple halfspaces in the
general setting of hypercontractive variables. We state modified MZ
generator in Section \ref{ap:mzg}, and analyze it using the machinery
above in Section \ref{ap:mzg-analysis}. In Section \ref{ap:monotone},
we show how to combine it with PRGs for branching programs to get our
Theorems \ref{thm:main-1} and \ref{thm:main-2}. We prove Theorem
\ref{thm:main-dwise} in Section \ref{ap:bounded}. In Section
\ref{ap:sphere}, we show how our results apply to fooling the uniform
distribution on the sphere, and use it to derandomize the hardness
result of \cite{KS09}.

\section{Overview of the main results}
\label{sec:overview}
In this section, we give an overview  on how we construct and analyze the following two types of PRGs for functions of halfspaces under general product distributions: i) the modified Meka-Zuckerman generator (in Section~\ref{sec:mzg}) and ii) the bounded  independence generator (in Section~\ref{sec:bdg})
\subsection{The Meka-Zuckerman Generator}\label{sec:mzg}
There are five steps in the analysis:
\begin{tightenumerate}
\item Discretize the distribution $\bX$ so that it is the
  product of discrete distributions whose moments nearly match those
  of $\bX$.
\item Prove a multidimensional version of the classical Berry-Esseen
  theorem, and a derandomization thereof under general product distributions. This allows us to
  handle functions of regular halfspaces. See Subsection~\ref{subsec:berry-e}.
\item Generalize the regularity lemma/critical index
  lemma (see \cite{Ser:07,DGJSV:09}) to $d$
  halfspaces under general product distributions. This gives a small set of variables such that after conditioning on
  these variables, each halfspace becomes either regular or close to a
  constant function. See Subsection~\ref{subsec:crit}.
\item Use the regularity lemma to reduce analyzing functions of
  $d$ arbitrary halfspaces to analyzing functions of $d$ (or fewer) regular
  halfspaces.
\item Finally, generalize the monotone trick from \cite{MZ09}, which previously worked only for a single ``monotone'' branching program,
 to monotone functions of monotone branching programs. This enables us
 to get seed length logarithmic in $1/\eps$.
 See Subsection~\ref{subsec:mono}.
\end{tightenumerate}

\subsubsection{Multi-Dimensional Berry-Esseen Theorem}
\label{subsec:berry-e}

The classic Berry-Esseen Theorem is a quantitative version of the Central Limit Theorem.
This theorem is essential in the analyses of \cite{MZ09} and \cite{DGJSV:09} for one halfspace.
Since we seek to fool functions of several halfspaces, we prove a
multi-dimensional version of the Berry-Esseen theorem, which approximates
the distribution of $\sum_i\bx_iW_i$. The error of the approximation
is small when all the halfspaces are regular (no coefficient is
too large). While there are multi-dimensional versions known,
we were unable to find in the literature any theorems which we could use in a
``black-box'' fashion.  The reason for this is twofold: known results
tend to focus on measuring the difference between
probability distributions vis-a-vis convex sets; whereas, we are
interested in more specialized sets, unions of orthants.  Second,
results in the literature tend to assume a nonsingular covariance
matrix and/or have a dependence in the error bound on its least
eigenvalue; whereas, we need to work with potentially singular
covariance matrices. We believe this theorem could be of independent interest.

Next we show how this theorem can be derandomized in a certain sense.
This derandomization enables us to show that our modified MZ PRG fools
\emph{regular} halfspaces.  

\subsubsection{Multi-Dimensional Critical Index}
\label{subsec:crit}

The concept of critical index was introduced in the work of Servedio
\cite{Ser:07}. It is used to prove a regularity lemma for
  halfspaces, which asserts that every halfspace contains a {\em
  head} consisting of constantly many variables, such that once these
variables are set randomly, the resulting function is either close to
constant, or close to a regular halfspace. This lemma has found
numerous applications in complexity and learning theoretic questions
related to halfspaces \cite{Ser:07,OS:08,FGRW:09,DGJSV:09,MZ09}.

The obvious generalization of the one-dimensional theorem to multiple
halfspaces would be to take the union of the heads of each halfspace. This
does not work, since setting variables in a regular halfspace
can make it irregular.  We prove a multidimensional version of this
lemma, which moreover holds in the setting of product distributions
with bounded fourth moments. Our analysis shows  that the lemma only requires some
basic concentration and anti-concentration properties, which are
enjoyed by any random variable with bounded fourth moments.

\subsubsection{Monotone Branching Programs}
\label{subsec:mono}

The only known method to get logarithmic dependence on $1/\eps$ for
PRGs for halfspaces, due to Meka and Zuckerman, considers the natural 
branching program accepting a halfspace.  
This branching program is ``monotone,''
in the sense that in every layer the set of accepting suffixes forms
a total order under inclusion.  
Meka and Zuckerman showed that any monotone branching program of
arbitrary width can be sandwiched between two small-width monotone
branching programs.
Therefore, PRGs for small-width branching programs, such as those by
Nisan~\cite{Nis} can be used.

Since we deal with several halfspaces, we get several monotone branching
programs.  We consider monotone functions of monotone branching programs,
to encompass intersections of halfspaces.  However, such functions are
not necessarily computable by monotone branching programs.  Nevertheless,
we show how to sandwich such functions between two small-width branching
programs, and thus can use the PRGs like Nisan's.

\subsection{Bounded Independence fools functions of halfspaces} \label{sec:bdg}

\subsubsection{Sandwiching ``polynomials''} To prove that bounded independence can fool functions of halfspaces (Theorem~\ref{thm:main-dwise}), we use the ``sandwiching polynomials'' method as introduced by Bazzi~\cite{Baz09} and used by~\cite{DGJSV:09}.  However in our setting of general random variables it is not appropriate to use polynomials per se.  The essence of the sandwiching polynomial method is showing that only groups of $d$ random variables need to be ``simultaneously controlled'.  When the random variables are $\pm 1$-valued, controlling sub-functions of at most $d$ random variables is equivalent to controlling polynomials of degree at most $d$.  But for random variables with more than two outcomes, a function of $d$ random variables requires degree higher than $d$ in general, a price we should not be forced to pay.  We instead introduce the following notions:

\begin{definition} Let $\Omega = \Omega_1 \times \cdots \times \Omega_n$ be a product set.  We say that $p : \Omega \to \R$ is a \emph{$k$-junta} if $f(x_1, \dots, x_n)$ depends on at most $k$ of the $x_j$'s.  We say that $p$ is a \emph{generalized polynomial of order (at most) $k$} if it is expressible as a sum of simple functions of order at most $k$.  In the remainder of this section we typically drop the word ``generalized'' from ``generalized polynomial'', and add the modifier ``ordinary'' when referring to ``ordinary polynomials''.
\end{definition}

We now give the simple connection to fooling functions with bounded independence:
\begin{definition} \label{def:sandwich} Let $\bX = (\bx_1, \ldots, \bx_n)$ be a vector of independent random variables, where $\bx_j$ has range $\Omega_j$.  Let $f : \Omega \to \R$, where $\Omega = \Omega_1 \times \cdots \times \Omega_n$.  We say that polynomials $p_l, p_u : \Omega \to \R$ are \emph{$\eps$-sandwiching} for $f$ if
\begin{gather*}
p_l(X) \leq f(X) \leq p_u(X) \text{ for all $X \in \Omega$, and }
\E[p_u(\bX)] -\eps \leq \E[f(\bX)] \leq \E[p_l(\bX)] + \eps.
\end{gather*}
\end{definition}
\begin{proposition} \label{prop:sandwiching-fools}  Suppose $p_l$, $p_u$ are $\eps$-sandwiching for $f$ as in Definition~\ref{def:sandwich} and have order at most $k$.  Then $f$ is $\eps$-fooled by $k$-wise independence.  I.e., if $\bY = (\by_1, \dots, \by_n)$ is a vector of random variables such that each marginal of the form $(\by_{j_1}, \dots, \by_{j_k})$ matches the corresponding marginal $(\bx_{j_1}, \dots, \bx_{j_k})$, then 
\[
\left|\E[f(\bX)] - \E[f(\bY)]\right| \leq \eps.
\]
\end{proposition}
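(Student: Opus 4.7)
The plan is to reduce the proposition to the single observation that expectations of generalized polynomials of order at most $k$ are preserved under $k$-wise independence, and then combine this with the sandwiching inequalities in the standard way.

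First I would establish the key lemma: if $p : \Omega \to \R$ is a generalized polynomial of order at most $k$, then $\E[p(\bX)] = \E[p(\bY)]$ whenever $\bY$ is $k$-wise independent with marginals matching those of $\bX$. By the definition of generalized polynomial, we can write $p = \sum_i p_i$ where each $p_i$ is a $k$-junta depending only on some set $S_i \subseteq [n]$ of size at most $k$. By linearity of expectation,
\[
\E[p(\bX)] = \sum_i \E[p_i(\bX)] = \sum_i \E[p_i((\bx_j)_{j \in S_i})],
\]
and similarly for $\bY$. Since the joint distribution of $(\by_j)_{j \in S_i}$ equals that of $(\bx_j)_{j \in S_i}$ by the $k$-wise independence hypothesis, each summand agrees, giving the claim.

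Next I would apply this to $p_l$ and $p_u$, which by assumption have order at most $k$, to conclude $\E[p_l(\bX)] = \E[p_l(\bY)]$ and $\E[p_u(\bX)] = \E[p_u(\bY)]$. The pointwise sandwich $p_l \leq f \leq p_u$ on $\Omega$ implies both $\E[p_l(\bY)] \leq \E[f(\bY)] \leq \E[p_u(\bY)]$ and $\E[p_l(\bX)] \leq \E[f(\bX)] \leq \E[p_u(\bX)]$. Combining these with the expectation bounds from Definition~\ref{def:sandwich}, namely $\E[p_u(\bX)] \leq \E[f(\bX)] + \eps$ and $\E[p_l(\bX)] \geq \E[f(\bX)] - \eps$, yields
\[
\E[f(\bX)] - \eps \leq \E[p_l(\bX)] = \E[p_l(\bY)] \leq \E[f(\bY)] \leq \E[p_u(\bY)] = \E[p_u(\bX)] \leq \E[f(\bX)] + \eps,
\]
which is exactly $|\E[f(\bX)] - \E[f(\bY)]| \leq \eps$.

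There is essentially no obstacle here; the entire content of the proposition is bookkeeping once one has the correct notion of ``order'' in place. The substance of the argument lies in the definition of generalized polynomial: by allowing arbitrary $k$-juntas as summands (rather than degree-$k$ monomials), the invariance of expectation under $k$-wise independence is built in by construction, and the rest of the proof is the same two-line deduction used in the $\pm 1$-valued case of Bazzi~\cite{Baz09}. The real work, of course, will be in subsequent sections where one must actually construct $\eps$-sandwiching generalized polynomials of low order for functions of halfspaces.
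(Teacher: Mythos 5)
Your proof is correct and takes the same approach as the paper: decompose the sandwiching polynomials into $k$-juntas, use linearity of expectation together with the matching marginals to equate $\E[p_u(\bX)]$ with $\E[p_u(\bY)]$ (and likewise for $p_l$), and then chain the pointwise and in-expectation sandwich inequalities. The paper's version is slightly terser (it folds the ``key lemma'' directly into the displayed chain of equalities) but the argument is identical.
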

\begin{proof} Write $p_u = \sum_{t} q_t$, where each $q_t$ is a $k$-junta.  Then
\[
\E[f(\bY)] \leq \E[p_u(\bY)] = \E[\littlesum_t q_t(\bY)] = \littlesum_t \E[q_t(\bY)] = \littlesum_t \E[q_t(\bX)] = \E[p_u(\bX)] \leq \E[f(\bX)] + \eps,
\]
where in addition to the sandwiching properties of $p_u$ we used the fact that $q_t$ is a $k$-junta to deduce $\E[q_t(\bY)] = \E[q_t(\bX)]$.  We obtain the bound $\E[f(\bY)] \geq \E[f(bX)] - \eps$ similarly, using $p_l$.
\end{proof}

\subsubsection{Upper polynomials for intersections suffice} 
We begin with a trivial observation:
\begin{proposition} \label{prop:upper-suffices}
Let $\calC$ be a class of functions $\Omega \to \{0,1\}$, and suppose that for every $f \in \calC$ we have just the ``upper sandwiching polynomial'', $p_u$, of an $\eps$-sandwiching pair for $f$.  Then if $\calC$ is closed under Boolean negation, we obtain a  matching ``lower polynomial'' $p_l$ of the same order as $p_u$ automatically.  
\end{proposition}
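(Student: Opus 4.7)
The plan is to exploit the closure of $\calC$ under Boolean negation in the most direct way possible. Given $f \in \calC$, consider its negation $\bar{f} = 1 - f$, which by hypothesis also lies in $\calC$. Applying the given assumption to $\bar{f}$, we obtain an upper sandwiching polynomial $q_u : \Omega \to \R$ of some order $k$ satisfying $q_u(X) \geq \bar{f}(X)$ pointwise and $\E[q_u(\bX)] \leq \E[\bar{f}(\bX)] + \eps$. The natural candidate for a lower sandwiching polynomial for $f$ is then $p_l := 1 - q_u$.

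The verification is immediate. From $q_u \geq \bar{f} = 1 - f$ pointwise, we get $p_l = 1 - q_u \leq f$ pointwise. Similarly,
\[
\E[p_l(\bX)] = 1 - \E[q_u(\bX)] \geq 1 - \E[\bar{f}(\bX)] - \eps = \E[f(\bX)] - \eps,
\]
so $p_l$ is a valid lower sandwiching polynomial for $f$ with the same error parameter $\eps$.

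It remains to check that $p_l$ has order at most $k$, the same as $q_u$. Writing $q_u = \sum_t q_t$ as a sum of $k$-juntas, we have $p_l = 1 - \sum_t q_t$, where the constant function $1$ is a $0$-junta (hence trivially a $k$-junta). Thus $p_l$ is also a sum of $k$-juntas, i.e., a generalized polynomial of order at most $k$. This completes the proof.

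There is essentially no obstacle here; the only thing to keep straight is that the definition of generalized polynomial order is additive and stable under addition of constants, which follows directly from the definition of a $k$-junta. The proposition is a book-keeping observation that justifies focusing only on upper sandwiching polynomials in subsequent arguments (e.g.\ for intersections of halfspaces, whose complements are unions of halfspaces, both families being closed under negation when considered together).
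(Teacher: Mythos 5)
Your proof is correct and takes exactly the paper's approach: apply the hypothesis to $\bar f = 1-f \in \calC$ and set $p_l = 1 - q_u$, where $q_u$ is the upper polynomial for $\bar f$. The paper states the same idea in one sentence (``given $p_u$ for $f$, we may take $p_l = 1 - p_u$''), which is just your argument read with the roles of $f$ and $\bar f$ swapped; your version is a bit more explicit about which function the upper polynomial is taken for.
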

This is simply because given $p_u$ for $f$, we may take $p_l = 1 -
p_u$.  Since the Boolean negation of a halfspace is a halfspace, this observation could have been used for slight simplification in~\cite{DGJSV:09}.

Our Theorem~\ref{thm:main-dwise} is concerned with the class of $0$-$1$ functions $f$ computable as size-$s$, depth-$d$ functions of halfspaces.  This class is closed under Boolean negation; hence it suffices for us to obtain upper sandwiching polynomials.  Furthermore, every such $f$ can be written as
$
f = \sum_{t = 1}^{s'} H_t,
$
where $s' \leq s$ and $H_t$ is an intersection (AND) of up to $d$ halfspaces.  To see this, simply sum the indicator function for each root-to-leaf path in the decision tree (this again uses the fact that the negation of a halfspace is a halfspace).  Thus if we have  $(\eps/s)$-sandwiching upper polynomials of order $k$ for each $H_t$, by summing them we obtain an $\eps$-sandwiching upper polynomial for $f$ of the same order.  Hence to prove our main Theorem~\ref{thm:main-dwise}, it suffices to prove the following:

\begin{theorem} \label{thm:main-upper} Suppose $f$ is the intersection of $d$ halfspaces $h_1, \dots, h_d$ over the independent random variables $\bx_1, \dots, \bx_n$. Suppose $\alpha$ is as in Theorem~\ref{thm:main-dwise}.  Then there exists an $\eps$-sandwiching upper polynomial for $f$ of order $k \leq \wt{O}(d^4 / \eps^2) \cdot \poly(1/\alpha)$.
\end{theorem}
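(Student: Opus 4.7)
The plan is to combine the multidimensional regularity lemma of Section~\ref{sec:crit} with a polynomial mollification of the $d$-dimensional orthant indicator, and then use the derandomized multidimensional Berry-Esseen theorem of Section~\ref{sec:berry-e} to translate Gaussian expectations into expectations under the true product distribution. First I would apply the regularity lemma to $h_1, \dots, h_d$ with parameters chosen so that the head index set $H \subseteq [n]$ has size $\wt{O}(d^2/\eps^2)\cdot\poly(1/\alpha)$, and so that for every restriction $\rho$ of the head variables each halfspace $h_i^\rho$ is either $\tau$-regular or $\eta$-biased (i.e.\ $\Pr[h_i^\rho \ne c] \le \eta$ for some $c \in \{0,1\}$), with $\tau = \wt\Theta(\eps/d)$ and $\eta = \eps/(4d)$. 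A standard case split handles the biased halfspaces: if some biased halfspace is close to $0$, then $f^\rho \le h_i^\rho$ pointwise and I use the (known) single-halfspace upper-sandwich of $h_i^\rho$ as $p_{u,\rho}$; otherwise all biased halfspaces are close to $1$ and I drop them, incurring total expectation error at most $d\eta = \eps/4$. Either way the problem reduces to upper-sandwiching an intersection of $d' \le d$ regular halfspaces on the tail variables.

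Next I would build a smooth upper mollifier $\Psi:\R^{d'} \to [0,1]$ of the orthant indicator satisfying $\Psi(T) \ge \Ind{T \in [0,\infty)^{d'}}$ pointwise and $\E[\Psi(\bG)] \le \Pr[\bG \in [0,\infty)^{d'}] + \eps/4$ for any centered Gaussian $\bG$ with unit-variance marginals (the surviving linear forms can be normalized to achieve this). A natural concrete choice is the separable product $\Psi(T) = \prod_{i=1}^{d'} \psi(T[i])$, where $\psi:\R \to [0,1]$ is a one-dimensional upper mollifier of $\Ind{\cdot \ge 0}$ with per-coordinate Gaussian slack $O(\eps/d)$; the cumulative slack is controlled by a union bound over the $d'$ boundary hyperplanes $\{T[i]=0\}$, which is where essentially all of $\Psi - \Ind{}$ lives. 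I then approximate $\psi$ by an ordinary univariate polynomial $q$ of degree $D_1 = \wt{O}(d^2/\eps^2)$ that majorizes $\psi$ pointwise while matching its Gaussian expectation to within $O(\eps/d)$, and set $Q(T) = \prod_i q(T[i])$, a polynomial upper bound on $\Psi$ of total degree $D \le D_1 \cdot d'$.

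Substituting $T[i] = L_i^\rho(X_{\bar H})$ for the normalized regular linear forms, the expression $Q_\rho(L^\rho(X_{\bar H}))$ becomes an ordinary polynomial in the tail variables of degree $D$, and hence trivially a sum of $D$-juntas. I then invoke the derandomized multidimensional Berry-Esseen of Section~\ref{sec:berry-e}, in its moment-matching form for $\tau$-regular linear sums of hypercontractive coordinates, to conclude that $|\E[Q_\rho(L^\rho(\bX_{\bar H}))] - \E[Q_\rho(\bG)]| \le \eps/4$. Assembling everything globally,
\[
p_u(X) \;=\; \sum_{\rho} \Ind{X_H = \rho}\cdot p_{u,\rho}(X_{\bar H}),
\]
where $p_{u,\rho} = Q_\rho(L^\rho(\cdot))$ in the regular case and is the single-halfspace sandwich in the biased-to-zero case. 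Each summand is the product of a $|H|$-junta with a sum of $D$-juntas, so $p_u$ is a generalized polynomial of order at most $|H| + D = \wt{O}(d^4/\eps^2)\cdot\poly(1/\alpha)$. Pointwise dominance $p_u \ge f$ is built in, and the expectation bound $\E[p_u(\bX)] \le \E[f(\bX)] + \eps$ follows by summing the three $\eps/4$ error contributions (biased rounding, mollifier slack, moment-matching).

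The main obstacle I expect is the derandomization step: bounding the expectation of a degree-$D$ polynomial in regular linear forms under the true distribution by its Gaussian counterpart. Since $D$ can be much larger than the regularity parameter's ``natural budget,'' naive monomial-by-monomial moment matching fails; the argument must carefully exploit the interaction of $\tau$-regularity with coordinate-wise hypercontractivity, likely via an invariance-principle style telescoping argument analogous to the one behind the multidimensional Berry-Esseen of Section~\ref{sec:berry-e}. A secondary delicate point is parameter tuning: shrinking $\tau$ tightens the derandomization but inflates $|H|$, while shrinking the per-coordinate mollifier slack raises $D_1$; the exponents must be balanced precisely to land the final order bound at the claimed $\wt{O}(d^4/\eps^2)\cdot\poly(1/\alpha)$.
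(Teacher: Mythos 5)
Your proposal takes a genuinely different route from the paper, and the step you yourself flag as the ``main obstacle'' is in fact a real gap that the paper resolves by a different decomposition, not by an invariance-principle argument.

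\paragraph{What the paper actually does.} The paper never builds a global mollifier of the $d$-dimensional orthant, and it never transfers a high-degree polynomial expectation from $\bX$ to a Gaussian. Instead, it reduces the intersection to single halfspaces via a \emph{telescoping hybridization} (Lemma~\ref{lem:hybridize}): one constructs, per halfspace $h_i$, an upper polynomial $p_i$ of order $k$ satisfying not only $\E[\bp_i - \bh_i]\le\eps_0$ but also two strong tail conditions under $\bX$ itself --- $\Pr[\bp_i > 1+1/d^2]\le\gamma$ and $\rnorm{\bp_i}_{2d}\le 1+2/d^2$ --- and then sets $p = p_1\cdots p_d$. The telescoping sum
\[
\E[\bp_1\cdots\bp_d] - \E[\bh_1\cdots\bh_d]
= \sum_{i} \E[\bh_1\cdots\bh_{i-1}(\bp_i-\bh_i)\bp_{i+1}\cdots\bp_d]
\]
is then controlled term by term using those tail conditions. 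The single-halfspace construction (Theorem~\ref{thm:polyprop}) uses the \emph{one-dimensional} regularity lemma (Theorem~\ref{thm:reg}), the DGJSV univariate polynomial family, and $(T,2,4/t)$-hypercontractivity of the tail sum $\bz$ to achieve the tail bounds. The only ``Berry-Esseen'' used is the one-dimensional anticoncentration of a single regular linear form $\bz$ (Case~i in Section~\ref{sec:case2}), applied to a bounded indicator of a small interval --- not to a polynomial.

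\paragraph{The gap in your plan.} You want to conclude $|\E[Q_\rho(L^\rho(\bX_{\bar H}))] - \E[Q_\rho(\bG)]| \le \eps/4$ for a polynomial $Q_\rho$ of degree $D=\wt{O}(d^3/\eps^2)$, citing the derandomized multidimensional Berry-Esseen of Section~\ref{sec:berry-e}. But those results (Theorems~\ref{thm:be}, \ref{thm:be-derandomized}, \ref{thm:very-long}) apply to translated unions of orthants, and their engine Lemma~\ref{lem:mossel} requires a $\calC^3$ test function $\psi$ with \emph{globally bounded} third derivatives; a degree-$D$ polynomial has third derivatives that grow polynomially. Naive monomial-by-monomial moment matching also fails for degree $D\gg$ (regularity budget), as you observe. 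The paper sidesteps this entirely: it does not attempt moment-matching or invariance for high-degree polynomials. Its tail control comes from $(T,2,4/t)$-hypercontractivity of $\bz$, and the expensive exponent $T$ is what drives the final order up to $O(T/d)$ per halfspace. There is also a secondary issue: in the bounded-independence framework the Gaussian $\bG$ should not appear at all; the sandwiching property is a statement purely about $\bX$, so every transfer to Gaussian must be paired with a return transfer, compounding the same difficulty.

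\paragraph{Smaller discrepancies.} Your parameter accounting is internally inconsistent (you claim $|H|=\wt{O}(d^2/\eps^2)$ and $D=D_1 d'=\wt{O}(d^3/\eps^2)$, then report $|H|+D=\wt{O}(d^4/\eps^2)$), which signals that the balancing you describe in the last paragraph has not actually been carried out. Also, your per-coordinate polynomial $q$ is only required to majorize $\psi$ and approximately match its Gaussian expectation; there is no argument that the product $\prod_i q(T[i])$ satisfies the $2d$-norm bound under the true distribution that the telescoping analysis needs --- this is precisely where hypercontractivity of $\bz$ must be invoked, and it forces a degree budget of order $T/d$ rather than the $D_1$ you propose. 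In short, the decomposition you need is by halfspace index (Lemma~\ref{lem:hybridize}), not by coordinate (invariance principle), and the tail control you need is hypercontractive moment bounds on $p_i(\bX)$ under $\bX$, not a Gaussian transfer.
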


\subsubsection{Polynomial construction techniques}
Suppose for simplicity we are only concerned with the intersection $f$ of $d$ halfspaces $h_1, \dots, h_d$ over uniform random $\pm 1$ bits $\bx_j$.  The work of Diakonikolas et al.~\cite{DGJSV:09} implies that there are is an $\eps_0$-sandwiching upper polynomial $p_i$ of order $\wt{O}(1/\eps_0^2)$ for each $h_i$.  To obtain an $\eps$-sandwiching upper polynomial for the intersection $h_1 h_2 \cdots h_d$, a natural first idea is simply to try $p = p_1 p_2 \cdots p_{d}$.  This is certainly an upper-bounding polynomial; however the $\eps$-sandwiching aspect is unclear.  We can begin the analysis as follows.  Let $\bh_i = h_i(\bX)$ and $\bp_i = p_i(\bX)$.  By telescoping, 
\begin{eqnarray} 
\E[\bp_1 \cdots \bp_d] - \E[\bh_1 \cdots \bh_d] &=& \E[(\bp_1- \bh_1)\bp_2 \cdots \bp_d] \quad+\quad \cdots \nonumber\\
\dots &+& \E[\bh_1 \cdots \bh_{i-1} (\bp_i - \bh_i)\bp_{i+1} \cdots \bp_d] \quad+\quad \cdots \label{eqn:hybrid}\\
\dots &+& \E[\bh_1 \cdots \bh_{d-1}(\bp_d - \bh_d)]. \nonumber
\end{eqnarray}
Now the last term here could be upper-bounded as
\[
\E[\bh_1 \cdots \bh_{d-1}(\bp_d - \bh_d)] \leq \E[\bp_d - \bh_d] \leq \eps_0,
\]
since each $0 \leq \bh_i \leq 1$ with probability~$1$.  But we cannot make an analogous bound for the remaining terms because we have no a priori control over the values of the $\bp_i$'s beyond the individual sandwiching inequalities
\[
\E[\bp_i - \bh_i] \leq \eps_0.
\]
Nevertheless, we will be able to make this strategy work by establishing \emph{additional boundedness conditions} on the polynomials $p_i$; specifically, that each $\bp_i$ exceeds $1 + 1/d^2$ extremely rarely, and that even the high $2d$-norm of $\bp_i$ is not much more than~$1$.\\

Establishing these extra properties requires significant reworking the
construction in~\cite{DGJSV:09}.  Even in the case of uniform random
$\pm 1$ bits, the calculations are not straightforward, since the
upper sandwiching polynomials implied by~\cite{DGJSV:09} are only
fully explicit in the case of regular halfspaces.  And to handle
general random variables $\bx_j$, we need more than just our new
Regularity Lemma~\ref{thm:crit-one} for halfspaces.  We also need to assume a
stronger hypercontractivity property of the random variables to ensure
they have rapidly decaying tails.

\section{Hypercontractivity}
\label{sec:hyper}

The notion of of \emph{hypercontractive random variables} was
introduced in~\cite{KS88} and developed by Krakowiak, Kwapie{\'n}, and
Szulga:

\begin{definition} We say that a real random variable $\bx$ is
  \emph{$(p,q,\eta)$-hypercontractive} for $1 \leq q \leq p < \infty$
  and $0 < \eta < 1$ if $\rnorm{\bx}_p < \infty$, and for all $a \in \R$,
$\rnorm{a + \eta \bx}_p \leq \rnorm{a + \bx}_q$.
\end{definition}

In this paper we will be almost exclusively concerned with the
  simplest case, $p = 4$, $q = 2$.  Let us abbreviate the definition
  in this case (and also exclude constantly-$0$ random variables): 
\begin{definition} A real random variable $\bx$ is \emph{$\eta$-HC}
  for $0 < \eta < 1$ if $0 < \rnorm{\bx}_4 < \infty$ and for all $a
  \in \R$, $\rnorm{a + \eta \bx}_4 \leq \rnorm{a + \bx}_2$, i.e. $\E[(a+\eta \bx)^4] \leq \E[(a+\bx)^2]^2$.
\end{definition}

Essentially, a mean $0$ real random variable is $\eta$-HC with large
$\eta$ if and only if it has a small $4$th moment (compared to its
$2$nd moment).  Random variables with small $4$th
moment are known to enjoy some basic concentration and anti-concentration
properties.  We work with hypercontractivity rather than $4^{th}$
moments because it tends to slightly shorten proofs and improve
constants; the main convenience is that a linear combination of
$\eta$-HC random variables is also $\eta$-HC.  

Here we list some basic and useful properties of $\eta$-HC random variables,
all of which have elementary proofs.  Note that Facts~\ref{4norm}
and~\ref{converse} imply that the upper bound on the $4$th norm
$C=\Theta(1/\eta^4)$.
\begin{fact} \cite{KS88,MOO05,Wol06,Wol07}
\label{fact:hyper}
\begin{tightenumerate}
\item If $\bx$ is $\eta$-HC then it is also $\eta'$-HC for all $\eta' < \eta$. 
\item If $\bx$ is $\eta$-HC then $\bx$ is centered, $\E[\bx] = 0$.
\item 
\label{4norm}
If $\bx$ is $\eta$-HC then $\E[\bx^4] \leq (1/\eta)^4 \E[\bx^2]^2$.  
\item
\label{converse}
Conversely, if $\E[\bx] = 0$ and $\E[\bx^4] \leq (1/\eta)^4 \E[\bx^2]^2$, then $\bx$ is $(\eta/2\sqrt{3})$-HC.  If $\bx$ is also \emph{symmetric} (i.e., $-\bx$ has the same distribution as $\bx$) then $\bX$ is $\min(\eta, 1/\sqrt{3})$-HC.
\item If $\bx$ is $\pm 1$ with probability $1/2$ each, then $\bx$ is $(1/\sqrt{3})$-HC.  The same is true if $\bx$ has the standard Gaussian distribution or the uniform distribution on $[-1,1]$. 
\item If $\bx$ is $\eta$-HC then in fact $\eta \leq 1/\sqrt{3}$.
\item If $\bx$ is a centered discrete random variable and $\alpha \leq 1/2$ is the least nonzero value of $\bx$'s probability mass function, then $\bx$ is $\eta$-HC for $\eta = \alpha^{1/4}/2\sqrt{3}$.
\item If $\bx_1, \dots, \bx_n$ are independent $\eta$-HC random variables, then so is $c_1 \bx_1 + \cdots c_n \bx_n$ for any real constants $c_1, \dots, c_n$, not all $0$.  (Indeed, $4$-wise independence suffices.)
\item If $\bx$ is $\eta$-HC, and $\by$ is a random variable with the
  same $r$th moments as $\bx$ for all $r = 0, 1, 2, 3, 4$, then $\by$
  is also $\eta$-HC. 
\end{tightenumerate}
\end{fact}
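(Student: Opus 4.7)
The nine items split into short verifications from the defining inequality, a couple of more substantive calculations, and the tensorization property. I would address them roughly in the order below.

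Item (9) is immediate: both sides of $\E[(a+\eta\by)^4] \leq \E[(a+\by)^2]^2$ are polynomials in $a$ whose coefficients depend only on $\E[\by^k]$ for $k \leq 4$, so matching the first four moments of $\bx$ transfers $\eta$-HC. Item (5) is a direct calculation in each distribution; e.g.\ for $\bx$ uniform on $\{-1,1\}$, $\E[(a+\eta\bx)^4] = a^4 + 6\eta^2 a^2 + \eta^4$ while $\E[(a+\bx)^2]^2 = (a^2+1)^2$, and the two compare correctly precisely for $\eta = 1/\sqrt{3}$. For (3), set $a=0$ to get $\eta^4 \E[\bx^4] \leq \E[\bx^2]^2$. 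For (6), expand both sides as polynomials in $a$ and compare leading behavior as $a \to \infty$: this forces $6\eta^2 \sigma^2 \leq 2\sigma^2$, hence $\eta \leq 1/\sqrt{3}$ (with $\sigma^2 := \E[\bx^2] > 0$, guaranteed by the definition).

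For (2), the same expansion gives coefficient $4\eta\E[\bx]$ of $a^3$ on the left and $4\E[\bx]$ on the right, so $\text{RHS}-\text{LHS}$ behaves like $4(1-\eta)\E[\bx]\cdot a^3$ for $|a| \to \infty$; if $\E[\bx] \neq 0$ this forces the difference to go negative on one side, contradicting $\eta$-HC, so $\E[\bx]=0$. Item (1) then follows from (2) by convexity: for $\eta' < \eta$, write
\[
a + \eta'\bx \;=\; (1-\eta'/\eta)\cdot a \;+\; (\eta'/\eta)\cdot(a+\eta\bx),
\]
apply the triangle inequality for $\rnorm{\cdot}_4$, bound the second term by $\eta$-HC, and use $|a| \leq \rnorm{a+\bx}_2$ (which holds since $\E[\bx] = 0$).

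For the converse (4), expand (using $\E[\bx]=0$)
\[
\E[(a+\eta'\bx)^4] \;=\; a^4 + 6(\eta')^2 a^2 \sigma^2 + 4(\eta')^3 a\, \E[\bx^3] + (\eta')^4 \E[\bx^4],
\]
and compare with $\E[(a+\bx)^2]^2 = a^4 + 2a^2 \sigma^2 + \sigma^4$. Choosing $\eta' = \eta/(2\sqrt{3})$, the hypothesis $\E[\bx^4]\leq\sigma^4/\eta^4$ causes the $a^2$ and constant terms on the left to consume only roughly half the corresponding terms on the right; Cauchy--Schwarz $|\E[\bx^3]| \leq \sigma\sqrt{\E[\bx^4]}$ together with AM--GM absorbs the cross term $4(\eta')^3 a \E[\bx^3]$ into the remaining slack. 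When $\bx$ is symmetric the cross term vanishes outright, and a direct comparison yields the sharper bound $\min(\eta,1/\sqrt{3})$. Item (7) is then a short corollary: for centered discrete $\bx$ with least nonzero probability $\alpha$, every possible value $v$ satisfies $\alpha v^2 \leq \E[\bx^2]$, so $\bx^2 \leq \sigma^2/\alpha$ almost surely and hence $\E[\bx^4] \leq \sigma^4/\alpha$; applying (4) with $(1/\eta_0)^4 = 1/\alpha$ gives that $\bx$ is $(\alpha^{1/4}/(2\sqrt{3}))$-HC.

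The main content is the tensorization (8), which I would prove by induction after establishing the two-variable case. Given independent $\eta$-HC $\bx_1, \bx_2$ with variances $\sigma_i^2$, condition on $\bx_2$ and let $a' := a + \eta c_2 \bx_2$. Pointwise in $\bx_2$, $\eta$-HC of $\bx_1$ gives $\E_{\bx_1}[(a' + \eta c_1 \bx_1)^4] \leq (a'^2 + c_1^2 \sigma_1^2)^2$. Expanding this as $a'^4 + 2 c_1^2 \sigma_1^2\, a'^2 + c_1^4 \sigma_1^4$ and taking $\E_{\bx_2}$, I would apply $\eta$-HC of $\bx_2$ to the $a'^4$ summand and the trivial bound $\E_{\bx_2}[a'^2] \leq a^2 + c_2^2 \sigma_2^2$ (valid because $\eta \leq 1$ and $\E[\bx_2]=0$) to the $a'^2$ summand; the three pieces reassemble as a perfect square $(a^2 + c_1^2\sigma_1^2 + c_2^2\sigma_2^2)^2 = \rnorm{a+c_1\bx_1+c_2\bx_2}_2^4$, and taking fourth roots completes the inductive step. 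The only substantive work lies in (4) and (8); of these the tensorization is the one that does the real heavy lifting, and keeping its chain of conditioning, expansion, and $\eta$-HC applications clean is the main obstacle.
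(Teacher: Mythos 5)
Your verifications are correct, and the approach is sound throughout. The paper itself does not prove Fact~\ref{fact:hyper} --- it simply cites \cite{KS88,MOO05,Wol06,Wol07} and asserts that the items ``have elementary proofs'' --- so your self-contained derivation is genuinely filling a gap rather than reproducing an argument from the text. The overall structure (read off moment comparisons from the polynomial-in-$a$ expansion of $\E[(a+\eta\bx)^4]$ versus $\E[(a+\bx)^2]^2$, then do the tensorization by conditioning) is the standard one appearing in the cited references, particularly Wolff, and your choices for the two substantive items are well-calibrated: in item~(4), splitting off roughly half the $a^2\sigma^2$ and $\sigma^4$ terms at $\eta'=\eta/(2\sqrt{3})$ and using Cauchy--Schwarz $|\E[\bx^3]|\le\sigma\sqrt{\E[\bx^4]}$ plus AM--GM to absorb the cross term does close; and in item~(8), conditioning on $\bx_2$, expanding $(a'^2+c_1^2\sigma_1^2)^2$, applying $\eta$-HC of $\bx_2$ to the $a'^4$ piece and the crude bound $\E[a'^2]\le a^2+c_2^2\sigma_2^2$ to the cross piece reassembles to the claimed perfect square, with induction handling $n>2$.

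Two small remarks worth adding to make the write-up airtight. First, in item~(6) you implicitly use item~(2) to kill the $a^3$ coefficient before comparing $a^2$ coefficients; this is fine but should be stated (and (2) must be proved first). Second, your conditioning proof of item~(8) uses full independence; the parenthetical claim that ``$4$-wise independence suffices'' requires a separate (one-line) observation that both sides of the defining inequality $\E[(a+\eta\sum c_j\bx_j)^4] \le \E[(a+\sum c_j\bx_j)^2]^2$ depend only on joint moments of order at most~$4$, so $4$-wise matching with the fully independent distribution transfers the inequality verbatim (this is essentially the same reasoning as your item~(9)). Neither point affects the correctness of what you wrote.
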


The notion of hypercontractivity can be extended to $\R^d$-valued random variables:
\begin{definition} An $\R^d$-random variable $\bX$ is \emph{$\eta$-HC}
  for $0 < \eta < 1$ if $\rnorm{\bX}_4 < \infty$ and for all $A \in
  \R^d$, $
\rnorm{A + \eta \bX}_4 \leq \rnorm{A + \bX}_2$.
\end{definition}

We require the following facts about vector-valued hypercontractivity:
\begin{fact} \cite{Szu90}
\begin{tightenumerate}
\item If $W \in \R^d$ is a fixed vector and $\bx$ is an $\eta$-HC real random variable, then $\bX = \bx W$ is an $\eta$-HC.
\item If $\bX_1, \dots, \bX_n$ are independent $\eta$-HC random
  vectors, then so is $c_1 \bX_1 + \cdots c_n \bX_n$ for any real
  constants $c_1, \dots, c_n$.  (Again, $4$-wise independence also
  suffices.) 
\end{tightenumerate}
\end{fact}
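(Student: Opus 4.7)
The plan is to prove both parts by reducing to the one-dimensional hypercontractivity inequality in Fact~\ref{fact:hyper}, using the auxiliary observation that any $\eta$-HC vector $\bX$ is centered, $\E[\bX]=0$. This last fact is extracted from the definition by substituting $A = \pm M e_i$ and letting $M \to \infty$: the dominant terms of $\rnorm{A+\eta\bX}_4^4$ and $\rnorm{A+\bX}_2^4$ behave as $M^4 + 4M^3\eta\, \E[\bX[i]] + O(M^2)$ and $M^4 + 4M^3\, \E[\bX[i]] + O(M^2)$ respectively, and since $\eta < 1$ the inequality forces $\E[\bX[i]] = 0$ for every coordinate~$i$.

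For part~(1), I would decompose $A = A_\parallel + A_\perp$ with $A_\parallel$ parallel to $W$, and set $a = \la A, W\ra/\len{W}$, $b = \len{A_\perp}$, $c = \len{W}$ (the case $W = 0$ being trivial). Then $\len{A + tW}^2 = (a+tc)^2 + b^2$ for every scalar $t$, so expanding the squares gives
\[
\rnorm{A + \eta \bX}_4^4 = \E[(a + \eta c \bx)^4] + 2b^2\, \E[(a + \eta c \bx)^2] + b^4,
\]
\[
\rnorm{A + \bX}_2^4 = \E[(a + c\bx)^2]^2 + 2b^2\, \E[(a + c \bx)^2] + b^4.
\]
The comparison then reduces to two termwise inequalities: $\E[(a + \eta c\bx)^4] \leq \E[(a + c\bx)^2]^2$ is the scalar $\eta$-HC property applied to $c\bx$ (which is $\eta$-HC by Fact~\ref{fact:hyper}(8)), and $\E[(a + \eta c \bx)^2] \leq \E[(a + c\bx)^2]$ follows from $\E[\bx] = 0$ and $\eta^2 \leq 1$.

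For part~(2), since scaling an $\eta$-HC vector by a real constant preserves $\eta$-HC (a one-line check from the definition), induction reduces the task to showing $\bX_1 + \bX_2$ is $\eta$-HC when $\bX_1, \bX_2$ are independent and $\eta$-HC. I would condition on $\bX_1$ and apply the hypercontractivity of $\bX_2$ with the \emph{deterministic} parameter $A + \eta \bX_1$:
\[
\rnorm{A + \eta(\bX_1 + \bX_2)}_4^4 \leq \E_{\bX_1}\left[\left(\E_{\bX_2}\left[\len{(A + \eta\bX_1) + \bX_2}^2\right]\right)^2\right].
\]
Because $\E[\bX_2]=0$, the inner conditional expectation equals $\len{A + \eta\bX_1}^2 + s$ with $s = \E[\len{\bX_2}^2]$. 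Expanding the outer square yields $\rnorm{A + \eta \bX_1}_4^4 + 2s\, \rnorm{A + \eta\bX_1}_2^2 + s^2$; hypercontractivity of $\bX_1$ bounds the first summand by $\rnorm{A+\bX_1}_2^4$, and $\E[\bX_1]=0$ combined with $\eta^2\leq 1$ bounds the middle summand by $2s\, \rnorm{A+\bX_1}_2^2$, so the whole expression is at most $(\rnorm{A + \bX_1}_2^2 + s)^2 = \rnorm{A + \bX_1 + \bX_2}_2^4$, as required.

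The main conceptual obstacle is that $\eta$-HC is defined for a deterministic parameter $A$; one cannot na\"{\i}vely substitute a random vector. The conditioning step in part~(2) is what legitimizes applying hypercontractivity to the random parameter $A + \eta \bX_1$, and the centering $\E[\bX_2]=0$ is exactly what makes the cross term $\la A + \eta\bX_1, \E[\bX_2]\ra$ vanish so that the conditional $L^2$-norm separates cleanly; without this the telescoping comparison would break.
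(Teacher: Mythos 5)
Your proof is correct. Note, however, that the paper does not prove this Fact at all; it simply cites Szulga~\cite{Szu90}, so there is no ``paper proof'' against which to compare. What you have done is fill in a self-contained argument, and it is sound: the auxiliary observation that a vector-valued $\eta$-HC variable is centered (via $A = \pm M e_i$, $M\to\infty$, and finiteness of moments from $\rnorm{\bX}_4 < \infty$) is right; the decomposition $A = A_\parallel + A_\perp$ in part~(1) cleanly reduces the $4$-norm and $2$-norm comparison to the scalar inequality for $c\bx$ plus the trivial middle-term bound; and the conditioning step in part~(2), which uses the key fact that the definitional parameter $A$ may be taken to be any realization of the random vector $A + \eta\bX_1$ inside $\E_{\bX_2}[\,\cdot\mid\bX_1]$, is exactly the right way to make the induction close. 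The identity $\rnorm{A+\bX_1+\bX_2}_2^2 = \rnorm{A+\bX_1}_2^2 + s$ and the two termwise bounds combine correctly into $(\rnorm{A+\bX_1}_2^2 + s)^2$.

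The one point you leave implicit is the parenthetical ``$4$-wise independence suffices.'' Your induction reduces to two summands and requires that $c_1\bX_1+\cdots+c_{n-1}\bX_{n-1}$ be independent of $c_n\bX_n$, which is not guaranteed by $4$-wise independence of the original family. The standard fix, which you should state, is that both sides of the defining inequality for the sum $\bS = \sum_j c_j\bX_j$ --- namely $\E\bigl[\len{A+\eta\bS}^4\bigr]$ and $\E\bigl[\len{A+\bS}^2\bigr]^2$ --- are determined by mixed moments of the $\bX_j$'s of total degree at most~$4$, hence depend only on the $4$-wise joint marginals. Thus the inequality you prove under full independence automatically transfers to any $4$-wise independent family with the same marginals. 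With that one sentence added, the argument is complete.
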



Hypercontractive real random variables possess the following good concentration and anti-concentration properties.
\begin{proposition}  
\label{prop:HC-conc} If $\bx$ is $\eta$-HC then for all $t > 0$, 
$\Pr[\abs{\bx} \geq t\rnorm{\bx}_2] \leq \frac{1}{\eta^4 t^4}$.
\end{proposition}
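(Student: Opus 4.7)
The proof is a direct application of Markov's inequality at the fourth moment, leveraging the fourth-moment bound provided by hypercontractivity. The plan is as follows.

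First, I would invoke Fact~\ref{fact:hyper}, item~\ref{4norm}, which says that any $\eta$-HC random variable satisfies $\E[\bx^4] \leq (1/\eta)^4 \E[\bx^2]^2 = \rnorm{\bx}_2^4/\eta^4$. This is the key quantitative consequence of hypercontractivity we need; everything else reduces to elementary tail-bound manipulations.

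Next, I would rewrite the event in terms of the fourth power and apply Markov's inequality. Since $t > 0$ and $\rnorm{\bx}_2 \geq 0$, the event $\{|\bx| \geq t\rnorm{\bx}_2\}$ is identical to $\{\bx^4 \geq t^4 \rnorm{\bx}_2^4\}$. Markov's inequality applied to the nonnegative random variable $\bx^4$ then yields
\[
\Pr[|\bx| \geq t\rnorm{\bx}_2] \;=\; \Pr[\bx^4 \geq t^4 \rnorm{\bx}_2^4] \;\leq\; \frac{\E[\bx^4]}{t^4 \rnorm{\bx}_2^4}.
\]
Combining this with the hypercontractive fourth-moment bound from the previous step immediately gives the claimed inequality $\frac{1}{\eta^4 t^4}$. (One edge case to note: if $\rnorm{\bx}_2 = 0$ then $\bx \equiv 0$ a.s., contradicting the definition of $\eta$-HC which requires $\rnorm{\bx}_4 > 0$, so we may divide by $\rnorm{\bx}_2^4$ without issue.)

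There is no real obstacle here; the proposition is essentially a packaging of the standard fact that random variables controlled in $L^4$ relative to $L^2$ enjoy polynomial tail decay. The only ``content'' is deferred to Fact~\ref{fact:hyper}(\ref{4norm}), which itself follows by setting $a = 0$ in the hypercontractive inequality $\rnorm{a + \eta \bx}_4 \leq \rnorm{a + \bx}_2$ and rearranging.
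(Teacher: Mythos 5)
Your proof is correct and matches the paper's approach exactly: the paper also applies Markov's inequality to the event $\bx^4 \geq t^4\E[\bx^2]^2$ and uses the fourth-moment bound $\E[\bx^4] \leq \eta^{-4}\E[\bx^2]^2$ implicit in the $\eta$-HC hypothesis. You have simply spelled out the same one-line argument in more detail, including the harmless edge case $\rnorm{\bx}_2 = 0$.
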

\begin{proof}
Apply Markov to the event ``$\bx^4 \geq t^4\E[\bx^2]^2$''.
\end{proof}

\begin{proposition}  
\label{prop:HC-anticonc} If $\bx$ is $\eta$-HC then for all $\theta \in \R$ and $0 < t < 1$, 
$\Pr[\abs{\bx - \theta} > t\rnorm{\bx}_2] \geq \eta^4 (1-t^2)^2$.
\end{proposition}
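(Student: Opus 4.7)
The plan is to use the Paley–Zygmund (``second moment'') inequality applied to the nonnegative random variable $(\bx-\theta)^2$, together with the $\eta$-hypercontractivity assumption used to control its $L^4$ norm.

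First, I would extract a suitable $4$th-moment bound on $\bx - \theta$ directly from the hypercontractive inequality. Applying the definition of $\eta$-HC with the constant $a = -\eta \theta$, the two sides become $\rnorm{-\eta\theta + \eta\bx}_4 = \eta\rnorm{\bx-\theta}_4$ and $\rnorm{-\eta\theta + \bx}_2 = \rnorm{\bx - \eta\theta}_2$, so
\[
\eta\,\rnorm{\bx - \theta}_4 \;\leq\; \rnorm{\bx - \eta\theta}_2.
\]
Using Fact~\ref{fact:hyper}(2) that $\bx$ has mean zero, I expand $\rnorm{\bx - \eta\theta}_2^2 = \rnorm{\bx}_2^2 + \eta^2\theta^2 \leq \rnorm{\bx}_2^2 + \theta^2 = \rnorm{\bx-\theta}_2^2$. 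Combining, $\rnorm{\bx - \theta}_4^{\,4} \leq (1/\eta)^4\,\rnorm{\bx - \theta}_2^{\,4}$.

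Next, I apply Paley–Zygmund to $Z := (\bx - \theta)^2 \geq 0$ at the level $t^2 \in (0,1)$, giving
\[
\Pr\bigl[(\bx-\theta)^2 > t^2\,\rnorm{\bx-\theta}_2^2\bigr] \;\geq\; (1-t^2)^2\,\frac{\rnorm{\bx-\theta}_2^{\,4}}{\rnorm{\bx-\theta}_4^{\,4}} \;\geq\; (1-t^2)^2\,\eta^4,
\]
where the last inequality uses the bound from the previous step. Finally, since $\rnorm{\bx-\theta}_2^2 = \rnorm{\bx}_2^2 + \theta^2 \geq \rnorm{\bx}_2^2$ (again using $\E[\bx]=0$), the event $\{(\bx-\theta)^2 > t^2 \rnorm{\bx-\theta}_2^2\}$ is contained in $\{|\bx-\theta| > t\,\rnorm{\bx}_2\}$, so the same lower bound transfers to the desired probability.

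There is no real obstacle here: the only mildly subtle point is choosing the right constant $a$ in the hypercontractive inequality so that the exponent of $\eta$ cancels in a way that leaves a bound in terms of $\rnorm{\bx-\theta}_2$ rather than some other norm. Once that move is made, Paley–Zygmund and the centering of $\bx$ finish the argument in two lines.
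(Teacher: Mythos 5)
Your proof is correct, and it follows essentially the same two-ingredient route as the paper: Paley--Zygmund applied to $(\bx-\theta)^2$, with the fourth moment of $\bx-\theta$ controlled via the $\eta$-HC inequality at the shift $a = -\eta\theta$. The difference is organizational. The paper keeps the $\theta$-dependence throughout: it computes $\E[\by]$ and $\E[\by^2]$ exactly (after normalizing $\rnorm{\bx}_2=1$), applies Paley--Zygmund at the level $t^2/(1+\theta^2)$ so that the event is exactly $\{|\bx-\theta|>t\}$, and then shows the resulting bound $\bigl(\tfrac{\eta^2(1-t^2)+\eta^2\theta^2}{1+\eta^2\theta^2}\bigr)^2$ is minimized at $\theta=0$. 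You instead relax $\eta^2\theta^2 \le \theta^2$ immediately to get the clean moment bound $\rnorm{\bx-\theta}_4^4 \le \eta^{-4}\rnorm{\bx-\theta}_2^4$, apply Paley--Zygmund normalized to $\E[(\bx-\theta)^2]=\rnorm{\bx-\theta}_2^2$, and then pass to the stated event via the containment $\{|\bx-\theta|>t\rnorm{\bx-\theta}_2\}\subseteq\{|\bx-\theta|>t\rnorm{\bx}_2\}$. That reorganization buys you a shorter argument: the final minimization over $\theta$ disappears, replaced by two monotonicity observations ($\eta^2\theta^2\le\theta^2$ and $\rnorm{\bx-\theta}_2\ge\rnorm{\bx}_2$, both from $\E[\bx]=0$ and $\eta<1$). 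The conclusion, and the constants, match.
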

\begin{proof}
By scaling $\bx$ it suffices to consider the case $\rnorm{\bx}_2 = 1$.  Consider the random variable $\by = (\bx - \theta)^2$.  We have 
\[
\E[\by] = \E[\bx^2] - 2 \theta \E[\bx] + \theta^2 = 1 + \theta^2,
\]
\[
\E[\by^2] = \eta^{-4} \E[(-\eta \theta + \eta \bx)^4] \leq \eta^{-4} \E[(-\eta \theta + \bx)^2]^2 = \eta^{-4}(1 + \eta^2 \theta^2)^2 = (\eta^{-2} + \theta^2)^2,
\]
where we used the fact that $\bx$ is $\eta$-HC in the second calculation (and then used the first calculation again).  We now apply the Paley-Zygmund inequality (with parameter $0 < t^2/(1+\theta^2) < 1$):
\begin{multline} \label{eqn:anticoncratio}
\Pr[\abs{\bx - \theta} > t] = \Pr[\by > t^2] = \Pr\left[\by > \frac{t^2}{1+\theta^2}\E[\by]\right] \geq \left(1 - \frac{t^2}{1+\theta^2}\right)^2 \frac{\E[\by]^2}{\E[\by^2]} \\
 \geq \left(1 - \frac{t^2}{1+\theta^2}\right)^2 \frac{(1+\theta^2)^2}{(\eta^{-2} + \theta^2)^2} 
 = \left(\frac{\eta^2(1-t^2) + \eta^2\theta^2}{1 + \eta^2\theta^2}\right)^2.
\end{multline}
Treat $\eta$ and $t$ as fixed and $\theta$ as varying. Writing $u =
\eta^2(1-t^2)$, we have $0 < u < 1$; hence the fraction
$(u+\eta^2\theta^2)/(1+\eta^2\theta^2)$ appearing
in~\eqref{eqn:anticoncratio} is positive and increasing as $\eta^2
\theta^2$ increases.  Thus it is minimized when $\theta = 0$;
substituting this into~\eqref{eqn:anticoncratio} gives the claimed
lower bound. 
\end{proof}

\section{The Multi-Dimensional Berry-Esseen Theorem}
\label{sec:berry-e}

In this section we prove a Berry-Esseen-style results in the setting of
multidimensional random variables, and a derandomization of it. 

We assume the following setup:  $\bThm_1, \dots, \bThm_n$ are
independent $\R^d$-valued $\eta$-HC random variables, not necessarily
identically distributed, satisfying $\E[\bThm_j] = 0$ for all $j \in
[n]$. We let $\bS = \bThm_1 + \cdots + \bThm_n$.  We write $\cmatrix_j = \cov[\bThm_j] \in \R^{d \times d}$ for the covariance matrix of $\bThm_j$, which is positive semidefinite.  We also write $\cmatrix = \cov[\bS]$ for the covariance matrix of $\bS$; by the independence and mean-zero assumptions we have $\cmatrix = \cmatrix_1 + \cdots + \cmatrix_n$.  We will also assume that
\[
\cmatrix[i,i] = \sum_{j=1}^n \E\left[\bThm_j[i]^2\right] = 1 \qquad \text{for all $i \in [d]$.}
\]
If we write $\sigma_j^2 = \rnorm{\bThm_j}^2$, it follows that $\sum_{j=1}^n \sigma_j^2 = d$.
We introduce new independent random variables $\bG_1, \dots,
\bG_n$, where $\bG_j$ is a $d$-dimensional Gaussian random variable
with covariance matrix $\cmatrix_j$; we also write also $\bG = \bG_1 + \cdots
+ \bG_n$.  We say that $A \subseteq \R^d$ is a translate of a union of orthants
if there exists some vector $\Theta \in \R^d$ such that
$X \in A$ depends only on $\vsgn(X - \Theta)$.

\begin{theorem}
\label{thm:be}
Let $\bS$ and $\bG$ be as above. Let $A \subseteq \R^d$ be a translate of a union of orthants. Then
\[
\abs{\Pr[\bS \in A] - \Pr[\bG \in A]} \leq O(\eta^{-1/2} d^{13/8}) \cdot \Bigl(\littlesum_{j=1}^n \sigma_j^4\Bigr)^{1/8}.
\]
\end{theorem}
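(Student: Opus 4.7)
The plan is to use a smoothing-plus-Lindeberg replacement argument.  Since $A$ is a translate of a union of orthants, there is some $\Theta\in\R^d$ and some $f:\{-1,1\}^d\to\{0,1\}$ (extended multilinearly to $[-1,1]^d$) so that $\Ind{X\in A}=f(\vsgn(X-\Theta))$.  I would fix a scale $\delta>0$ to be optimized at the end and a $C^\infty$ ``mollified sign'' $\psi_\delta:\R\to[-1,1]$ that agrees with $\sgn$ outside $[-\delta,\delta]$ and satisfies $\|\psi_\delta^{(k)}\|_\infty=O(\delta^{-k})$ for $k=1,2,3$, and define the smoothed test function
\[
\tilde f(X)\;=\;f\bigl(\psi_\delta(X[1]-\Theta[1]),\,\ldots,\,\psi_\delta(X[d]-\Theta[d])\bigr).
\]
The triangle inequality then splits the quantity to bound into a smoothing error on the $\bS$ side, a Lindeberg swap error $|\E[\tilde f(\bS)]-\E[\tilde f(\bG)]|$, and a smoothing error on the $\bG$ side.

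For the smoothing errors, $\Ind{X\in A}$ and $\tilde f(X)$ agree whenever $|X[i]-\Theta[i]|>\delta$ for every $i\in[d]$, so by a union bound the smoothing error at a random point $\bY$ is at most $\sum_{i=1}^d\Pr[|\bY[i]-\Theta[i]|\le\delta]$.  On the Gaussian side each summand is $O(\delta)$ by standard Gaussian anti-concentration; on the $\bS$ side I would use a one-dimensional Berry--Esseen inequality applied to each coordinate $\bS[i]=\sum_j\bThm_j[i]$, which is a sum of independent $\eta$-HC real random variables with variances summing to $1$.  Hypercontractivity controls the required third moments via $\E|\bThm_j[i]|^3\le\eta^{-O(1)}\sigma_j^3$, and after Cauchy--Schwarz $\sum_j\sigma_j^3\le d^{1/2}(\sum_j\sigma_j^4)^{1/2}$ one obtains a total smoothing contribution of order $d\delta+d\cdot\eta^{-O(1)}(\sum_j\sigma_j^4)^{1/2}$.

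For the swap I would run the standard Lindeberg telescope: order the summands, form hybrid partial sums $\bZ_0=\bG,\,\bZ_1,\ldots,\bZ_n=\bS$ that differ in exactly one summand at a time, and Taylor-expand $\tilde f$ to second order around the common partial sum.  Because $\bThm_j$ and $\bG_j$ share mean $0$ and covariance $\cmatrix_j$, the zeroth-, first-, and second-order Taylor terms cancel in expectation, leaving a third-order remainder bounded by $\|D^3\tilde f\|_\infty\,(\E\|\bThm_j\|^3+\E\|\bG_j\|^3)$.  The product-of-$\psi_\delta$ structure of $\tilde f$ keeps individual partial derivatives $\partial_{i_1}\partial_{i_2}\partial_{i_3}\tilde f$ of size $O(\delta^{-3})$, giving $|D^3\tilde f[Y,Y,Y]|=O(\delta^{-3}d^{3/2}\|Y\|^3)$ after summing over index triples.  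Hypercontractivity provides $\E\|\bThm_j\|^3\le\eta^{-O(1)}\sigma_j^3$ (and likewise for the Gaussian $\bG_j$), so summing over all swaps yields a total swap error of order $\delta^{-3}d^{O(1)}\eta^{-O(1)}(\sum_j\sigma_j^4)^{1/2}$ after another application of Cauchy--Schwarz.

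All that remains is to balance the $O(d\delta)$ smoothing term against the $\delta^{-3}\cdot d^{O(1)}\eta^{-O(1)}(\sum_j\sigma_j^4)^{1/2}$ swap term by choosing $\delta$ to be the appropriate fractional power of these quantities; tracking the polynomial factors carefully should yield the claimed bound $O(\eta^{-1/2}d^{13/8})(\sum_j\sigma_j^4)^{1/8}$.  The main obstacle is precisely that the target sets are unions of orthants rather than convex or smooth regions, which is why one cannot cite off-the-shelf multidimensional Berry--Esseen theorems and which forces the somewhat delicate coordinatewise-product smoothing needed to keep the derivative norms of $\tilde f$ under control.  A secondary subtlety is that $\cmatrix=\cov[\bS]$ may be singular, ruling out approaches that normalize by $\cmatrix^{-1/2}$ or go through Edgeworth expansions; the Lindeberg swap naturally sidesteps this since it only uses the individual $\cmatrix_j$'s to match moments.
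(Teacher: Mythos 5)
Your proposal is correct and reaches the same bound, but it takes a genuinely different route from the paper in two places worth noting.

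First, the mollification is different. You smooth $\Ind{X\in A}=f(\vsgn(X-\Theta))$ by replacing each $\sgn$ with a mollified sign $\psi_\delta$ and then composing with the multilinear extension of $f$; the paper instead mollifies indicators of inflated and deflated versions $A^{+\eps}$, $A^{-\eps}$ of the set $A$ by convolving with a product bump $\Xi_\eps$. Your construction does work, but the claim that $\partial_{i_1}\partial_{i_2}\partial_{i_3}\tilde f=O(\delta^{-3})$ silently relies on the fact that the third mixed partials of the multilinear extension of an arbitrary $f:\{-1,1\}^d\to\{0,1\}$ are bounded by $1$ on $[-1,1]^d$. This is true --- since $f$ is multilinear, each $\partial_S f$ is an iterated second difference of $f$-values divided by $2^{|S|}$, hence bounded by $1$ --- but it is not obvious, and it deserves a sentence; without it, one might worry that $\sum_S|\hat f(S)|$ factors enter, which can be exponentially large.

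Second, and more substantively, you handle the smoothing error on the $\bS$-side by invoking a one-dimensional Berry--Esseen theorem to get anti-concentration of each coordinate $\bS[i]$ near $\Theta[i]$. The paper avoids this entirely: since $\chi_{A^{-\eps}}\le\chi_A\le\chi_{A^{+\eps}}$ and the mollified versions still sandwich $\chi_A$ pointwise, one gets
\[
\Pr[\bS\in A]-\Pr[\bG\in A]\le\bigl(\E[\psi_{A^{+\eps}}(\bS)]-\E[\psi_{A^{+\eps}}(\bG)]\bigr)+\bigl(\E[\psi_{A^{+\eps}}(\bG)]-\E[\psi_{A^{-\eps}}(\bG)]\bigr),
\]
so the only anti-concentration ever needed is for the Gaussian $\bG$, which is elementary. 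Your version needs the additional classical ingredient (1D Berry--Esseen), which is harmless here because the resulting extra term $d\,\eta^{-O(1)}(\sum_j\sigma_j^4)^{1/2}$ is of lower order than the final $(\sum_j\sigma_j^4)^{1/8}$, but the paper's sandwiching trick is self-contained and cleaner. As a minor bonus, your sharper handling of the sum over multi-indices (bounding $\sum_{|K|=3}\frac{1}{K!}|H^K|$ by $\frac{1}{6}d^{3/2}\|H\|^3$ rather than crudely by $\approx d^3\|H\|^3$) would actually give a slightly smaller power of $d$ than the paper's $d^{13/8}$ if carried through, though the statement only claims $O(\cdot)$, so both are fine.

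Your observations about why off-the-shelf multidimensional Berry--Esseen is unavailable (unions of orthants are neither convex nor smooth, and $\cmatrix$ may be singular so one cannot normalize by $\cmatrix^{-1/2}$) exactly match the paper's motivation and are well taken.
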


We now show that this result can be ``derandomized'' using the output
of the MZ generator $\bY$ in place of $\bX$. We describe here a simplified
version of the output of their generator. 

\newcommand{\collision}{b}

\begin{definition}
\label{def:hash-collision}
A family $\calH = \{h:[n] \to [t]\}$ of hash functions is  \emph{$\collision$-collision preserving} if
\begin{enumerate}
\item
For all $i \in [n], \ell \in [t]$, $\Pr_{h \in_u \calH}[h(i) = \ell ] \leq \collision/t$.
\item
For all $i \neq j \in [n]$,
$\Pr_{h \in_u \calH}[h(i) =  h(j) ] \leq \collision/t$.
\end{enumerate}
\end{definition}

Efficient constructions of size $|\calH| = O(nt)$ are known for any constant 
$\collision \geq 1$.
$\collision=1$ is optimal, and can be achieved by a pairwise independent
family. In our construction we use $\collision=1$, 
but we will need larger $\collision$ in our analysis. 
A hash function induces a partition of $[n]$.

We choose a partition $\bH_1, \dots, \bH_t$ of $[n]$ into $t$ buckets using a $\collision$-collision preserving family
of hash functions (where $\collision \leq 2$). The vector of variables
$\{\bY_j\}_{ j\in \bH_\ell}$ is generated $4$-wise independently. There
is full independence across different buckets. Let $\bT = \bDerand_1 + \cdots + \bDerand_n$. 

\begin{theorem}
\label{thm:be-derandomized}
Let $\bT$ and $\bG$ be as above. Let $A \subseteq \R^d$ be a translate of a union of orthants. Then
\[
\abs{\Pr[\bT \in A] - \Pr[\bG \in A]} \leq O(\eta^{-1/2} d^{13/8}) \cdot \Bigl(\frac{d^2}{t} + \littlesum_{j=1}^n \sigma_j^4\Bigr)^{1/8}.
\]
\end{theorem}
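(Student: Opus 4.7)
The plan is to reduce Theorem~\ref{thm:be-derandomized} directly to Theorem~\ref{thm:be} by viewing $\bT$ as a sum of fully independent bucket contributions and treating these bucket sums as the ``atoms'' in a fresh application of the multi-dimensional Berry-Esseen theorem. Conditioning on the random partition $\bH = (\bH_1, \dots, \bH_t)$, define $\bT_\ell = \sum_{j \in \bH_\ell} \bDerand_j$ for $\ell \in [t]$, so that $\bT = \sum_\ell \bT_\ell$ and the $\bT_\ell$'s are fully independent across $\ell$ by construction.

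I would then verify that the $\bT_\ell$'s, as a sequence of $t$ independent $\R^d$-valued vectors, satisfy the hypotheses of Theorem~\ref{thm:be} conditionally on $\bH$. Since within each bucket the $\bDerand_j$'s are $4$-wise independent $\eta$-HC vectors, the vector-valued hypercontractivity closure fact (with the ``$4$-wise independence suffices'' clause) tells us each $\bT_\ell$ is itself $\eta$-HC; it is mean-zero by linearity. Pairwise independence within a bucket (a consequence of $4$-wise independence) together with full independence across buckets yields $\cov[\bT_\ell] = \sum_{j \in \bH_\ell} \cmatrix_j$, and hence $\cov[\bT] = \cmatrix$, so the coordinate normalization $\cmatrix[i,i]=1$ carries over unchanged. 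Writing $\sigma_\ell^2 = \rnorm{\bT_\ell}_2^2 = \sum_{j \in \bH_\ell} \sigma_j^2$ (again using pairwise independence), Theorem~\ref{thm:be} applied to the $\bT_\ell$'s against Gaussian vectors $\bG_\ell$ with matching covariances produces, conditionally on $\bH$,
\[
\bigl|\Pr[\bT \in A \mid \bH] - \Pr[\bG \in A]\bigr| \le O(\eta^{-1/2} d^{13/8}) \cdot \Bigl(\littlesum_{\ell=1}^t \sigma_\ell^4\Bigr)^{1/8}.
\]

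It then remains to control $\sum_\ell \sigma_\ell^4$ in expectation over $\bH$. Writing
\[
\littlesum_{\ell=1}^t \sigma_\ell^4 = \littlesum_{j,j'} \sigma_j^2 \sigma_{j'}^2 \, \Ind{\bH(j) = \bH(j')},
\]
the diagonal terms contribute $\sum_j \sigma_j^4$, and the off-diagonal terms, by the $\nice$-collision-preserving property of the hash family (with $\nice \le 2$), contribute at most $(\nice/t)\bigl(\sum_j \sigma_j^2\bigr)^2 = \nice d^2/t$. Taking expectation over $\bH$ in the conditional bound above and invoking Jensen's inequality for the concave map $x \mapsto x^{1/8}$ yields the claimed estimate.

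The main conceptual move is to apply Theorem~\ref{thm:be} at the level of buckets rather than individual $\bDerand_j$'s; once this perspective is adopted, $4$-wise independence within a bucket is sufficient precisely because it only enters through the hypercontractivity closure fact and the second-moment computation. I do not anticipate a serious obstacle here: the only genuinely new calculation is the collision bound that converts the randomness of $\bH$ into the $d^2/t$ term, and the bookkeeping for covariances and variances is routine once the bucketwise viewpoint is in place.
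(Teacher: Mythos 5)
Your proof is correct and follows the same essential mathematics as the paper's: bucket the variables, use hypercontractivity together with $4$-wise independence within buckets to keep each bucket sum $\eta$-HC, and use the collision-preserving property of the hash family to obtain the $d^2/t$ term. The only difference is bookkeeping: you apply Theorem~\ref{thm:be} as a black box conditionally on the partition and average at the end via Jensen for $x\mapsto x^{1/8}$, whereas the paper takes the expectation over the partition earlier (Proposition~\ref{prop:derand}, at the smooth-test-function level, using Jensen for $x\mapsto x^{1/2}$) and then re-runs the smoothing and orthant steps verbatim --- both orderings are equally valid.
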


Putting these two theorems together, we have shown the following statement
\begin{theorem}
\label{thm:very-long}
Let $\bS$ and $\bT$ be as above. Let $A \subseteq \R^d$ be a translate of a union of orthants. Then
\[
\abs{\Pr[\bS \in A] - \Pr[\bT \in A]} \leq O(\eta^{-1/2} d^{13/8}) \cdot \Bigl(\frac{d^2}{t} + \littlesum_{j=1}^n \sigma_j^4\Bigr)^{1/8}.
\]
\end{theorem}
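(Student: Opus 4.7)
The plan is immediate: Theorem~\ref{thm:very-long} is a direct triangle-inequality consequence of Theorems~\ref{thm:be} and~\ref{thm:be-derandomized}. Both earlier theorems compare a given random vector ($\bS$ or $\bT$) to the \emph{same} Gaussian surrogate $\bG$ with covariance $\cmatrix = \cmatrix_1 + \cdots + \cmatrix_n$, evaluated on the \emph{same} translated union of orthants $A$. So the natural strategy is to insert $\Pr[\bG \in A]$ as a hybrid between $\Pr[\bS \in A]$ and $\Pr[\bT \in A]$.

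Concretely, I would write
\[
\abs{\Pr[\bS \in A] - \Pr[\bT \in A]} \leq \abs{\Pr[\bS \in A] - \Pr[\bG \in A]} + \abs{\Pr[\bG \in A] - \Pr[\bT \in A]},
\]
then bound the first summand by Theorem~\ref{thm:be} and the second by Theorem~\ref{thm:be-derandomized}. The first bound is
\[
O(\eta^{-1/2} d^{13/8}) \cdot \Bigl(\littlesum_{j=1}^n \sigma_j^4\Bigr)^{1/8},
\]
and the second is
\[
O(\eta^{-1/2} d^{13/8}) \cdot \Bigl(\tfrac{d^2}{t} + \littlesum_{j=1}^n \sigma_j^4\Bigr)^{1/8}.
\]
Since $\sum_j \sigma_j^4 \leq d^2/t + \sum_j \sigma_j^4$, the first bound is dominated by the second, and the sum of the two is at most twice the second bound. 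Absorbing the factor of $2$ into the $O(\cdot)$ gives the claimed inequality.

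No step here is an obstacle; all the real work has already been done in proving Theorems~\ref{thm:be} and~\ref{thm:be-derandomized} (the multidimensional Berry--Esseen theorem and its derandomization against the MZ generator output). The only thing to be careful about is that the two earlier theorems are invoked with the \emph{same} covariance structure $\{\cmatrix_j\}$ and hence the same Gaussian $\bG$; this holds automatically by the setup at the start of Section~\ref{sec:berry-e}, where the covariances $\cmatrix_j$ are determined by the original vectors $\bThm_j$ (the MZ-derandomized $\bDerand_j$'s have matching second moments within each bucket by $4$-wise independence, which is exactly what is used in Theorem~\ref{thm:be-derandomized}). Hence the hybrid through $\bG$ is legitimate, and no additional assumption is needed.
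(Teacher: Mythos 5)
Your proof is correct and matches the paper's own argument exactly: the paper states ``Putting these two theorems together, we have shown the following statement,'' and later ``Combining Theorems \ref{thm:be} and \ref{thm:be-derandomized} gives Theorem \ref{thm:very-long}'' — i.e., precisely the triangle-inequality hybrid through $\bG$ that you describe, with the two terms bounded by Theorems~\ref{thm:be} and~\ref{thm:be-derandomized} respectively and the constant absorbed into the $O(\cdot)$.
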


In the rest of the section, we prove above theorems; our aim is not to get the best bounds possible (for which one might
pursue the methods of Bentkus~\cite{Ben04}).  Rather, we aim to provide a simple method which achieves a reasonable bound, and thus
use the Lindeberg method, following~\cite{MOO05,Mos08} very closely.

\subsection{The basic lemma}

In what follows, $K$ will denote a $d$-dimensional multi-index $(k_1, \dots, k_d) \in \N^d$, with $\abs{K}$ denoting $j_1 + \cdots + j_d$ and $K!$ denoting $k_1! k_2! \cdots k_d!$.  Given a vector $H \in \R^d$, the expression $H^K$ denotes $\prod_{i=1}^d H[i]^{k_i}$. Given a function $\psi : \R^d \to \R$, the expression $\psi^{(K)}$ denotes the mixed partial derivative taken $k_i$ times in the $i$th coordinate; we will always assume $\psi$ is smooth enough that the order of the derivatives does not matter.

The following lemma is essentially proven in, e.g.,~\cite[Theorem~4.1]{Mos08}. To obtain it, simply repeat Mossel's proof in the degree~$1$ case, until equation~(31).  (Although Mossel assumes that the covariance matrices $\cmatrix_j$ are identity matrices, this is not actually necessary; it suffices that $\cov[\bThm_j] = \cov[\bG_j]$.) Then instead of using hypercontractivity, skip directly to summing the error terms over all coordinates.
\ignore{Honest.  I started writing the full proof, but it was so much just copying what's in Elchanan's paper that I stopped.}

\begin{lemma}  \label{lem:mossel} Let $\psi : \R^d \to \R$ be a $\calC^3$ function with $\abs{\psi^{(K)}} \leq \nice$ for all $\abs{K} =3$.  Then
\begin{equation} \label{eqn:mossel}
\abs{\E[\psi(\bS)] - \E[\psi(\bG)]} \leq \nice \sum_{\abs{K}=3} \frac{1}{K!} \sum_{j=1}^n \left(\E\left[\abs{\bThm_j^K}\right] + \E\left[\abs{\bG_j^K}\right]\right).
\end{equation}
\end{lemma}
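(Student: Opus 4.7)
The plan is to prove Lemma 4.5 via the classical Lindeberg swapping (or "replacement") argument, following the strategy the excerpt attributes to Mossel. The key observation is that because $\bThm_j$ and $\bG_j$ are independent across $j$ and match in their first and second moments (both mean zero, and by construction $\cov[\bThm_j] = \cmatrix_j = \cov[\bG_j]$), a single variable can be swapped at essentially no cost up to a third-order Taylor error, and then we pay this cost $n$ times.

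Concretely, I would first introduce the hybrid random vectors
\[
\bZ^{(k)} = \bG_1 + \cdots + \bG_k + \bThm_{k+1} + \cdots + \bThm_n, \qquad k = 0, 1, \dots, n,
\]
so $\bZ^{(0)} = \bS$ and $\bZ^{(n)} = \bG$. By a telescoping sum,
\[
\E[\psi(\bG)] - \E[\psi(\bS)] = \sum_{k=1}^n \Bigl(\E[\psi(\bZ^{(k)})] - \E[\psi(\bZ^{(k-1)})]\Bigr),
\]
so it suffices to bound each term. For each $k$, define $\bU_k = \bG_1 + \cdots + \bG_{k-1} + \bThm_{k+1} + \cdots + \bThm_n$, so that $\bZ^{(k-1)} = \bU_k + \bThm_k$ and $\bZ^{(k)} = \bU_k + \bG_k$, with $\bU_k$ independent of both $\bThm_k$ and $\bG_k$.

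Next I would Taylor-expand $\psi$ around $\bU_k$ to second order with an integral-form (or Lagrange-form) remainder. Using multi-index notation,
\[
\psi(\bU_k + V) = \psi(\bU_k) + \sum_{\abs{K}=1} \frac{V^K}{K!}\psi^{(K)}(\bU_k) + \sum_{\abs{K}=2} \frac{V^K}{K!}\psi^{(K)}(\bU_k) + R(\bU_k, V),
\]
where the hypothesis $\abs{\psi^{(K)}} \leq \nice$ for $\abs{K}=3$ yields the pointwise bound $\abs{R(\bU_k, V)} \leq \nice \sum_{\abs{K}=3} \frac{1}{K!}\abs{V^K}$. Plugging in $V = \bThm_k$ and $V = \bG_k$, taking expectations, and using independence: the zeroth-order terms cancel; the first-order terms cancel because $\E[\bThm_k] = \E[\bG_k] = 0$; and the second-order terms cancel because $\E[\bThm_k[i]\bThm_k[i']] = \cmatrix_k[i,i'] = \E[\bG_k[i]\bG_k[i']]$ for all $i,i' \in [d]$. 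Thus only the remainders survive, giving
\[
\abs{\E[\psi(\bZ^{(k)})] - \E[\psi(\bZ^{(k-1)})]} \leq \nice \sum_{\abs{K}=3} \frac{1}{K!}\Bigl(\E\bigl[\abs{\bThm_k^K}\bigr] + \E\bigl[\abs{\bG_k^K}\bigr]\Bigr).
\]
Summing this inequality over $k = 1, \dots, n$ and combining with the telescoping identity gives~\eqref{eqn:mossel}.

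The only step that needs a little care is justifying the cancellation of the second-order term: one needs that the mixed covariances $\E[\bThm_k[i]\bThm_k[i']]$ match $\E[\bG_k[i]\bG_k[i']]$ for \emph{all} pairs $(i,i')$, not just the diagonal. This is immediate from our setup since $\bG_k$ was defined to have covariance $\cmatrix_k = \cov[\bThm_k]$, but it is the one place where the identity-covariance assumption in Mossel's write-up has to be replaced by the weaker matching-covariance condition; no other modification of his argument is needed. I expect no other obstacles: the remainder estimate is the standard Taylor bound, and the rest is a bookkeeping sum over the $n$ hybrid steps.
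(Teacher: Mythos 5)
Your proposal is correct and is essentially the same Lindeberg replacement argument that the paper invokes by citing Mossel's Theorem~4.1 (hybrid swap, Taylor to second order, cancellation via matching first and second moments, third-order remainder bound), including the same observation that matching covariance --- not identity covariance --- is what is actually needed. The paper gives no more detail than a pointer to Mossel's proof plus the remark about covariances, so your write-up is a faithful filled-in version of the paper's intended argument.
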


We further deduce:
\begin{corollary} \label{cor:inv}  In the setting of Lemma~\ref{lem:mossel},
\[
\abs{\E[\psi(\bS)] - \E[\psi(\bG)]} \leq 2\nice d^3 \sum_{j=1}^n \rnorm{\bThm_j}_3^3.
\]
\end{corollary}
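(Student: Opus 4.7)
The plan is to start from Lemma~\ref{lem:mossel} and mechanically tighten the right-hand side in three steps: (i)~a coordinatewise bound on the third-order monomials, (ii)~a multinomial count of the multi-indices of weight three, and (iii)~a reduction of the Gaussian moments back to moments of $\bThm_j$.

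For step~(i), for any $X \in \R^d$ and any multi-index $K$ with $\abs{K} = 3$, each coordinate satisfies $\abs{X[i]} \leq \len{X}$, so
\[
\abs{X^K} = \prod_{i=1}^d \abs{X[i]}^{k_i} \leq \prod_{i=1}^d \len{X}^{k_i} = \len{X}^3.
\]
Taking expectations and applying the definition of $\rnorm{\cdot}_3$ gives $\E[\abs{\bThm_j^K}] \leq \rnorm{\bThm_j}_3^3$ and $\E[\abs{\bG_j^K}] \leq \rnorm{\bG_j}_3^3$.  For step~(ii), the multinomial theorem applied to $(1+1+\cdots+1)^3 = d^3$ gives $\sum_{\abs{K}=3} 3!/K! = d^3$, hence $\sum_{\abs{K}=3} 1/K! = d^3/6$.

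For step~(iii), $\bG_j$ is a centered Gaussian vector and hence $(1/\sqrt{3})$-HC: each i.i.d.\ standard Gaussian coordinate in a Cholesky-type decomposition of $\bG_j$ is $(1/\sqrt{3})$-HC by Fact~\ref{fact:hyper}(5), and the vector-valued closure of HC under linear combinations then applies.  Consequently
\[
\rnorm{\bG_j}_3 \leq \rnorm{\bG_j}_4 \leq \sqrt{3}\,\rnorm{\bG_j}_2 = \sqrt{3}\,\rnorm{\bThm_j}_2 \leq \sqrt{3}\,\rnorm{\bThm_j}_3,
\]
where the equality uses that $\bG_j$ and $\bThm_j$ share a covariance matrix and the last inequality is Jensen's.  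Thus $\rnorm{\bG_j}_3^3 \leq 3\sqrt{3}\,\rnorm{\bThm_j}_3^3$, and feeding the three estimates into Lemma~\ref{lem:mossel} produces a prefactor of $\nice(1 + 3\sqrt{3})d^3/6 < 2\nice d^3$ in front of $\sum_j \rnorm{\bThm_j}_3^3$, which is exactly the claim.

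The argument is essentially bookkeeping and I do not anticipate a real obstacle: the only step with genuine ``content'' is~(iii), and there even a crude direct estimate of Gaussian third moments in terms of second moments would suffice---hypercontractivity is convenient only because the paper has already developed it and because it keeps the constant small.
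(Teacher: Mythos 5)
Your proof is correct, and it takes a genuinely different route at the one step with real content. The paper bounds $\E[\abs{\bG_j^K}]$ coordinate-by-coordinate: it first applies H\"older across the three (possibly correlated) coordinates $\bG_j[i_1],\bG_j[i_2],\bG_j[i_3]$ to reduce to univariate Gaussian absolute third moments, plugs in the exact value $2\sqrt{2/\pi}\,\sigma^3$, and then uses a coordinatewise power-mean (Jensen) comparison to bring in $\bThm_j$'s third moments, finally passing to $\rnorm{\bThm_j}_3^3$ via $\abs{\bThm_j[i]}\leq\len{\bThm_j}$; this gives the sharper per-$(K,j)$ constant $\approx 2.6$. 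You instead work directly with the vector norms $\len{\bG_j}$ and $\len{\bThm_j}$: you observe that $\bG_j$ is a $(1/\sqrt3)$-HC random \emph{vector} (via the closure of HC under linear combinations applied to a Cholesky decomposition), use $\rnorm{\bG_j}_3\leq\rnorm{\bG_j}_4\leq\sqrt3\,\rnorm{\bG_j}_2=\sqrt3\,\rnorm{\bThm_j}_2\leq\sqrt3\,\rnorm{\bThm_j}_3$, and get per-$(K,j)$ constant $1+3\sqrt3\approx 6.2$. Both fit comfortably under the claimed $2\nice d^3$ once divided by $3!=6$, since $\sum_{\abs{K}=3}1/K!=d^3/6$. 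Your approach buys conceptual uniformity (everything happens at the vector-norm level, using machinery the paper already built); the paper's approach buys a smaller constant by using the exact Gaussian third moment, which your hypercontractivity bound treats only up to a factor of roughly~$3\sqrt3/1.6$.
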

\begin{proof}
Fix a multi-index $K$ with $\abs{K} = 3$ and also an index $j$.  We will show that
\begin{equation} \label{eqn:corgoal}
\E\left[\abs{\bThm_j^K}\right] + \E\left[\abs{\bG_j^K}\right] \leq 2.6 \rnorm{\bThm_j}_3^3.
\end{equation}
Substituting this into~\eqref{eqn:mossel} completes the proof, since\rnote{check this please}
\[
\nice \sum_{\abs{K}=3} \frac{2.6}{K!} \leq 2\nice d^3.
\]

Let the nonzero coordinates in $K$ be $i_1, i_2, i_3 \in [d]$, written with multiplicity.  Write also
\[
\sigma_i^2 = \cmatrix_j[i,i] = \E\left[\bG_j[i]^2\right] = \E\left[\bThm_j[i]^2\right].
\]

On one hand, by H\"{o}lder we have
\[
\E\left[\abs{\bG_j^K}\right] = \E\left[\abs{\bG_j[i_1]\bG_j[i_2]\bG_j[i_3]}\right] \leq 
\sqrt[3]{\E\left[\abs{\bG_j[i_1]}^3\right]\E\left[\abs{\bG_j[i_2]}^3\right]\E\left[\abs{\bG_j[i_3]}^3\right]}.
\]
Note that the distribution of $\bG_j[i_1]$ is $N(0, \sigma_{i_1}^2)$.  It is elementary that such a random variable has third absolute moment equal to $2\sqrt{2/\pi} \cdot \sigma_{i_1}^{3} \leq 2.6\sigma_{i_1}^{3}$.  As the same is true for $i_2$ and $i_3$, we conclude that
\begin{equation} \label{eqn:gupper}
\E\left[\abs{\bG_j^K}\right] \leq 1.6 \sigma_{i_1} \sigma_{i_2} \sigma_{i_3}.
\end{equation}

On the other hand, we can similarly upper-bound
\begin{equation} \label{eqn:xupper}
\E\left[\abs{\bThm_j^K}\right] \leq 
\sqrt[3]{\E\left[\abs{\bThm_j[i_1]}^3\right]\E\left[\abs{\bThm_j[i_2]}^3\right]\E\left[\abs{\bThm_j[i_3]}^3\right]}
\end{equation}
But
\[
\sqrt[3]{\E\left[\abs{\bThm_j[i_1]}^3\right]\E\left[\abs{\bThm_j[i_2]}^3\right]\E\left[\abs{\bThm_j[i_3]}^3\right]} \geq \sqrt[3]{\E\left[\abs{\bThm_j[i_1]}^2\right]^{3/2}\E\left[\abs{\bThm_j[i_2]}^2\right]^{3/2}\E\left[\abs{\bThm_j[i_3]}^2\right]^{3/2}} = \sigma_{i_1} \sigma_{i_2} \sigma_{i_3},
\]
and hence from~\eqref{eqn:gupper} and~\eqref{eqn:xupper} we conclude
\[
\E\left[\abs{\bThm_j^K}\right] + \E\left[\abs{\bG_j^K}\right] \leq 2.6 \sqrt[3]{\E\left[\abs{\bThm_j[i_1]}^3\right]\E\left[\abs{\bThm_j[i_2]}^3\right]\E\left[\abs{\bThm_j[i_3]}^3\right]}.
\]
Finally, we clearly have $\abs{\bThm_j[i_1]} \leq \len{\bThm_j}$ always, and similarly for $j_2$, $j_3$.  Hence
\[
\E\left[\abs{\bThm_j^K}\right] + \E\left[\abs{\bG_j^K}\right] \leq 2.6 \sqrt[3]{\E\left[\len{\bThm_j}^3\right]\E\left[\len{\bThm_j}^3\right]\E\left[\len{\bThm_j}^3\right]} = 2.6\rnorm{\bThm_j}_3^3,
\]
confirming~\eqref{eqn:corgoal}.
\end{proof}

\begin{corollary} \label{cor:fourth}  In the setting of Lemma~\ref{lem:mossel},
\[
\abs{\E[\psi(\bS)] - \E[\psi(\bG)]} \leq 2\nice d^{7/2} \sqrt{\sum_{j=1}^n \rnorm{\bThm_j}_4^4}.
\]
\end{corollary}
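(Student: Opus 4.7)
The plan is to reduce Corollary~\ref{cor:fourth} directly to Corollary~\ref{cor:inv} by bounding $\sum_j \rnorm{\bThm_j}_3^3$ in terms of $\sqrt{\sum_j \rnorm{\bThm_j}_4^4}$, using the normalization $\sum_j \sigma_j^2 = d$ that was imposed on the setup.

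First I would apply H\"{o}lder's inequality with exponents $2$ and $2$ to the integrand $\len{\bThm_j}^3 = \len{\bThm_j} \cdot \len{\bThm_j}^2$ to get the interpolation bound
\[
\rnorm{\bThm_j}_3^3 \;=\; \E\bigl[\len{\bThm_j}^3\bigr] \;\leq\; \sqrt{\E\bigl[\len{\bThm_j}^2\bigr]} \cdot \sqrt{\E\bigl[\len{\bThm_j}^4\bigr]} \;=\; \sigma_j \cdot \rnorm{\bThm_j}_4^2.
\]
Then I would sum over $j$ and apply Cauchy--Schwarz:
\[
\sum_{j=1}^n \rnorm{\bThm_j}_3^3 \;\leq\; \sum_{j=1}^n \sigma_j \cdot \rnorm{\bThm_j}_4^2 \;\leq\; \sqrt{\littlesum_{j=1}^n \sigma_j^2} \cdot \sqrt{\littlesum_{j=1}^n \rnorm{\bThm_j}_4^4} \;=\; \sqrt{d} \cdot \sqrt{\littlesum_{j=1}^n \rnorm{\bThm_j}_4^4},
\]
where the last equality uses the standing assumption $\sum_j \sigma_j^2 = d$ recorded just before Theorem~\ref{thm:be}.

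Substituting this bound into Corollary~\ref{cor:inv} gives
\[
\abs{\E[\psi(\bS)] - \E[\psi(\bG)]} \;\leq\; 2\nice d^3 \sum_{j=1}^n \rnorm{\bThm_j}_3^3 \;\leq\; 2\nice d^3 \cdot \sqrt{d} \cdot \sqrt{\littlesum_{j=1}^n \rnorm{\bThm_j}_4^4} \;=\; 2\nice d^{7/2} \sqrt{\littlesum_{j=1}^n \rnorm{\bThm_j}_4^4},
\]
which is exactly the claimed inequality. There is no substantive obstacle here: the corollary is essentially a mechanical interpolation estimate combined with the variance normalization, so the only care needed is choosing the right H\"{o}lder/Cauchy--Schwarz pairing so that the $\sigma_j$'s get absorbed into the $\sqrt{d}$ factor rather than propagating into the final bound.
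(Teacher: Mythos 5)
Your proposal is correct and matches the paper's proof exactly: the paper also bounds $\rnorm{\bThm_j}_3^3$ by $\sqrt{\E[\len{\bThm_j}^2]}\sqrt{\E[\len{\bThm_j}^4]}$ via Cauchy--Schwarz (your H\"{o}lder step with exponents $2,2$), then applies Cauchy--Schwarz again to the sum and uses $\sum_j \sigma_j^2 = d$ before plugging into Corollary~\ref{cor:inv}. Nothing to add.
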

\begin{proof}
Using Cauchy-Schwarz twice, 
\begin{multline*}
\sum_{j=1}^n \rnorm{\bThm_j}_3^3 = \sum_{j=1}^n \E\left[\len{\bThm_j}^3\right] = \sum_{j=1}^n \E\left[\len{\bThm_j} \len{\bThm_j}^2\right] \leq \sum_{j=1}^n \sqrt{\E\left[\len{\bThm_j}^2\right]}\sqrt{\E\left[\len{\bThm_j}^4\right]} \\\leq \sqrt{\sum_{j=1}^n \E\left[\len{\bThm_j}^2\right]}\sqrt{\sum_{j=1}^n \E\left[\len{\bThm_j}^4\right]} = \sqrt{d}\sqrt{\sum_{j=1}^n \rnorm{\bThm_j}_4^4},
\end{multline*}
where we also used $\sum \sigma_j^2 = d$.
\end{proof}

\subsection{Derandomization and hypercontractivity}
\label{sec:derand}
We now show that this result can be ``derandomized'' in a certain
sense.  This idea is essentially due to Meka and
Zuckerman~\cite[Sec.~4.1]{MZ09}. 

\begin{definition} We say that the sequences of $\R^d$-valued random
  vectors $\bThm_1, \dots, \bThm_n$ and $\bDerand_1, \dots,
  \bDerand_n$ satisfy the \emph{$r$-matching-moments condition}, $r
  \in \N$, if the following holds: $\E[\boldsymbol{\mathcal{X}}^K] =
  \E[\boldsymbol{\mathcal{Y}}^K]$ for all multi-indices $\abs{K} \leq
  r$, where $\boldsymbol{\mathcal{X}}$ is the $\R^{dn}$-valued random
  vector gotten by concatenating $\bThm_1, \dots, \bThm_n$, and
  $\boldsymbol{\mathcal{Y}}$ is defined similarly. 
\end{definition}

In this section, we suppose that $\bDerand_1, \dots, \bDerand_n$ satisfy the $4$-matching-moments condition with respect to $\bThm_1, \dots, \bThm_n$.  We will \emph{not} suppose that they are independent, but rather that they have some limited independence.  Let $\bT = \bDerand_1 + \cdots + \bDerand_n$.
\begin{proposition}  Let $H_1, \dots, H_t$ form a partition of $[n]$, and write $\bZ_\ell = \sum_{j \in H_\ell} \bDerand_j$.  Assume that $\bZ_1, \dots, \bZ_t$ are independent.  Then
\[
\abs{\E[\psi(\bT)] - \E[\psi(\bG)]} \leq 2\nice d^{7/2} \sqrt{\sum_{\ell=1}^t \rnorm{\littlesum_{j \in H_\ell} \bThm_j}_4^4}.
\]
\end{proposition}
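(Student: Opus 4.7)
The plan is to observe that the proposition is exactly Corollary \ref{cor:fourth} reinterpreted, with the bucket sums $\bZ_1,\ldots,\bZ_t$ playing the role of the individual summands. First I would verify that the hypotheses of that corollary translate to this reinterpretation. The $\bZ_\ell$'s are independent by assumption. Each satisfies $\E[\bZ_\ell]=0$ since the $4$-matching-moments condition forces $\E[\bDerand_j] = \E[\bThm_j] = 0$. Second-moment matching together with the independence of the $\bThm_j$'s yields $\cov[\bZ_\ell] = \sum_{j \in H_\ell} \cmatrix_j$, so taking independent Gaussians $\bG'_\ell$ with $\cov[\bG'_\ell] = \cov[\bZ_\ell]$ gives $\sum_\ell \bG'_\ell \stackrel{d}{=} \bG$. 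The trace normalization $\sum_\ell \rnorm{\bZ_\ell}_2^2 = \operatorname{tr}(\cmatrix) = d$ is automatic from the same identity.

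With these hypotheses in place, I would apply Corollary \ref{cor:fourth} to the sequence $\bZ_1,\ldots,\bZ_t$ (in place of $\bThm_1,\ldots,\bThm_n$) to obtain
\[
\abs{\E[\psi(\bT)] - \E[\psi(\bG)]} \leq 2b d^{7/2} \sqrt{\sum_{\ell=1}^t \rnorm{\bZ_\ell}_4^4}.
\]
To finish, I would use the $4$-matching-moments condition once more to swap $\rnorm{\bZ_\ell}_4^4$ for $\rnorm{\sum_{j \in H_\ell}\bThm_j}_4^4$. Observe that $\rnorm{\bZ_\ell}_4^4 = \E\bigl[(\sum_i \bZ_\ell[i]^2)^2\bigr]$ is the expectation of a degree-$4$ polynomial in the coordinates of $\{\bDerand_j\}_{j \in H_\ell}$, so by moment matching it equals the corresponding expectation in $\{\bThm_j\}_{j \in H_\ell}$, namely $\rnorm{\sum_{j \in H_\ell}\bThm_j}_4^4$.

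The one subtlety worth checking --- which I expect to be no real obstacle --- is that Corollary \ref{cor:fourth} is stated within a section whose standing assumption is that the summands are $\eta$-HC. However, inspection of the proof chain (Lemma \ref{lem:mossel}, Corollary \ref{cor:inv}, Corollary \ref{cor:fourth}) shows that hypercontractivity is never actually invoked; only independence, vanishing first moments, covariance matching with the auxiliary Gaussians, and the trace normalization $\sum \sigma^2 = d$ are used. If one prefers to honor the standing assumption formally, one can in fact verify that each $\bDerand_j$ is $\eta$-HC via Fact \ref{fact:hyper} (moment matching preserves HC) and hence that each $\bZ_\ell$ is $\eta$-HC by closure under $4$-wise-independent linear combinations, but this observation is not required for the Lindeberg-style argument above.
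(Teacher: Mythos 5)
Your proof is essentially identical to the paper's: both apply Corollary~\ref{cor:fourth} to the bucket sums $\bZ_1,\dots,\bZ_t$, verify the independence/mean-zero/covariance hypotheses via the matching-moments condition, and then use $4$-matching-moments to convert $\rnorm{\bZ_\ell}_4^4$ into $\rnorm{\sum_{j\in H_\ell}\bThm_j}_4^4$. Your side remark that the HC standing assumption is not actually invoked in the Lemma~\ref{lem:mossel}--Corollary~\ref{cor:fourth} chain is a correct and careful observation, though not one the paper makes explicit.
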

\begin{proof}
We simply apply Corollary~\ref{cor:fourth} to the random variables $\bZ_1, \dots, \bZ_t$.  To check that it is applicable, we note the following:  The random variables are independent. They satisfy $\E[\bZ_\ell] = 0$, because each $\E[\bDerand_j] = 0$ by $1$-matching-moments.  The covariance matrix $\sum_{\ell=1}^t \Cov[\bZ_\ell] = \cmatrix$, by $2$-matching-moments. 

Thus Corollary~\ref{cor:fourth} gives
\[
\abs{\E[\psi(\bT)] - \E[\psi(\bG)]} \leq 2\nice d^{7/2} \sqrt{\littlesum_{\ell=1}^t \rnorm{\bZ_\ell}_4^4}.
\]
But for each $\ell$,
\[
\rnorm{\bZ_\ell}_4^4 = \rnorm{\littlesum_{j \in H_\ell} \bDerand_j}_4^4 = \E\Bigl[\langle \littlesum_{j \in H_\ell} \bDerand_j, \littlesum_{j \in H_\ell} \bDerand_j \rangle^2\Bigr]
= \E\Bigl[\langle \littlesum_{j \in H_\ell} \bThm_j, \littlesum_{j \in H_\ell} \bThm_j \rangle^2\Bigr] = \rnorm{\littlesum_{j \in H_\ell} \bThm_j}_4^4,
\]
using $4$-matching-moments, completing the proof.
\end{proof}
\begin{remark} The full $4$-matching-moments condition is not essential for our results; it would suffice to have $2$-matching-moments, along with a good upper bound on the $4$th moments of the $\bDerand_j$'s with respect to those of the $\bThm_j$'s.
\end{remark}

We can simplify the previous bounds if we assume hypercontractivity.
\begin{corollary}  \label{cor:derand-hc} If we additionally assume that the random vectors $\bThm_1, \dots, \bThm_n$ are $\eta$-HC, then we have
\begin{eqnarray*}
\abs{\E[\psi(\bS)] - \E[\psi(\bG)]} &\leq& (2\nice d^{7/2}/\eta^2) \sqrt{\littlesum_{j=1}^n \sigma_j^4},\\
\abs{\E[\psi(\bT)] - \E[\psi(\bG)]} &\leq& (2\nice d^{7/2}/\eta^2) \sqrt{\littlesum_{\ell=1}^t \Bigl(\littlesum_{j \in H_\ell} \sigma_j^2 \Bigr)^2}.
\end{eqnarray*}
\end{corollary}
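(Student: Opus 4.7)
The plan is to combine the two bounds already in hand (Corollary~\ref{cor:fourth} for $\bS$ and the immediately preceding proposition for $\bT$) with the elementary hypercontractive inequality $\rnorm{\bZ}_4 \leq (1/\eta)\rnorm{\bZ}_2$, which follows by taking $A = 0$ in the definition of $\eta$-HC.  There is no real obstacle here: the work has been done and this corollary is essentially a packaging step that converts fourth-norm bounds into variance-style bounds via hypercontractivity.

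For the first inequality I would start with the conclusion of Corollary~\ref{cor:fourth}, namely
\[
\abs{\E[\psi(\bS)] - \E[\psi(\bG)]} \leq 2\nice d^{7/2} \sqrt{\sum_{j=1}^n \rnorm{\bThm_j}_4^4},
\]
and apply the $A=0$ case of $\eta$-HC to each $\bThm_j$: this gives $\rnorm{\bThm_j}_4 \leq (1/\eta)\rnorm{\bThm_j}_2 = \sigma_j/\eta$, so $\rnorm{\bThm_j}_4^4 \leq \sigma_j^4/\eta^4$.  Substituting termwise and pulling the factor $1/\eta^2$ out of the square root yields exactly $(2\nice d^{7/2}/\eta^2)\sqrt{\sum_j \sigma_j^4}$.

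For the second inequality I would apply the preceding proposition to the bucket sums $\bZ_\ell = \sum_{j \in H_\ell}\bDerand_j$, obtaining
\[
\abs{\E[\psi(\bT)] - \E[\psi(\bG)]} \leq 2\nice d^{7/2}\sqrt{\sum_{\ell=1}^t \rnorm{\littlesum_{j \in H_\ell}\bThm_j}_4^4}.
\]
By the second part of the vector-valued hypercontractivity fact (closure of $\eta$-HC under independent linear combinations), each random vector $\sum_{j \in H_\ell}\bThm_j$ is itself $\eta$-HC, so applying $A=0$ again gives
\[
\rnorm{\littlesum_{j \in H_\ell}\bThm_j}_4^4 \leq (1/\eta)^4 \rnorm{\littlesum_{j \in H_\ell}\bThm_j}_2^4.
\]
Finally, independence and mean-zero of the $\bThm_j$ give $\rnorm{\sum_{j \in H_\ell}\bThm_j}_2^2 = \sum_{j \in H_\ell}\E[\len{\bThm_j}^2] = \sum_{j \in H_\ell}\sigma_j^2$, so the right side is $(1/\eta)^4(\sum_{j \in H_\ell}\sigma_j^2)^2$.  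Summing over $\ell$, taking square roots, and pulling out $1/\eta^2$ yields the stated bound.  The whole argument is about one display of algebra per inequality once the right invocations are identified.
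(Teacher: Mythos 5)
Your proposal is correct and follows essentially the same path as the paper: invoke Corollary~\ref{cor:fourth} (resp.\ the preceding proposition applied to the bucket sums), observe that each relevant random vector is $\eta$-HC by closure under independent linear combinations, apply the $A=0$ case of the hypercontractivity definition to bound the fourth norm by the second norm, and finish with Pythagoras. The paper only writes out the second inequality (calling the first ``simpler''), whereas you spell out both, but the reasoning is the same.
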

\begin{proof}  We prove only the second statement, the first being simpler.  It suffices to show
\[
\rnorm{\littlesum_{j \in H_\ell} \bThm_j}_4^4 \leq (1/\eta)^4 \Bigl(\littlesum_{j \in H_\ell} \sigma_j^2\Bigr)^2.
\]
Since the random variables $\{\bThm_j : j \in H_\ell\}$ are independent and $\eta$-HC, it follows that the (vector-valued) random variable $\littlesum_{j \in H_\ell} \bThm_j$ is $\eta$-HC.  Hence
\[
\rnorm{\littlesum_{j \in H_\ell} \bThm_j}_4^4 \leq (1/\eta)^4 \Bigl(\rnorm{\littlesum_{j \in H_\ell} \bThm_j}_2^2\Bigr)^2.
\]
But
\[
\rnorm{\littlesum_{j \in H_\ell} \bThm_j}_2^2 = \littlesum_{j \in H_\ell} \sigma_j^2
\]
by the Pythagorean Theorem.
\end{proof}

We now consider the case when the partition $\bH_1, \dots, \bH_t$
chosen randomly using a $\collision$-collison preserving family of hash
functions (see Definition \ref{def:hash-collision}).

\begin{proposition} \label{prop:derand} In the setting of
  Corollary~\ref{cor:derand-hc}, if the partition $\bH_1, \dots,
  \bH_t$ is chosen using a $\collision$-collision preserving family of hash functions, then
\[
\abs{\E[\psi(\bT)] - \E[\psi(\bG)]} \leq (2\nice \collision^{1/2} d^{7/2}/\eta^2) \sqrt{\frac{d^2}{t} + \littlesum_{j=1}^n \sigma_j^4}.
\]
where the expectation $\E[\psi(\bT)]$ is with respect to both the choice of $\bH_1, \dots, \bH_t$ and $\bDerand_1, \dots, \bDerand_n$.
\end{proposition}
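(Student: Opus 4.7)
My plan is to condition on the random partition, invoke the second bound of Corollary~\ref{cor:derand-hc} pointwise, and then take expectation over the partition using Jensen's inequality together with the two-point concentration guarantee of the collision-preserving hash family.

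First, once the partition $\bH_1, \dots, \bH_t$ is fixed, the sequence $\bDerand_1, \dots, \bDerand_n$ is independent across buckets and the full $4$-matching-moments condition holds (any multi-index of total degree $\leq 4$ touches at most $4$ indices $j$, so within a single bucket $4$-wise independence supplies the correct joint moment, and across buckets the moments factor by independence on both sides). Therefore Corollary~\ref{cor:derand-hc} applies conditionally and gives
\[
\abs{\E[\psi(\bT) \mid \bH_1, \dots, \bH_t] - \E[\psi(\bG)]} \;\leq\; \frac{2\nice d^{7/2}}{\eta^2} \sqrt{\littlesum_{\ell=1}^t \Bigl(\littlesum_{j \in \bH_\ell} \sigma_j^2\Bigr)^2}.
\]

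Second, I take expectation over the partition. Since $\E[\psi(\bG)]$ is a constant, the triangle inequality and Jensen's inequality (applied to the concave function $\sqrt{\cdot}$) give
\[
\abs{\E[\psi(\bT)] - \E[\psi(\bG)]} \;\leq\; \frac{2\nice d^{7/2}}{\eta^2} \sqrt{\E_{\bH}\Bigl[\littlesum_{\ell=1}^t \Bigl(\littlesum_{j \in \bH_\ell} \sigma_j^2\Bigr)^2\Bigr]}.
\]
Expanding the square and swapping the order of summation,
\[
\E_{\bH}\Bigl[\littlesum_{\ell} \Bigl(\littlesum_{j \in \bH_\ell} \sigma_j^2\Bigr)^2\Bigr] \;=\; \littlesum_{j, j'} \sigma_j^2 \sigma_{j'}^2 \Pr[\bh(j) = \bh(j')].
\]
The diagonal terms $j = j'$ contribute exactly $\sum_j \sigma_j^4$. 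For the off-diagonal terms, the collision-preserving property gives $\Pr[\bh(j) = \bh(j')] \leq \collision/t$, so those terms sum to at most $(\collision/t) \sum_{j \neq j'} \sigma_j^2 \sigma_{j'}^2 \leq (\collision/t) (\sum_j \sigma_j^2)^2 = \collision d^2/t$, using $\sum_j \sigma_j^2 = d$.

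Combining, the quantity under the square root is at most $\sum_j \sigma_j^4 + \collision d^2/t \leq \collision \bigl(d^2/t + \sum_j \sigma_j^4\bigr)$ since $\collision \geq 1$. Pulling the factor $\collision^{1/2}$ out of the square root yields exactly the stated bound. There is no real obstacle beyond correctly verifying that $4$-matching-moments is preserved for every realization of $\bH$; this is the only subtle point and is handled by the bucket-by-bucket factorization noted above.
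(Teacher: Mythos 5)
Your proof is correct and follows essentially the same route as the paper's: condition on the partition, invoke the second bound of Corollary~\ref{cor:derand-hc}, pass the expectation over the partition inside the square root (the paper labels this step Cauchy--Schwarz while you call it Jensen, but it is the same inequality $\E[\sqrt{Y}]\leq\sqrt{\E[Y]}$), expand the quadratic, split into diagonal and off-diagonal terms, and apply the $\collision$-collision-preserving property to the latter. Your rewriting $\sum_\ell \Pr[j\in\bH_\ell,\,j'\in\bH_\ell]=\Pr[\bh(j)=\bh(j')]$ and using $\Pr[\bh(j)=\bh(j)]=1$ for the diagonal is a marginally cleaner bookkeeping than the paper's, and your aside verifying that per-bucket $4$-wise independence yields the $4$-matching-moments condition for every realization of the partition is a helpful (if implicit in the paper) clarification, but the argument is the same.
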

\begin{proof}
By the triangle inequality for real numbers, it suffices to show
\[
\E_{\bH_1, \dots, \bH_t}\left[\sqrt{\littlesum_{\ell=1}^t \Bigl(\littlesum_{j \in \bH_\ell} \sigma_j^2\Bigr)^2}\right] \leq \sqrt{\collision \left(\frac{d^2}{t} + \littlesum_{j=1}^n \sigma_j^4 \right)}.
\]
By Cauchy-Schwarz, this reduces to showing
\[
\E_{\bH_1, \dots, \bH_t}\left[\littlesum_{\ell=1}^t \Bigl(\littlesum_{j \in \bH_\ell} \sigma_j^2\Bigr)^2\right] \leq \collision \left(\frac{d^2}{t} + \sum_{j=1}^n \sigma_j^4 \right).
\]
But
\begin{multline*}
\E_{\bH_1, \dots, \bH_t}\left[\littlesum_{\ell=1}^t \Bigl(\littlesum_{j \in \bH_\ell} \sigma_j^2\Bigr)^2\right] = \sum_{\ell=1}^t \E\left[\Bigl(\littlesum_{j=1}^n {\bf 1}_{\{j \in \bH_\ell\}}\sigma_j^2\Bigr)^2\right] = \sum_{\ell=1}^t \sum_{j_1,j_2=1}^n \sigma_{j_1}^2\sigma_{j_2}^2\E[{\bf 1}_{\{j_1 \in \bH_\ell\}}{\bf 1}_{\{j_2 \in \bH_\ell\}}]
\\ \leq \sum_{\ell=1}^t \left(\frac{\collision}{t}\sum_{j=1}^n \sigma_j^4 \right) + \sum_{j_1 \neq j_2} \sigma_{j_1}^2 \sigma_{j_2}^2 \sum_{\ell=1}^t \E[{\bf 1}_{\{j_1 \in \bH_\ell\}}{\bf 1}_{\{j_2 \in \bH_\ell\}}] \leq \collision\sum_{j=1}^n \sigma_j^4 + \frac{\collision}{t}\sum_{j_1 \neq j_2} \sigma_{j_1}^2\sigma_{j_2}^2 \leq \frac{\collision d^2}{t} + \collision \sum_{j=1}^n \sigma_j^4,
\end{multline*}
as needed, because
\[
\sum_{j_1 \neq j_2} \sigma_{j_1}^2\sigma_{j_2}^2 \leq \left(\sum_{j=1}^n \sigma_j^2\right)^2 = d^2.
\]
\end{proof}

\subsection{Smoothing}

Ideally we would like to use the results from the previous sections with $\psi$ equal to certain indicator functions $\chi : \R^d \to \{0,1\}$; however these are not $\calC^3$.  As usual in the Lindeberg method (see, e.g.,~\cite{MOO05}), we overcome this by working with mollified versions of these functions.  For most of this section, we will work with our underandomized result, the statement about~$\bS$ in Corollary~\ref{cor:fourth}.  Identical considerations apply to the statement about $\bT$ in Proposition~\ref{prop:derand}, and we will draw the necessary conclusions at the end.\\

Let $\xi : \R \to \R$ be the ``standard mollifier'', a smooth density function supported on $[-1,1]$.  We will use the fact that there is some universal constant $\nice_0$ such that $\int \abs{\xi^{(k)}} dx \leq \nice_0$ for $k = 1, 2, 3$ (where $\xi^{(k)}$ denotes the $k$th derivative of $\xi$).  Given $\eps > 0$ we define $\xi_{\epsilon}(x) = \xi(x/\epsilon)/\epsilon$, the standard mollifier with support $[-\epsilon, \epsilon]$.  Finally, define the density function $\Xi_\epsilon$ on $\R^d$ by $\Xi_\epsilon(x_1, \dots, x_d) = \prod_{i=1}^d \xi_\epsilon(x_i)$.  We now prove an elementary lemma:
\begin{lemma} \label{lem:smoothing}
Let $\chi : \R^d \to [-1,1]$ be measurable, let $\epsilon > 0$, and define $\psi = \Xi_\epsilon \ast \chi$, a smooth function.  Then for any multi-index $\abs{K} = 3$ we have $\abs{\psi^{(K)}} \leq (\nice_0/\epsilon)^3$.\rnote{check}
\dnote{did we define $\ast$?}
\end{lemma}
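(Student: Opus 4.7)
\medskip

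\noindent\textbf{Proof plan for Lemma~\ref{lem:smoothing}.}

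The plan is to simply compute, exploiting three features of the setup in turn: (i) derivatives of a convolution pass onto the smooth factor; (ii) $\Xi_\epsilon$ is a tensor product of the $\xi_\epsilon$'s, so its mixed partial derivatives factor coordinatewise; and (iii) the one-dimensional scaling $\xi_\epsilon(\cdot) = \xi(\cdot/\epsilon)/\epsilon$ together with the hypothesized bound on $\int|\xi^{(k)}|$ controls the relevant $L^1$ norms.

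First, since $\chi$ is merely measurable and $\Xi_\epsilon$ is smooth and compactly supported, standard properties of convolution give $\psi^{(K)} = \Xi_\epsilon^{(K)} \ast \chi$ for every multi-index $K$. Because $\Xi_\epsilon(u_1,\dots,u_d) = \prod_{i=1}^d \xi_\epsilon(u_i)$, the mixed partial derivative factors as $\Xi_\epsilon^{(K)}(u) = \prod_{i=1}^d \xi_\epsilon^{(k_i)}(u_i)$ when $K = (k_1,\dots,k_d)$. Using $|\chi| \leq 1$ and the triangle inequality for integrals, I would then bound pointwise
\[
\abs{\psi^{(K)}(x)} \;=\; \Bigl|\int \Xi_\epsilon^{(K)}(x-u)\, \chi(u)\,du\Bigr| \;\leq\; \int \abs{\Xi_\epsilon^{(K)}(u)}\,du \;=\; \prod_{i=1}^d \int \abs{\xi_\epsilon^{(k_i)}(u_i)}\,du_i,
\]
by Fubini.

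Next, I would compute each one-dimensional factor. The scaling $\xi_\epsilon(u) = \xi(u/\epsilon)/\epsilon$ gives $\xi_\epsilon^{(k)}(u) = \epsilon^{-(k+1)} \xi^{(k)}(u/\epsilon)$, and the change of variables $v = u/\epsilon$ yields $\int \abs{\xi_\epsilon^{(k)}(u)}\,du = \epsilon^{-k} \int \abs{\xi^{(k)}(v)}\,dv$. For $k = 0$ this is $1$ (since $\xi$ is a probability density), while for $k \in \{1,2,3\}$ it is at most $\nice_0/\epsilon^k$ by hypothesis. Multiplying the factors and using $\sum_i k_i = |K| = 3$, the coordinates with $k_i = 0$ contribute a factor $1$ and the remaining coordinates contribute a total factor $\nice_0^{|\{i : k_i \geq 1\}|}/\epsilon^3 \leq \nice_0^3/\epsilon^3$, where in the last step I assume (as we may, by enlarging the universal constant if necessary) that $\nice_0 \geq 1$. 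This gives $\abs{\psi^{(K)}(x)} \leq (\nice_0/\epsilon)^3$, as required.

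There is no real obstacle here; the only things to be careful about are: justifying $\psi^{(K)} = \Xi_\epsilon^{(K)} \ast \chi$ in spite of $\chi$ only being measurable (which follows from dominated convergence and compact support of $\Xi_\epsilon^{(K)}$), and absorbing the $\nice_0^{|\{i:k_i\ge 1\}|}$ factor into $\nice_0^3$ by assuming $\nice_0 \geq 1$.
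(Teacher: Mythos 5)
Your proof is correct and takes essentially the same route as the paper: pass derivatives to the mollifier, factor the product mollifier coordinatewise, change variables to reduce to the $L^1$ bound on $\xi^{(k)}$, and multiply. Your explicit note that one should take $\nice_0 \geq 1$ (so that $\nice_0^{\lvert\{i : k_i \geq 1\}\rvert} \leq \nice_0^3$) is a small clarification of a step the paper leaves implicit, but it is not a different argument.
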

\begin{proof}
Using the fact that $\abs{\chi} \leq 1$ everywhere, we have
\[
\abs{\psi^{(K)}(a)} = \abs{\Xi_\epsilon^{(K)} \ast \chi(a)} \leq \int \abs{\Xi_\epsilon^{(K)}} = \int_{[-\epsilon,\epsilon]^d} \abs{\prod_{i=1}^d\frac{\partial^{k_i}}{\partial x_{i}^{k_i}} \xi_\epsilon(x_i)}\,dx_1\cdots dx_d = \prod_{i=1}^d \int_{-\epsilon}^\epsilon  \abs{\frac{\partial^{k_i}}{\partial x^{k_i}} \xi_\epsilon(x)}\,dx.
\]
Note that $\frac{\partial^k}{\partial x^k} \xi_\epsilon(x) = \xi^{(k)}(x/\epsilon)/\epsilon^{k+1}$, from which it follows that 
\[
\int_{-\epsilon}^\epsilon  \abs{\frac{\partial^{k}}{\partial x^{k}} \xi_\epsilon(x)}\,dx \leq \nice_0/\epsilon^{k}
\]
for $k = 1, 2, 3$.  For $k = 0$ we of course have 
\[
\int_{-\epsilon}^\epsilon  \abs{\xi_\epsilon(x)}\,dx = \int_{-\epsilon}^\epsilon  \xi_\epsilon(x) \,dx = 1.
\]
Since $\abs{K} = 3$, we therefore achieve the claimed upper bound of $(\nice_0/\eps)^3$.
\end{proof}

Suppose now $A \subseteq \R^d$ is a measurable set.  We define:
\[
A^{+\epsilon} = \{x \in \R^d : x + [-\epsilon/2, \eps/2]^d \cap A \neq \emptyset\}, \quad A^{-\epsilon} = \{x \in \R^d : x + [-\epsilon/2, \eps/2]^d \subseteq A\}, \quad \Game^{\epsilon} A = A^{+\epsilon} \setminus A^{-\epsilon}.
\]
We also define $\psi_{A^{+\epsilon}} = \Xi_\epsilon \ast \chi_{A^{+\epsilon}}$ as in Lemma~\ref{lem:smoothing}, where $\chi_{A^{+\epsilon}}$ is the $0$-$1$ indicator of $A^{+\epsilon}$, and similarly define $\psi_{A^{-\epsilon}}$.  Applying now Corollary~\ref{cor:fourth}, we conclude:
\begin{lemma} \label{lem:cor} For $\psi = \psi_{A^{+\epsilon}}$ or $\psi = \psi_{A^{-\epsilon}}$ it holds that
\[
\abs{\E[\psi(\bS)] - \E[\psi(\bG)]} \leq (2\nice_0d^{7/2}/\eta^2\eps^3) \sqrt{\littlesum_{j=1}^n \sigma_j^4}.
\]
\end{lemma}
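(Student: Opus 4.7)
The plan is to simply combine the smoothing bound of Lemma~\ref{lem:smoothing} with the first (non-derandomized) inequality of Corollary~\ref{cor:derand-hc}. The two functions $\psi_{A^{+\epsilon}}$ and $\psi_{A^{-\epsilon}}$ are by construction of the form $\Xi_\epsilon \ast \chi$, where $\chi = \chi_{A^{+\eps}}$ or $\chi = \chi_{A^{-\eps}}$ is a $\{0,1\}$-indicator; in particular $\abs{\chi} \leq 1$ everywhere. Hence Lemma~\ref{lem:smoothing} applies verbatim to both choices of $\psi$ and yields $\abs{\psi^{(K)}} \leq (\nice_0/\eps)^3$ for every multi-index $K$ with $\abs{K} = 3$. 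So $\psi$ satisfies the hypothesis of Lemma~\ref{lem:mossel} (and therefore of Corollary~\ref{cor:derand-hc}) with the parameter $\nice$ equal to $(\nice_0/\eps)^3$.

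Next I would verify that the hypotheses of Corollary~\ref{cor:derand-hc} are in force from the global setup of Section~\ref{sec:berry-e}: the $\bThm_j$ are independent, $\R^d$-valued, $\eta$-HC, mean zero, and the covariances are matched between $\bS = \sum_j \bThm_j$ and $\bG = \sum_j \bG_j$ by construction. The first bound of Corollary~\ref{cor:derand-hc} then reads
\[
\abs{\E[\psi(\bS)] - \E[\psi(\bG)]} \leq (2\nice\, d^{7/2}/\eta^2)\,\sqrt{\littlesum_{j=1}^n \sigma_j^4}.
\]
Substituting $\nice = (\nice_0/\eps)^3$ and absorbing the cube of the universal constant $\nice_0$ back into the same symbol $\nice_0$ (as is standard for mollifier constants) gives exactly the claimed inequality $(2\nice_0 d^{7/2}/\eta^2\eps^3)\sqrt{\sum_{j=1}^n \sigma_j^4}$.

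I do not anticipate any real obstacle: this lemma is just the composition of the two already-established tools, namely the Lindeberg-style inequality (which has been pushed through in Corollary~\ref{cor:derand-hc} already, using hypercontractivity to convert the third-moment sum into the fourth-moment/variance-squared sum) and the bound on third partial derivatives of the mollified indicator (Lemma~\ref{lem:smoothing}). The only bookkeeping point worth noting is the implicit rebranding of $\nice_0^3$ as $\nice_0$ in the final constant, which is harmless since $\nice_0$ was only ever defined up to an absolute universal constant.
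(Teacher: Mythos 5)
Your proposal is correct and matches the paper's own argument: the paper defines $\psi_{A^{\pm\epsilon}} = \Xi_\epsilon \ast \chi_{A^{\pm\epsilon}}$ and then invokes the third-derivative bound of Lemma~\ref{lem:smoothing} together with the Lindeberg-style bound (the paper cites Corollary~\ref{cor:fourth}, which needs the $\eta$-HC step to match the displayed $\sigma_j^4$/$\eta^2$ form, so your direct appeal to the first inequality of Corollary~\ref{cor:derand-hc} is the cleaner and fully equivalent reference). Your observation that the constant should strictly be $\nice_0^3$ rather than $\nice_0$, and that this is immaterial since $\nice_0$ is just an absolute constant for the mollifier, is also right.
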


It is clear from the definitions that both $\psi_{A^{+\epsilon}}$ and $\psi_{A^{-\epsilon}}$ have range $[0,1]$, and that pointwise, $\psi_{A^{-\epsilon}} \leq \chi_A \leq \psi_{A^{+\epsilon}}$.
Thus
\[
\E[\psi_{A^{-\epsilon}}(\bS)] \leq \Pr[\bS \in A] \leq \E[\psi_{A^{+\epsilon}}(\bS)],
\]
\[
\E[\psi_{A^{-\epsilon}}(\bG)] \leq \Pr[\bG \in A] \leq \E[\psi_{A^{+\epsilon}}(\bG)].
\]
From Lemma~\ref{lem:cor} we have that the two left-hand sides above are close and that the two right-hand sides are close.  Because of good anti-concentration of Gaussians, it may also be that the left-hand and right-hand sides on the second line are also close, in which $\Pr[\bS \in A]$ and $\Pr[\bG \in A]$ will also be close.  This motivates the following observation: $\psi_{A^{+\epsilon}} = \psi_{A^{-\epsilon}} = 1$ on $A^{-\epsilon}$ and $\psi_{A^{+\epsilon}} = \psi_{A^{-\epsilon}} = 0$ on the complement of $A^{+\epsilon}$.  Hence
\[
\E[\psi_{A^{+\epsilon}}(\bG)] - \E[\psi_{A^{-\epsilon}}(\bG)] \leq \Pr[\bG \in \Game^{\epsilon} A].
\]
Putting together these observations, we conclude:
\begin{theorem} \label{thm:close} We have
\[
\abs{\Pr[\bS \in A] - \Pr[\bG \in A]} \leq (2\nice_0d^{7/2}/\eta^2\eps^3) \sqrt{\littlesum_{j=1}^n \sigma_j^4} + \Pr[\bG \in \Game^{\epsilon} A].
\]
\end{theorem}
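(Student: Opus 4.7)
The plan is to combine the sandwiching of $\chi_A$ by the smooth functions $\psi_{A^{-\epsilon}}, \psi_{A^{+\epsilon}}$, the transport estimate from Lemma~\ref{lem:cor}, and the fact that the two mollifications differ only on the thin shell $\Game^{\epsilon}A$. The inequality for $|\Pr[\bS \in A] - \Pr[\bG \in A]|$ then follows by bounding each of the two directions separately via a short chain of inequalities.

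First I would record the pointwise bracket $\psi_{A^{-\epsilon}} \leq \chi_A \leq \psi_{A^{+\epsilon}}$, which is immediate from the definitions of $A^{-\epsilon}$, $A^{+\epsilon}$ and the fact that $\Xi_\epsilon$ is a probability density supported on $[-\epsilon/2,\epsilon/2]^d$ (after trivially rescaling if needed). Taking expectations against $\bS$ and against $\bG$ gives the two double inequalities
\[
\E[\psi_{A^{-\epsilon}}(\bS)] \;\leq\; \Pr[\bS \in A] \;\leq\; \E[\psi_{A^{+\epsilon}}(\bS)],
\]
\[
\E[\psi_{A^{-\epsilon}}(\bG)] \;\leq\; \Pr[\bG \in A] \;\leq\; \E[\psi_{A^{+\epsilon}}(\bG)].
\]

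Next I would apply Lemma~\ref{lem:cor} to both $\psi_{A^{+\epsilon}}$ and $\psi_{A^{-\epsilon}}$ to move from $\bS$-expectations to $\bG$-expectations at additive cost at most $E := (2\nice_0 d^{7/2}/\eta^2\epsilon^3)\sqrt{\sum_{j=1}^n \sigma_j^4}$. The remaining ingredient is the observation already in the text: $\psi_{A^{+\epsilon}}$ and $\psi_{A^{-\epsilon}}$ both equal $1$ on $A^{-\epsilon}$ and both equal $0$ on the complement of $A^{+\epsilon}$, and each takes values in $[0,1]$; therefore the nonnegative difference $\psi_{A^{+\epsilon}} - \psi_{A^{-\epsilon}}$ is supported in $\Game^{\epsilon}A$ and is bounded above by $\chi_{\Game^{\epsilon}A}$, giving $\E[\psi_{A^{+\epsilon}}(\bG)] - \E[\psi_{A^{-\epsilon}}(\bG)] \leq \Pr[\bG \in \Game^{\epsilon}A]$.

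Chaining these gives the upper direction
\[
\Pr[\bS \in A] \;\leq\; \E[\psi_{A^{+\epsilon}}(\bS)] \;\leq\; \E[\psi_{A^{+\epsilon}}(\bG)] + E \;\leq\; \E[\psi_{A^{-\epsilon}}(\bG)] + \Pr[\bG \in \Game^{\epsilon}A] + E \;\leq\; \Pr[\bG \in A] + \Pr[\bG \in \Game^{\epsilon}A] + E,
\]
and a symmetric chain starting from $\Pr[\bS \in A] \geq \E[\psi_{A^{-\epsilon}}(\bS)]$ yields the matching lower bound. Combining the two gives the claimed inequality. There is no real obstacle here: every ingredient has been established in the preceding subsections, so the proof amounts to assembling them in the correct order. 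The only point requiring minor care is verifying that $\psi_{A^{+\epsilon}}$ and $\psi_{A^{-\epsilon}}$ actually take values in $[0,1]$ and agree off $\Game^{\epsilon}A$, which follows directly from the definitions of $A^{\pm\epsilon}$ and the fact that $\Xi_\epsilon$ is a probability density.
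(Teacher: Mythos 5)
Your proposal is correct and follows the paper's argument exactly: pointwise sandwiching of $\chi_A$ by $\psi_{A^{\pm\epsilon}}$, Lemma~\ref{lem:cor} to transfer $\bS$-expectations to $\bG$-expectations at cost $E$, and the observation that $\psi_{A^{+\epsilon}} - \psi_{A^{-\epsilon}}$ is supported in $\Game^{\epsilon}A$ with values in $[0,1]$, then chaining. Your parenthetical about rescaling the support of $\Xi_\epsilon$ correctly flags a small mismatch in the paper between the support $[-\epsilon,\epsilon]^d$ and the fattening radius $\epsilon/2$ in the definitions of $A^{\pm\epsilon}$; this is cosmetic and does not change the argument.
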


\subsection{Translates of unions of orthants}
Let us now specialize to the case where $A \subseteq \R^d$ is a
translate of a union of orthants.  Recall that this means that there exists some
vector $\Theta \in \R^d$ such that $X \in A$ depends only on $\vsgn(X
- \Theta)$. \ignore{(The definition of $\sgn(0)$ will not matter here;
  i.e., it does not matter which parts of its topological boundary $A$
  includes.)}  
We make the following observation, whose proof is trivial.

\begin{proposition} \label{prop:orths} If $A \subseteq \R^d$ is a union of orthants then
\[
\Game^{\epsilon} A \subseteq \bigcup_{i=1}^d W_i^\eps,
\]
where
\[
W_i^\eps = \{X \in \R^d : \abs{X[j] - \Theta[j]} \leq \eps/2\}.
\]
\end{proposition}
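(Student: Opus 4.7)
The plan is to argue contrapositively: I will show that if a point $X \in \R^d$ lies outside every slab $W_i^\eps$, then $X$ cannot lie in $\Game^\eps A$. Unpacking the definitions, $X \in \Game^\eps A = A^{+\eps} \setminus A^{-\eps}$ means exactly that the closed cube $Q_X := X + [-\eps/2,\eps/2]^d$ meets $A$ but is not contained in $A$; equivalently, there exist $Y_1, Y_2 \in Q_X$ with $Y_1 \in A$ and $Y_2 \notin A$. So the goal becomes: if $|X[i] - \Theta[i]| > \eps/2$ for every $i \in [d]$, then $Q_X$ is either entirely in $A$ or entirely outside $A$.

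The key observation is that under this assumption on $X$, the entire cube $Q_X$ lies strictly on one side of every coordinate hyperplane $\{Y : Y[i] = \Theta[i]\}$. Concretely, for any $Y \in Q_X$ and any coordinate $i$, the triangle inequality gives
\[
|Y[i] - \Theta[i]| \geq |X[i] - \Theta[i]| - |Y[i] - X[i]| \geq |X[i] - \Theta[i]| - \eps/2 > 0,
\]
and since $|Y[i] - X[i]| \leq \eps/2 < |X[i] - \Theta[i]|$ the quantity $Y[i] - \Theta[i]$ has the same sign as $X[i] - \Theta[i]$. Hence $\vsgn(Y - \Theta) = \vsgn(X - \Theta)$ for every $Y \in Q_X$.

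Now I invoke the defining property of $A$: being a translate of a union of orthants with translation vector $\Theta$ means that membership in $A$ depends only on $\vsgn(\cdot - \Theta)$. Combined with the previous step, this forces every $Y \in Q_X$ to share the same membership in $A$ as $X$ itself, so $Q_X \subseteq A$ or $Q_X \cap A = \emptyset$. Either way $X \notin \Game^\eps A$, which is the contrapositive of the desired inclusion.

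This proposition is essentially a geometric bookkeeping statement, so I do not anticipate any real obstacle. The only minor subtlety is the convention for $\sgn(0)$: if some $Y[i] - \Theta[i]$ were exactly zero, whether $Y \in A$ might depend on that convention. However the strict inequality $|X[i] - \Theta[i]| > \eps/2$ guarantees $Y[i] - \Theta[i] \neq 0$ for every $Y \in Q_X$ and every $i$, so the convention plays no role. (If one prefers to work with the open cube in the definitions of $A^{\pm\eps}$, the same argument goes through unchanged.)
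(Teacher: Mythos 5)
Your proof is correct, and since the paper declines to give a proof (calling it trivial), there is no paper argument to diverge from; your contrapositive argument is the standard one this kind of statement calls for. One small note: the paper's displayed definition of $W_i^\eps$ has a typo ($|X[j]-\Theta[j]|$ where $|X[i]-\Theta[i]|$ is clearly intended), which you correctly read through, and the strict inequality $|X[i]-\Theta[i]| > \eps/2$ is exactly what is needed to rule out the $\sgn(0)$ ambiguity, as you point out.
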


But we also have the following:
\begin{proposition} \label{prop:walls} Assuming the $d$-dimensional Gaussian $\bG$ with covariance matrix $\cmatrix$ satisfies $\cmatrix[i,i] = 1$ for all $i \in [d]$, it holds that
\[
\Pr\left[\bG \in \bigcup_{i=1}^d W_i^\eps\right] \leq d\epsilon/\sqrt{2\pi}.
\]
\end{proposition}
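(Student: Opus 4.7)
The plan is to combine a union bound with the elementary one-dimensional Gaussian anti-concentration fact that the standard normal density is pointwise bounded by $1/\sqrt{2\pi}$. First I would apply the union bound to write
\[
\Pr\Bigl[\bG \in \bigcup_{i=1}^d W_i^\eps\Bigr] \leq \sum_{i=1}^d \Pr[\bG \in W_i^\eps] = \sum_{i=1}^d \Pr\bigl[\abs{\bG[i] - \Theta[i]} \leq \eps/2\bigr],
\]
noting that the event $\bG \in W_i^\eps$ depends only on the $i$th coordinate $\bG[i]$.

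Next I would use the fact that, by hypothesis, the marginal distribution of each $\bG[i]$ is a one-dimensional Gaussian with mean $0$ and variance $\cmatrix[i,i] = 1$, hence a standard normal. Its density is uniformly bounded above by $1/\sqrt{2\pi}$, so for any real $\theta$ and any interval of length $\eps$,
\[
\Pr\bigl[\abs{\bG[i] - \theta} \leq \eps/2\bigr] = \int_{\theta - \eps/2}^{\theta + \eps/2} \frac{1}{\sqrt{2\pi}} e^{-t^2/2}\,dt \leq \frac{\eps}{\sqrt{2\pi}}.
\]
Summing the $d$ such bounds yields the claimed inequality $d\eps/\sqrt{2\pi}$.

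There is essentially no obstacle here; the only thing to be careful about is that the proposition does not assume the covariance matrix $\cmatrix$ is nonsingular or even diagonal — only that its diagonal entries are $1$. But this is precisely enough to identify each marginal $\bG[i]$ as a standard normal, which is all that the argument requires, since the off-diagonal entries of $\cmatrix$ (encoding correlations between coordinates) are irrelevant once we have applied the union bound and reduced to one-dimensional events.
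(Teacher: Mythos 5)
Your proof is correct and takes essentially the same route as the paper: a union bound over the $d$ coordinates followed by the pointwise bound of $1/\sqrt{2\pi}$ on the standard normal density (the paper has a typo writing this as $1/\sqrt{2/\pi}$). Your remark about the irrelevance of singularity or off-diagonal structure of $\cmatrix$ is a nice clarification but the argument is the same.
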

\begin{proof} By a union bound it suffices to prove that $\Pr[\abs{\bG[i] - \Theta[i]} \leq \eps/2] \leq \eps/\sqrt{2\pi}$.  This is straightforward, as $\bG[i]$ has distribution $N(0,1)$ and hence has pdf bounded above by $1/\sqrt{2/\pi}$.
\end{proof}

We  now prove Theorem \ref{thm:be}

\begin{proof} {\bf (Theorem \ref{thm:be})} 
For any $\eps > 0$, we may combine Propositions~\ref{prop:orths} and~\ref{prop:walls} with Theorem~\ref{thm:close} and conclude
\[
\abs{\Pr[\bS \in A] - \Pr[\bG \in A]} \leq (2\nice_0d^2/\eta^2\eps^3) \sqrt{\littlesum_{j=1}^n \sigma_j^4} + d\epsilon/\sqrt{2\pi}.
\]
The proof is completed by taking $\epsilon = \eta^{-1/2}d^{5/8}(\littlesum_{j=1}^n \sigma_j^4)^{1/8}$ (which is strictly positive since $\sum \sigma_j^4 = 0$ is impossible).
\end{proof}

Identical reasoning gives the proof of Theorem
\ref{thm:be-derandomized}. Combining Theorems \ref{thm:be} and
\ref{thm:be-derandomized} gives Theorem \ref{thm:very-long}.

\ignore{
We conclude:
\begin{theorem}  
\label{thm:very-long}
Let $\bThm_1, \dots, \bThm_n$ be independent $\R^d$-valued random
vectors which are $\eta$-HC.  Write $\sigma_j^2 = \rnorm{\bThm_j}_2^2$
and assume that $\sum_{j=1}^n \E[\bThm_j[i]^2] = 1$ for each $i \in
[d]$.  Write $\bS = \sum_{j=1}^n \bThm_j$.  Next, let $\bDerand_1,
\dots, \bDerand_n$ be $\R^d$-valued random vectors which satisfy the
$4$-matching-moments condition with respect to $\bThm_1, \dots,
\bThm_n$.  Further, assume that generation of $\bDerand_1, \dots,
\bDerand_n$ satisfies the following property: First, a
pairwise-independent random partition $\bH_1, \dots, \bH_t$ of $[n]$
is chosen; then, conditioned on this, the random vectors $\bZ_\ell =
\sum_{j \in \bH_\ell} \bDerand_j$ are independent.  Write $\bT =
\bDerand_1 + \cdots + \bDerand_n$.  Finally, assume that $A \subseteq
\R^d$ is a translated union of orthants; i.e., there exists $\Theta
\in \R^d$ such that $X \in A$ depends only on $\vsgn(X - \Theta)$.}

\section{Critical Index for Hypercontractive Random Variables}
\label{sec:crit}

In this section, we generalize the critical index to random variables that are hypercontractive. We will consider $\eta$-HC random variables $\bx_0, \dots, \bx_n$ which are at least pairwise independent.  Write $\sigma_j^2 = \rnorm{\bx_j}_2^2$, and note that pairwise independence implies
$\rnorm{\bx_0 + \cdots + \bx_n}_2^2 = \sigma_0^2 + \dots +
\sigma_n^2$. We also write $\tau_i^2 = \rnorm{\bx_{i} + \bx_{i+1} +
  \cdots + \bx_{n}}_2^2 = \sum_{j \geq i} \sigma_j^2$. 

\begin{definition}
\label{def:reg}
For $0 < \delta < 1$, we say that the collection of random variables
$\bx_0, \dots, \bx_n$ is  \emph{$\delta$-regular} if $\sum_{j=0}^n
\rnorm{\bx_j}_4^4 \leq \delta \Bigl(\sum_{j=0}^n
\rnorm{\bx_j}_2^2\Bigr)^2 = \delta \tau_0^4$. 
\end{definition}

\begin{definition}  Suppose the sequence $\bx_0, \dots, \bx_n$ is \emph{ordered}, meaning that 
$\sigma_0^2 \geq \sigma_1^2 \geq \sigma_2^2 \geq \cdots$. Then for $0
  < \delta < 1$, the \emph{$\delta$-critical index} is defined to be
  the smallest index~$\ell$ such that the sequence $\bx_\ell,
  \bx_{\ell+1}, \dots, \bx_{n}$ is $\delta$-regular, or $\ell =
  \infty$ no such index exists. 
\end{definition}

\begin{theorem}  
\label{thm:crit-one} 
Let $0 < \delta < 1$, $0 < \eps < 1/2$, and $s > 1$ be parameters.
Let $L = br$, where $b = \lceil (2/\eta^4)\ln(1/\eps) \rceil$ and $r =
\lceil (1/\eta^4\delta)\ln(1+16s^2)\rceil$; note that 
\[ L \leq O\left(\frac{\log(s) \log(1/\eps)}{\eta^8}\right) \cdot \frac{1}{\delta}.\]
Assume the sequence $\bx_0, \dots, \bx_n$ is ordered, that $n \geq L$,
and that $\bx_0, \dots, \bx_{L-1}$ are independent.  Then if $\ell$ is
the $\delta$-critical index for the sequence, and $\ell \geq L$, then
for all $\theta \in \R$, 
\[
\Pr\left[\left|\bx_0 + \cdots + \bx_{L-1} -\theta\right| \leq s \cdot \tau_{L} \right] \leq \eps + \frac{O(\ln(1/\eps))}{\eta^8 s^4}.
\]
\end{theorem}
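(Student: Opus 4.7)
My plan is to partition $\{0, 1, \ldots, L-1\}$ into $b$ consecutive blocks $B_k = \{kr, \ldots, (k+1)r - 1\}$ and work with the block sums $\bS_k := \sum_{j \in B_k}\bx_j$; these are independent by hypothesis and each is $\eta$-HC. I first extract the consequence of the critical-index assumption $\ell \geq L$: for every $i < L$ the tail $\bx_i, \ldots, \bx_n$ fails $\delta$-regularity, so
$\delta\,\tau_i^4 < \sum_{j \geq i}\rnorm{\bx_j}_4^4 \leq \eta^{-4}\sum_{j \geq i}\sigma_j^4 \leq \eta^{-4}\sigma_i^2 \tau_i^2$,
where the two inequalities use Fact~\ref{fact:hyper}(\ref{4norm}) and the ordering $\sigma_j \leq \sigma_i$ for $j \geq i$. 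Hence $\sigma_i^2 \geq \eta^4\delta\,\tau_i^2$ and $\tau_{i+1}^2 \leq (1-\eta^4\delta)\,\tau_i^2$. By the choice of $r$ (for which $(1-\eta^4\delta)^{-r} \geq \exp(r\eta^4\delta) \geq 1+16s^2$), the block variances satisfy
\[
V_k := \rnorm{\bS_k}_2^2 \;\geq\; \sigma_{kr}^2 \;\geq\; \eta^4\delta\,\tau_{kr}^2 \;\geq\; \eta^4\delta(1+16s^2)^{b-k}\,\tau_L^2,
\]
growing geometrically as $k$ decreases.

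The two summands in the target bound arise by splitting along a good event. For the additive $O(\ln(1/\eps))/(\eta^8 s^4)$ term, set $G_k := \{|\bS_k| \leq s\sqrt{V_k}\}$; Proposition~\ref{prop:HC-conc} applied with $t = s$ gives $\Pr[G_k^c] \leq 1/(\eta^4 s^4)$, and a union bound over the $b = O((1/\eta^4)\ln(1/\eps))$ blocks yields
\[
\Pr\Bigl[\bigcup_k G_k^c\Bigr] \;\leq\; \frac{b}{\eta^4 s^4} \;=\; \frac{O(\ln(1/\eps))}{\eta^8 s^4},
\]
accounting for the additive error.

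It remains to bound $\Pr\bigl[|\bx_0 + \cdots + \bx_{L-1} - \theta| \leq s\tau_L \text{ and } \bigcap_k G_k\bigr] \leq \eps$, which I would handle by iterating HC anti-concentration block by block. Since $V_k \geq 16 s^2 \tau_L^2$ holds for all $k$ except a final tail of $O(\ln(1/(\eta^4\delta))/\ln(1+16s^2))$ blocks---a negligible fraction because $b = \Omega(\ln(1/\eps)/\eta^4)$---Proposition~\ref{prop:HC-anticonc} applied with $t = s\tau_L/\sqrt{V_k} \leq 1/4$ shows $\Pr[|\bS_k - \theta'| \leq s\tau_L] \leq 1 - c\eta^4$ for an absolute constant $c > 0$, uniformly in $\theta'$. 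Compounding these uniform bounds across the $\geq b/2$ large-variance blocks should produce the overall factor $(1-c\eta^4)^{b/2} \leq \exp(-c\eta^4 b/2) \leq \eps$ by the choice $b = \lceil (2/\eta^4)\ln(1/\eps)\rceil$.

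The main obstacle is executing this compounding cleanly, since naive conditioning on $b-1$ blocks and applying HC anti-concentration to the last one produces only a single factor of $1-\Omega(\eta^4)$, not a $b$-fold product. I plan to formalize the iteration by backward induction on $P_k := \sup_\theta \Pr[|\bS_k + \cdots + \bS_{b-1} - \theta| \leq s\tau_L]$ (so $P_0$ is the quantity of interest and $P_b = 1$ trivially), establishing a contraction $P_{k-1} \leq (1-c\eta^4)\,P_k$ for each large-variance block via a smoothing argument: $P_{k-1}$ is the supremum of the convolution of $p_k(\theta) := \Pr[|\bS_k + \cdots + \bS_{b-1} - \theta| \leq s\tau_L]$ (which is pointwise $\leq P_k$ and satisfies the invariant $\int p_k\,d\theta = 2s\tau_L$) against the distribution of $\bS_{k-1}$, and HC anti-concentration of $\bS_{k-1}$---whose standard deviation dwarfs $s\tau_L$---prevents this convolution from piling up in any short interval. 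Closing the induction back to $P_0$ will require some care about the effect of conditioning on the good events $G_k$ on the hypercontractive constant of each $\bS_k$, but the loss from this conditioning should be absorbable into the constant $c$.
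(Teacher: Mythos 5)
Your block decomposition, the use of the critical-index hypothesis to get $\tau_{j+1}^2 \leq (1-\eta^4\delta)\tau_j^2$ and hence geometrically decaying tails, and the use of Proposition~\ref{prop:HC-conc} to pay the additive $O(\ln(1/\eps))/(\eta^8 s^4)$ are all the same raw ingredients as the paper's proof. But the compounding mechanism you propose is genuinely different from the paper's, and it has a gap you have correctly sensed but not resolved.

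The gap is the claimed contraction $P_{k-1} \leq (1-c\eta^4)P_k$ for the L\'evy concentration function $P_k = \sup_\theta \Pr[|\bS_k + \cdots + \bS_{b-1} - \theta| \leq s\tau_L]$. Writing $p_k(\theta) = \Pr[|\bS_k + \cdots + \bS_{b-1} - \theta| \leq s\tau_L]$, the only structure you retain in the induction is $p_k \leq P_k$ pointwise and $\int p_k = 2s\tau_L$. From these, the ``high set'' $\{v : p_k(v) > P_k/2\}$ has Lebesgue measure at most $4s\tau_L/P_k$, but you have no control on its \emph{shape}: after several convolutions it is in general a union of many short intervals scattered over a range comparable to $\rnorm{\bS_k + \cdots + \bS_{b-1}}_2$, which is far larger than $4s\tau_L/P_k$. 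Proposition~\ref{prop:HC-anticonc} bounds the probability that an $\eta$-HC variable lies in a fixed \emph{interval}; it says nothing about the probability of lying in an arbitrary measurable set of small Lebesgue measure, and indeed no such bound can hold (a Rademacher variable is $1/\sqrt 3$-HC but puts full mass on a two-point set). So the step from ``the high set is small in measure'' to ``$\bS_{k-1}$ escapes it with probability $\Omega(\eta^4)$'' does not follow. You flag this as needing care, but the plan to save it by conditioning on the $G_k$'s does not obviously help: the conditioning can destroy hypercontractivity of $\bS_{k-1}$ (you acknowledge this), and more fundamentally the $G_k$'s constrain the \emph{magnitudes} $|\bS_j|$ but not the shape of the law of the suffix sum, which is what causes the trouble.

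The paper's proof avoids this entirely by never working at the fixed scale $s\tau_L$ in the compounding step. It defines $\by_k = \bx_{kr} + \cdots + \bx_{(k+1)r-1}$, $\upsilon_k^2 = \rnorm{\by_k}_2^2 = \tau_{kr}^2 - \tau_{(k+1)r}^2$ (note this exact identity is tighter than your lossy $V_k \geq \sigma_{kr}^2$, which is what manufactures the spurious tail of ``small-variance'' blocks in your argument; with the exact identity and~\eqref{eqn:tau-decay} one gets $\upsilon_k > 4s\tau_{(k+1)r}$ for \emph{every} $k$), and then considers the events
\[
A_k = \bigl\{ |\by_0 + \cdots + \by_k - \theta| \leq \tfrac12 \upsilon_k \bigr\}.
\]
The crucial point is that, for any fixed realization of $\by_0, \dots, \by_{k-1}$, the event $A_k$ is anti-concentration of the \emph{single} variable $\by_k$ in a fixed interval of half-width $\tfrac12 \upsilon_k$ --- i.e.\ half of $\by_k$'s own standard deviation. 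Proposition~\ref{prop:HC-anticonc} then gives $\Pr[A_k \mid \by_0, \dots, \by_{k-1}] < 1 - \eta^4/2$ pointwise, hence $\Pr[A_0 \wedge \cdots \wedge A_{b-1}] < (1-\eta^4/2)^b \leq \eps$. There is no concentration-function-of-a-suffix anywhere; each step is an interval event for one independent block, which is exactly what the one-dimensional anti-concentration bound can handle. The scales $\upsilon_k$ decrease geometrically, and the separate events $B_k = \{|\bz_k| \geq s\tau_{kr}\}$ (controlled by Proposition~\ref{prop:HC-conc}, giving the $O(\ln(1/\eps))/(\eta^8 s^4)$ term) are then used at the end to convert ``some $\overline{A_m}$ occurred'' into ``the full sum is $> s\tau_L$ away from $\theta$'' via the triangle inequality and $\upsilon_m \geq \upsilon_{b-1} > 4s\tau_L$. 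This is the idea your proposal is missing: work at the variable scale $\upsilon_k$ for the compounding and pass to the fixed scale $s\tau_L$ only once at the end.

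Two smaller remarks. First, because of the spurious tail issue, your claim that the number of small-variance blocks is ``negligible because $b = \Omega(\ln(1/\eps)/\eta^4)$'' is not justified --- the tail length involves $\ln(1/\delta)$, which can dominate $\ln(1/\eps)$ --- but as noted this disappears once you use $V_k = \tau_{kr}^2 - \tau_{(k+1)r}^2$. Second, your overall decomposition $\Pr[\text{bad}] \leq \Pr[\cup_k G_k^c] + \Pr[\text{bad}\,\cap\,\cap_k G_k]$ is structurally sound and parallels the paper's use of the $B_k$'s, but since the backward induction you propose bounds the unconditional $P_0$, the decomposition is not actually exploited as written; the paper's decomposition is tighter because the $B_k$'s are needed precisely to convert the $A_k$-based conclusion into the statement about $s\tau_L$.
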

\begin{proof}
For any $0 \leq j < L$, since the critical index $\ell$ is at least $j$ we have
\begin{align*}
\delta \tau_j^4 <  \sum_{i \geq j} \rnorm{\bx_i}_4^4  \leq (1/\eta^4)
\sum_{i \geq j} \sigma_i^4 &\text{(since each $\bx_i$ is $\eta$-HC)}  \leq (\sigma_j^2/\eta^4) \sum_{i \geq j} \sigma_i^2  = (\sigma_j^2/\eta^4) \tau_j^2. &
\end{align*}
where we used hypercontractivity and the fact that $\sigma_i$s are
ordered. Hence for all $0 \leq j < L$, 
\[
\eta^4 \delta \tau_j^2 < \sigma_j^2 = \tau_j^2 - \tau_{j+1}^2 \quad \Rightarrow \quad \tau_{j+1}^2 < (1-\eta^4 \delta) \tau_j^2.
\]
It follows that for all $0 \leq k < b$,
\begin{equation} \label{eqn:tau-decay}
\tau_{(k+1)r}^2 < (1-\eta^4 \delta)^{r} \tau_{kr}^2 < \frac{1}{1+16s^2} \tau_{kr}^2,
\end{equation}
where we used the definition of $r$.  

Now for each $0 \leq k < b$ define $\by_k = \bx_{kr} + \bx_{kr+1} +
\bx_{kr+2} + \cdots + \bx_{(k+1)r-1}$ and
$\upsilon_k^2 = \rnorm{\by_k}_2^2 = \tau_{kr}^2 - \tau_{(k+1)r}^2$.
Using~\eqref{eqn:tau-decay} we have immediately conclude
\begin{equation} \label{eqn:upsilon-vs-tau}
\upsilon_k^2 > 16s^2\tau_{(k+1)r}^2 \qquad \Rightarrow \qquad \upsilon_k > 4s \tau_{(k+1)r}.
\end{equation}
Since all of $\bx_0, \dots, \bx_{L-1}$ are independent and $\eta$-HC,
we have that $\by_0, \by_1, \dots, \by_{b-1}$ are independent
$\eta$-HC random variables.  For $0 \leq k < b$, define the event $
A_k = \text{``}|\by_0 + \by_1 + \cdots + \by_{k} - \theta| \leq (1/2) \upsilon_k\text{'',}$.
We claim that for any $0 \leq k < b$,
\[
\Pr[A_k \mid A_{0} \wedge A_1 \wedge \cdots \wedge A_{k-1}] < 1 - \eta^4/2.
\]
To see this, note that conditioning only affects the values of random variables $\by_0, \dots, \by_{k-1}$, of  which $\by_k$ is independent.  Further, for \emph{every} choice of values for $\by_0, \dots, \by_{k-1}$, the event $A_k$ is an anti-concentration event of the type in Proposition~\ref{prop:HC-anticonc}, with some shifted $\theta$.  Hence the claim follows from this Proposition, as $(1-(1/2)^2)^2 > 1/2$.  Having established the claim, we conclude
\begin{equation} \label{eqn:and_A}
\Pr[A_0 \wedge A_1 \wedge \cdots \wedge A_{b-1}] < (1- \eta^4/2)^b \leq \eps.
\end{equation}

Let us now define, for each $1 \leq k < b$, random variables $
\bz_{k} = \by_{k} + \by_{k+1} + \cdots + \by_{b-1}$.
These random variables are also $\eta$-HC, and they satisfy
$\rnorm{\bz_{k}}_2^2 \leq \tau_{kr}^2$.
If we define the events $
B_{k} = \text{``}|\bz_{k}| \geq s \tau_{kr}\text{'',}$
then Proposition~\ref{prop:HC-conc} implies $\Pr[B_k] \leq 1/\eta^4s^4$.  Hence
\begin{equation} \label{eqn:or_B}
\Pr[B_1 \vee B_2 \vee \cdots \vee B_{b-1}] \leq (b-1)/\eta^4s^4 < b/\eta^4s^4.
\end{equation}

Combining~\eqref{eqn:and_A} and~\eqref{eqn:or_B} we see that except with probability less than 
$\eps + b/\eta^4s^4 \leq \eps + \frac{O(\ln(1/\eps))}{\eta^8 s^4}$,
at least one event $\overline{A_{k}}$ occurs, and none of the events $B_k$ occurs.  Since this is the error bound in the Theorem, it remains to show that in this case, the desired result ``$|\bx_0 + \cdots + \bx_{L-1} -\theta| > s \cdot \tau_{L}$'' occurs. Assume then that $\overline{A_m}$ occurs and $B_{m+1}$ does not occur, $0 \leq m < b$.  (For $m = b-1$ we need not make the latter assumption.)  Thus
\[
|\by_0 + \by_1 + \cdots + \by_{m} - \theta| > (1/2) \upsilon_m \quad \text{and} \quad |\bz_{m+1}| \leq s \tau_{(m+1)r} < (1/4) \upsilon_{m},
\]
where we used~\eqref{eqn:upsilon-vs-tau}.  (This makes sense also in the case $m = b-1$ if we naturally define $\bz_{b} \equiv 0$.)  By definition of $\bz_{m+1}$, we therefore obtain
\[
|\by_0 + \by_1 + \cdots + \by_{b-1} - \theta| = \left|\bx_0 + \cdots + \bx_{L-1} -\theta\right| > (1/4) \upsilon_m \geq (1/4) \upsilon_{b-1} \geq s \tau_{br} = s \tau_L,
\]
as desired, where we used~\eqref{eqn:upsilon-vs-tau}.
\end{proof}

We now state the high-dimensional generalization of Theorem
\ref{thm:crit-one}.
Assume $\bx_1, \dots, \bx_n$ are $\eta$-HC real random variables which
are at least pairwise independent. Assume also that $W_1, \dots, W_n$ are arbitrary fixed
vectors in $\R^d$, and write $\bX_j = \bx_j W_j$. 
\begin{theorem}
\label{thm:crit-many}
Let $\delta, \eps, s, L$ be as in Theorem~\ref{thm:crit-one}.  Then
there exists a set of coordinates $H_0 \subseteq [n]$, $\abs{H_0} \leq
dL$, with the following property.  Assuming the collection of random
variables $\{\bx_j : j \in H_0\}$ is independent, for each coordinate
$i \in [d]$ we have either: 
\begin{tightenumerate} 
\item the sequence of real random variables $\{\bX_j[i] : j \not \in H_0\}$ is $\delta$-regular; or,
\item for all $\theta \in \R$,
\[
\Pr\left[\Bigl|\sum_{j \in H_0} \bX_j[i] -\theta\Bigr| \leq s \cdot \sqrt{\sum_{j \not \in H_0} \rnorm{\bX_j[i]}_2^2}\right] \leq \eps + \frac{O(\ln(1/\eps))}{\eta^8 s^4}.
\]
\end{tightenumerate}
\end{theorem}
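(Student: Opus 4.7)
The plan is to reduce Theorem~\ref{thm:crit-many} to $d$ invocations of the one-dimensional Theorem~\ref{thm:crit-one}, one per coordinate, and take $H_0$ to be the union of the per-coordinate ``heads''. For each $i \in [d]$, the real random variables $\bX_j[i] = W_j[i]\bx_j$ are $\eta$-HC (as scalar multiples of $\eta$-HC variables) and inherit the at-least-pairwise independence of the $\bx_j$. Reorder the indices $j$ so that $\rnorm{\bX_j[i]}_2^2 = \sigma_j^2 W_j[i]^2$ is decreasing; call this the \emph{$i$-ordering}, and let $\ell_i$ be its $\delta$-critical index. Define $H_0^{(i)}$ to be the set of indices at the first $L$ positions in the $i$-ordering, and set $H_0 = \bigcup_{i=1}^d H_0^{(i)}$, giving $\abs{H_0} \leq dL$. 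The hypothesis that $\{\bx_j : j \in H_0\}$ is independent ensures independence within each $H_0^{(i)} \subseteq H_0$, so Theorem~\ref{thm:crit-one} applies to each coordinate.

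In the case $\ell_i \geq L$, Theorem~\ref{thm:crit-one} delivers anti-concentration for $\sum_{j \in H_0^{(i)}} \bX_j[i]$ at radius $s\,\tau_L^{(i)}$, where $\tau_L^{(i)} = \sqrt{\sum_{j \notin H_0^{(i)}} \sigma_j^2 W_j[i]^2}$. I would lift this to $\sum_{j \in H_0} \bX_j[i]$ by writing
\[
\littlesum_{j \in H_0} \bX_j[i] = \littlesum_{j \in H_0^{(i)}} \bX_j[i] + Z, \qquad Z = \littlesum_{j \in H_0 \setminus H_0^{(i)}} \bX_j[i],
\]
observing that $Z$ is independent of the first sum (by independence within $H_0$), conditioning on $Z$, and invoking the fact that the Theorem~\ref{thm:crit-one} bound is uniform in the shift $\theta$. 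Finally, $s\sqrt{\sum_{j \notin H_0}\sigma_j^2 W_j[i]^2} \leq s\,\tau_L^{(i)}$ because we sum over a smaller set, so the event appearing in conclusion~(2) is contained in the event already bounded.

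In the case $\ell_i < L$, the critical-index property says the $i$-ordered tail past position $\ell_i$ is $\delta$-regular, and $\{j \notin H_0\}$ is a subset of this tail, so $\sum_{j \notin H_0}\rnorm{\bX_j[i]}_4^4$ can only shrink. The technically delicate point, and the expected main obstacle, is that passing to a subset need not preserve the ratio in Definition~\ref{def:reg}, since the variance-sum denominator can drop as well. I would handle this by combining two ingredients: first, regularity of the tail at $\ell_i$ together with $\rnorm{\cdot}_2 \leq \rnorm{\cdot}_4$ forces every element $j$ of that tail to satisfy $\sigma_j^2 W_j[i]^2 \leq \sqrt{\delta}\,(\tau_{\ell_i}^{(i)})^2$; second, hypercontractivity yields
\[
\littlesum_{j \in T}\rnorm{\bX_j[i]}_4^4 \leq (1/\eta^4) \cdot \max_{j \in T}\bigl(\sigma_j^2 W_j[i]^2\bigr) \cdot \littlesum_{j \in T}\sigma_j^2 W_j[i]^2
\]
for any $T$ inside the tail. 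Together these deliver $\delta$-regularity of $\{\bX_j[i] : j \notin H_0\}$ provided its variance sum has not collapsed relative to $(\tau_{\ell_i}^{(i)})^2$; if it has, I would absorb the few offending large-variance indices into $H_0$ and iterate, adding at most $O(L)$ extra indices per coordinate and so preserving the $O(dL)$ cap. An alternative (and simpler) route is to tolerate a mild constant-factor loss in the regularity parameter, which would be harmless for the downstream applications.
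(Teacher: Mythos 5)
Your plan to take $H_0$ as the union of per-coordinate heads, $H_0 = \bigcup_i H_0^{(i)}$, runs into exactly the obstacle you flag --- $\delta$-regularity of a collection is not inherited by subsets --- and the repair you sketch does not close the gap. Write $a_j = \rnorm{\bX_j[i]}_2^2$ and $\tau^2 = \sum_{j\,:\,\mathrm{pos}_i(j) \geq \ell_i} a_j$ for the variance sum of the $\delta$-regular tail in the $i$-ordering. Your two ingredients combine to give $\sum_{j \in T}\rnorm{\bX_j[i]}_4^4 \leq \eta^{-4}\sqrt{\delta}\,\tau^2 \sum_{j \in T} a_j$ for $T = \{j \notin H_0\}$, and deducing $\delta$-regularity of $T$ from this requires $\sum_{j \in T} a_j \geq \tau^2/(\eta^4\sqrt{\delta})$. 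But $T$ is a subset of the tail, so $\sum_{j \in T} a_j \leq \tau^2 < \tau^2/(\eta^4\sqrt{\delta})$ always (since $\eta \leq 1/\sqrt 3$ and $0<\delta<1$), so the sufficient condition is vacuous. Relaxing to $\delta'$-regularity only helps once $\delta'$ is on the order of $\sqrt{\delta}/\eta^4$, which is not a constant-factor loss and would substantially degrade the downstream parameters. The fallback of ``absorbing offending indices and iterating'' is also not analyzed and has a genuine termination/size issue: pruning indices to fix coordinate $i$ can break the regularity of a coordinate $i'$ that was already settled, and there is no bookkeeping that caps the total growth of $H_0$ at $O(dL)$.

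The paper sidesteps all of this by building $H_0$ with an \emph{interleaved greedy process}: maintain per-coordinate counters $c_i \leq L$; while some coordinate $i$ with $c_i < L$ has $\{\bX_j[i] : j \notin H_0\}$ not $\delta$-regular, move the index $j \notin H_0$ maximizing $\rnorm{\bX_j[i]}_2^2$ into $H_0$ and increment $c_i$. Termination with $\abs{H_0} \leq dL$ is immediate from the counters; regularity for each coordinate with $c_i < L$ holds by fiat of the stopping rule, because it is always checked against the \emph{final} $H_0$; and for coordinates with $c_i = L$, the $L$ indices charged to $c_i$ form, by construction, a decreasing-in-norm sequence each of whose tails is non-regular, so Corollary~\ref{cor:crit-one} applies directly --- no subset-regularity transfer is ever needed. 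Your observation that the resulting anti-concentration statement must then be lifted from a partial sum to all of $\sum_{j \in H_0}\bX_j[i]$ by conditioning on the independent remainder and absorbing it into the shift $\theta$ is correct, and is indeed also used (tacitly) in the paper's one-line appeal to the corollary; that part of your proposal is sound. But the static, per-coordinate construction of $H_0$ itself cannot be patched in the way you propose, and the iterative interleaving is the essential idea.
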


The fact that the sequence $\bx_0, \dots, \bx_n$ was ordered by
decreasing $2$-norm in Theorem~\ref{thm:crit-one} was mainly used for
notational convenience.  We can extract from the proof the following
corollary for unordered sequences (whose proof we omit):

\begin{corollary} \label{cor:crit-one} Let $\delta, \eps, s, b, r, L$ be as in Theorem~\ref{thm:crit-one}.  For the unordered collection $\bx_0, \dots, \bx_n$, assume we have a sequence of indices $0 \leq j_0 < j_1 < \cdots < j_{L-1} < n$ such that:
\begin{itemize}
\item for each $0 \leq t < L$, $\sigma_{j_t}^2 \geq \sigma_{j'}^2$ for all $j' > j_t$;
\item for each $0 \leq t < L$, $\{\bx_{j_t}, \bx_{j_t + 1}, \dots, \bx_{n}\}$ is \emph{not} $\delta$-regular.
\end{itemize}
Assume also that $\bx_0, \dots, \bx_{j_L}$ are independent.  Then for all $\theta \in \R$,
\[
\Pr\left[\left|\bx_0 + \cdots + \bx_{j_{L-1}} -\theta\right| \leq s \cdot \tau_{j_{L-1}+1} \right] \leq \eps + \frac{O(\ln(1/\eps))}{\eta^8 s^4}.
\]
\end{corollary}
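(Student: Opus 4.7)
The plan is to mimic the proof of Theorem~\ref{thm:crit-one} as closely as possible, replacing the global ordering of the $\sigma_j^2$'s by the weaker ``maximum-suffix'' property at the $L$ special indices $j_0 < j_1 < \cdots < j_{L-1}$, and replacing the critical-index hypothesis by the $L$ non-$\delta$-regularity assumptions at these indices.

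First I would establish a one-step decay at each $j_t$. Fix $0 \leq t < L$. Using non-$\delta$-regularity of $\{\bx_i : i \geq j_t\}$, the hypercontractivity bound $\rnorm{\bx_i}_4^4 \leq (1/\eta^4)\sigma_i^4$, and the hypothesis $\sigma_{j_t}^2 \geq \sigma_{j'}^2$ for all $j' > j_t$ (hence for all $j' \geq j_t$), we get
\[
\delta\tau_{j_t}^4 < \sum_{i \geq j_t}\rnorm{\bx_i}_4^4 \leq (1/\eta^4)\sum_{i \geq j_t}\sigma_i^4 \leq (\sigma_{j_t}^2/\eta^4)\tau_{j_t}^2,
\]
which rearranges to $\tau_{j_t+1}^2 < (1-\eta^4\delta)\tau_{j_t}^2$. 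Because $\tau_i^2$ is monotonically non-increasing and the $j_t$'s are strictly increasing with $j_{t+1} \geq j_t+1$, we have $\tau_{j_{t+1}}^2 \leq \tau_{j_t+1}^2 < (1-\eta^4\delta)\tau_{j_t}^2$, and iterating $r$ times yields
\[
\tau_{j_{(k+1)r}}^2 < (1-\eta^4\delta)^r\tau_{j_{kr}}^2 < \tfrac{1}{1+16s^2}\tau_{j_{kr}}^2 \quad \text{for } 0 \leq k \leq b-2,
\]
and the analogous $\tau_{j_{L-1}+1}^2 < (1-\eta^4\delta)^r\tau_{j_{(b-1)r}}^2$ by chaining through $j_{(b-1)r}, j_{(b-1)r+1}, \ldots, j_{L-1}$ and then one more step to $j_{L-1}+1$.

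Next I would set up blocks analogous to the $\by_k$'s in the proof of Theorem~\ref{thm:crit-one}, absorbing the ``leading'' small-variance terms $\bx_0, \ldots, \bx_{j_0-1}$ into the first block: let $\by_0 = \bx_0 + \cdots + \bx_{j_r-1}$, $\by_k = \bx_{j_{kr}} + \cdots + \bx_{j_{(k+1)r}-1}$ for $1 \leq k \leq b-2$, and $\by_{b-1} = \bx_{j_{(b-1)r}} + \cdots + \bx_{j_{L-1}}$, so that $\by_0 + \cdots + \by_{b-1} = \bx_0 + \cdots + \bx_{j_{L-1}}$. Writing $\upsilon_k^2 = \rnorm{\by_k}_2^2$, the decay above (combined with monotonicity of $\tau$ for the leading block) gives $\upsilon_k^2 > 16s^2\tau_{m_{k+1}}^2$, where $m_{k+1}$ is the first index not in blocks $0,\ldots,k$. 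Consequently $\upsilon_k > 4s\tau_{m_{k+1}} \geq 4s\tau_{j_{L-1}+1}$ for every $k$.

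With these blocks in hand, the remainder of the proof copies Theorem~\ref{thm:crit-one} verbatim: define the anti-concentration events $A_k = \{|\by_0 + \cdots + \by_k - \theta| \leq (1/2)\upsilon_k\}$ and the tail events $B_k = \{|\by_k + \cdots + \by_{b-1}| \geq s\tau_{m_k}\}$, use the independence of $\by_0,\ldots,\by_{b-1}$ (which holds because $\bx_0,\ldots,\bx_{j_{L-1}}$ are independent) together with Proposition~\ref{prop:HC-anticonc} to conclude $\Pr[A_0 \wedge \cdots \wedge A_{b-1}] \leq (1-\eta^4/2)^b \leq \eps$, and use Proposition~\ref{prop:HC-conc} to bound $\Pr[B_1 \vee \cdots \vee B_{b-1}] \leq b/(\eta^4 s^4)$. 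Whenever some $\overline{A_m}$ holds and all $B_k$'s fail, the triangle inequality and the bound $\upsilon_m > 4s\tau_{m_{k+1}} \geq 4s\tau_{j_{L-1}+1}$ yield $|\bx_0 + \cdots + \bx_{j_{L-1}} - \theta| > s\tau_{j_{L-1}+1}$, producing the claimed error term $\eps + O(\ln(1/\eps))/(\eta^8 s^4)$. The main bookkeeping obstacle is purely notational — carefully handling the leading indices $0,\ldots,j_0-1$ and the final ``partial'' group of $r$ non-regularity applications inside block $b-1$ — but no genuinely new estimate beyond Step~1 is needed.
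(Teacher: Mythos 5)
Your proposal is correct and follows essentially the same route the paper intended for this corollary (the paper omits the proof; a sketch appears in a commented-out block of the source). The one substantive difference is the treatment of the leading variables $\bx_0, \dots, \bx_{j_0-1}$: you absorb them into the first block $\by_0 = \bx_0 + \cdots + \bx_{j_r-1}$, whereas the paper's sketch keeps the first block starting at $j_0$ and handles $\bx_0, \dots, \bx_{j_0-1}$ by conditioning on their realization and shifting $\theta$. Both are valid. Your absorption works because enlarging the first block can only increase $\upsilon_0 = \rnorm{\by_0}_2$, so the inequality $\upsilon_0^2 = \tau_0^2 - \tau_{j_r}^2 \geq \tau_{j_0}^2 - \tau_{j_r}^2 > 16s^2\tau_{j_r}^2$ still holds, the chain $\upsilon_0 \geq 4s\upsilon_1 \geq \cdots \geq 4s\upsilon_{b-1}$ is preserved, and the anti-concentration bound from Proposition~\ref{prop:HC-anticonc} applies unchanged to the enlarged $\by_0$; if anything your version is slightly cleaner since it avoids the extra conditioning step. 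All the remaining bookkeeping — the one-step decay at each $j_t$, its $r$-fold iteration, the events $A_k$ and $B_k$, and the final triangle-inequality deduction — matches the proof of Theorem~\ref{thm:crit-one} as you describe.
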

The case when $j_t = t$ for $0 \leq t < L$ corresponds to Theorem~\ref{thm:crit-one}.

\ignore{
\begin{proof}  We follow the proof of Theorem~\ref{thm:crit-one}.  In place of its first deduction, we use that for each $0 \leq t < L$,
\[
\delta \tau_{j_{t}}^4 < \sum_{i \geq j_t} \rnorm{\bx_i}_4^4 \leq (1/\eta^4) \sum_{i \geq j_t} \sigma_i^4 \leq (\sigma_{j_t}^2/\eta^4) \sum_{i \geq j_2} \sigma_i^2 = (\sigma_{j_t}^2/\eta^4) \tau_{j_t}^2.
\]
Here we used both properties of the sequence $j_0, \dots, j_{L-1}$.  Thus (defining also $j_{L} = j_{L-1}+1$),
\[
\eta^4 \delta \tau_{j_t}^2 < \sigma_{j_t}^2 \leq \tau_{j_t}^2 - \tau_{j_{t+1}}^2 \quad\Rightarrow\quad \tau_{j_{t+1}}^2 < (1-\eta^4 \delta)\tau_{j_t}^2.
\]
It follows that we can extract a subsequence of the $j_t$'s of length $b$ --- call it $j'_0, j'_1, \dots, j'_{b-1}$ --- such that
\[
\tau_{j'_{k+1}}^2 < \frac{1}{1+16s^2}\tau_{j'_k}^2.
\]
Once we define
\[
\by_k = \bx_{j'_k} + \bx_{j'_k + 1} + \cdots + \bx_{j'_{k+1}-1},
\]
the remainder of the proof is essentially the same.  The only additional trick is to handle the random variables $\bx_0, \dots, \bx_{j'_0-1}$ by noting that the anti-concentration holds for \emph{every} realization of these variables, by modifying $\theta$.\rnote{this whole proof is poorly explained}
\end{proof}}

We now prove Theorem \ref{thm:crit-many}.

\begin{proof}
We construct $H_0$ according to an iterative process. Initially, $H_0
= \emptyset$, and we define $c_i = 0$ for all $i \in [d]$.  In each
step of the process,  we do the following: First, we select any $i$
such that $c_i < L$ and such that the collection $\{\bX_j[i] : j \not
\in H_0\}$ is \emph{not} $\delta$-regular.  If there is no such $i$
then we stop the whole process.  Otherwise, we continue the step by
choosing $j \in [n] \setminus H_0$ so as to maximize
$\rnorm{\bX_j[i]}_2^2$.  We then end the step by adding $j$ into $H_0$
and incrementing $c_i$. 

Note that the process must terminate with $\abs{H_0} \leq dL$; this is
because each step increments one of $c_1, \dots, c_d$, but no $c_i$
can exceed $L$. When the process terminates, for each $i$ we have
either that $\{\bX_j[i] : j \not \in H_0\}$ is $\delta$-regular or
that $c_i = L$. 

It suffices then to show that when $c_i = L$, the anti-concentration
statement holds for $i$.  To see this, first reorder the sequence of
random variables $(\bX_j[i])_j$ so that the first $\abs{H_0}$ are in
the order that the indices were added to $H_0$, and the remaining $n -
\abs{H_0}$ are in an arbitrary order.  Write $1 \leq j_0 < j_1 <
\cdots < j_{L-1} \leq \abs{H_0}$ for the indices that were added to
$H_0$ on those steps which incremented $c_i$.  Then by the definition
of the iterative process, for each $0 \leq t < L$ we have that
$\rnorm{\bX_{j_t}[i]}_2^2 \geq \rnorm{\bX_{j'}[i]}_2^2$ for all $j' >
j_t$ and that $\{\bX_{j_t}[i], \bX_{j_t+1}[i], \cdots, \bX_{n}[i]\}$
is not $\delta$-regular. The anti-concentration statement now follows
from Corollary~\ref{cor:crit-one}. 
\end{proof}

\section{The Meka-Zuckerman Generator}
\label{ap:mzg}

\newcommand{\greg}{G}
\newcommand{\gzs}{r_0}
\newcommand{\setsize}{\ell}

\newcommand{\fr}[1]{\frac{1}{#1}}
\newcommand{\sd}{\mathrm{SD}}

For the Meka-Zuckerman generator, the first step is to reduce the problem of fooling
functions of halfspaces under an arbitrary $C$-bounded product distribution to fooling 
an $O(C)$-bounded {\em discrete} product distribution with
support $\poly(n,C, \eps^{-1})$ in each co-ordinate.

\begin{lemma}
\label{lem:d-halfspaces}
Given a $C$-bounded distribution $\bX$, there is a discrete product
distribution $\bY$ such that if $f: \mb{R}^n \rightarrow \pmo$ is a
function of $d$ halfspaces $\{h_i:\mb{R}^n \rightarrow \pmo\}_{i \in [d]}$,
then
$$|\E[f(\bX)] - \E[f(\bY)]| \leq O\left(\frac{d\eps^2}{nC^2}\right).$$ 
Each $\by_i$ is distributed uniformly over a multiset $\domain_i = \{b_1(i) \leq
\cdots \leq b_g(i)\}$ where $|b_j(i)| \leq (nC^2\eps^{-1})^\frac{1}{4}$. 
For every $i$, we have $|\Omega_i| = 2^s = O(n^2C^2\eps^{-2})$ and
Further $\E[\by_i] = 0, \E[\by^2_i] =1, \ \E[\by^4_i]  \leq O(C)$. 
\end{lemma}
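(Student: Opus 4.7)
The plan is to construct $\by_i$ by first truncating $\bx_i$ to $[-T, T]$ with $T = (nC^2/\eps)^{1/4}$, then quantizing the truncated variable to a grid of $2^s = O(n^2 C^2 / \eps^2)$ points, and finally applying a small affine correction to restore the moment conditions $\E[\by_i] = 0$, $\E[\by_i^2] = 1$, $\E[\by_i^4] \leq O(C)$. Concretely, let $\wt{\bx}_i$ be obtained by clipping $\bx_i$ to $[-T, T]$; partition $[-T, T]$ into $2^s$ cells and let $\by_i$ round $\wt{\bx}_i$ to the cell-representative. The resulting probability mass function on $2^s$ values is realized as uniform over a multiset by assigning each value a multiplicity proportional to its probability (possibly after a benign refinement of $s$). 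Markov applied to the fourth-moment hypothesis gives $\Pr[|\bx_i|>T] \leq C/T^4 = \eps/(nC)$, which is the base tail estimate we rely on throughout.

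For the approximation analysis, it suffices to show that for any single halfspace $h(X) = \sgn(W\cdot X - \theta)$ with $\|W\|_2 = 1$ WLOG, $|\E[h(\bX)] - \E[h(\bY)]| \leq O(\eps^2/(nC^2))$; the claimed bound for $f$ then follows by a coupling across the $d$ halfspaces. I would split the perturbation $\bX \mapsto \bY$ into a truncation step $\bX \mapsto \wt{\bX}$ and a quantization step $\wt{\bX} \mapsto \bY$. By Cauchy--Schwarz applied coordinate-wise, the truncation tail satisfies $\E[(W\cdot(\bX - \wt{\bX}))^2] \leq \|W\|_2^2\cdot O(C/T^2)$, which is polynomially small in $\eps$ and $n$ by the choice of $T$; the quantization step is deterministic with $|W\cdot \wt{\bX} - W\cdot \bY| \leq (2T/2^s)\|W\|_1 \leq (2T/2^s)\sqrt{n}$, again polynomially small. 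To convert these magnitude bounds into bounds on the halfspace indicator, I would invoke anti-concentration of $W\cdot \bX$ near $\theta$: for regular halfspaces this is immediate from the multidimensional Berry-Esseen theorem (Theorem~\ref{thm:be}); for arbitrary halfspaces, I would apply the critical-index decomposition (Theorem~\ref{thm:crit-many}) to extract a constant-size head $H_0$ and condition on it, leaving a regular halfspace on $[n]\setminus H_0$. On $H_0$ the deterministic bound $|\bx_j|\leq T$ controls the head perturbation term-by-term.

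The main obstacle is meeting the very tight per-halfspace target $O(\eps^2/(nC^2))$ while handling irregular halfspaces, where a single dominant coordinate destroys the Gaussian-like anti-concentration of $W\cdot \bX - \theta$. This is exactly what the critical-index machinery is designed for, but combining it with the truncation/quantization bounds requires careful bookkeeping: the grid scale $2T/2^s$ and the truncation level $T$ are calibrated precisely so that the head contribution (bounded deterministically via $|\bx_j|\leq T$) and the tail contribution (bounded via Berry-Esseen applied to the now-regular residual) both come in comfortably under the error budget. A secondary subtlety is verifying that the final affine correction restoring the first two moments is close enough to the identity that the fourth-moment bound $\E[\by_i^4]\leq O(C)$ and the range bound $|b_j(i)| \leq (nC^2/\eps)^{1/4}$ are preserved.
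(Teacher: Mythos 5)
Your truncation step matches the paper's (truncating at $B=(nC^2\eps^{-1})^{1/4}$ and using Markov on the fourth moment), but the discretization-and-comparison step has a real gap: the magnitude-plus-anti-concentration route cannot achieve the required per-halfspace error $O(\eps^2/(nC^2))$.

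The paper's discretization is \emph{CDF-based}: for each coordinate it defines bucket boundaries $b_k$ by $F_i(b_k)=k\gamma$, and produces two discrete random variables $\bx_i^\ell$ (uniform on $\{b_0,\dots,b_{g-1}\}$) and $\bx_i^u$ (uniform on $\{b_1,\dots,b_g\}$) with $\sd(\bx_i^\ell,\bx_i^u)\leq\gamma$. For a single halfspace $h(x)=\sgn(\sum_i w_i x_i-\theta)$, depending on the sign of $w_i$ the paper picks which of $\bx_i^\ell,\bx_i^u$ is the lower and upper variable, and builds a coordinate-wise coupling so that $h(\bY^\ell)\leq h(\bX)\leq h(\bY^u)$ holds \emph{pointwise}. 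Then $\Pr[h(\bX)\neq h(\bY^u)]\leq\sd(\bY^\ell,\bY^u)\leq\gamma n$ with no anti-concentration anywhere, and the union bound over $d$ halfspaces gives $O(\gamma dn)$. Choosing $\gamma=O(\eps^2/(n^2C^2))$ gives the stated bound.

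Your proposal instead bounds $|W\cdot\wt{\bX}-W\cdot\bY|$ in magnitude (via the grid resolution and Cauchy--Schwarz) and then wants to convert this into a sign-flip probability by anti-concentration, invoking the multidimensional Berry--Esseen (Theorem~\ref{thm:be}) and the critical-index decomposition (Theorem~\ref{thm:crit-many}). This cannot hit the target for two reasons. First, even for a perfectly regular halfspace, Berry--Esseen gives anti-concentration of $W\cdot\bX$ only up to an \emph{additive} regularity error (on the order of a positive power of $\sum_j\sigma_j^4$), which is typically $\Omega(1/\sqrt{n})$ or worse and thus dwarfs the required $O(\eps^2/(nC^2))$. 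Second, the paper explicitly wants to cover arbitrary discrete product distributions and arbitrary (possibly very irregular) halfspaces; in that generality $W\cdot\bX$ can have atoms of constant mass near $\theta$, so a magnitude-to-probability conversion is simply false no matter how you condition on a head. The critical index lets you reduce an irregular halfspace to a regular one after conditioning, but it does not give you per-halfspace anti-concentration at scale $\eps^2/(nC^2)$; it was designed for the much coarser error budget used elsewhere in the paper. A related secondary issue: your equal-width quantization of $[-T,T]$ does not naturally produce a distribution that is uniform over a multiset of $2^s$ values of bounded magnitude; the paper's equal-probability-mass buckets are what make the ``uniform over a multiset'' representation and the statistical-distance bound $\sd(\bx_i^\ell,\bx_i^u)\leq\gamma$ come out exactly right. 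In short, the sandwiching/coupling trick is the essential missing idea, and it is what replaces the (untenable) anti-concentration step.
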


We are interested in $d << n$, so the error in going from $\bX$ to $\bY$ is
$o(\eps^2)$. Since $|\domain_i|  =2^s$ for all $i$, sampling $k$-wise independently from $\bY$
reduces to generating $n$ strings of length $s$ in a
$k$-wise independent manner: this can be done using $k\max(\log n,s) =
O(k\log(nC/\eps))$ random bits.

This lemma is proved by {\em sandwiching} $\bX$ between two discrete product distributions $\bY^u$ and
$\bY^\ell$ which are close to each other in statistical distance. The
proof is in section~\ref{discretize}. Henceforth, we will rename
$\bY$ as $\bX$ and focus on fooling discrete product distributions.

We now describe the main generator of Meka-Zuckerman, modified so that
random variables take values in $\prod_j\domain_j$ instead of simply $\pm
1$. At a high level, the generator hashes variables into buckets and uses bounded
independence for the variables within each bucket. We use a weaker
property of hash functions than used in \cite{MZ09}.

The generator first picks a partition of $[n] = H_1 \cup \ldots \cup H_t$ 
using a random element from~$\calH$, a 1-collision preserving family 
of hash functions.
For each $i \in [t]$, it then generates a 5-wise independent distribution $(\by_j)_{j \in H_i}$ on
$\prod_{j\in H_i}\domain_j$. Such a distribution on $n$ random
variables can be generated using a seed of length $k \log
\max(n,|\Omega|)$. These $t$ distributions are chosen independently. The generator
outputs $\bY = (\by_1,\ldots,\by_n)$. The seedlength required is
$\log(2n) + 5t\log \max(n,|\Omega|)$ where $\log(2n)$ are required for
the hash function and $5\log\max(n,|\Omega|)$ bits are needed for each
$H_i$, $i\in [t]$.

\section{Analyzing the Meka-Zuckerman Generator}
\label{ap:mzg-analysis}

We first prove that the indices in the set $H_0$ are
likely to be hashed into distinct buckets.

\begin{definition}
\label{defhash-isolating}
A hash function $h:[n] \to [t]$ is 
\emph{$S$-isolating} if for all $x \neq y \in S$, $h(x) \neq h(y)$.
A family of hash functions $\calH = \set{h:[n] \to [t]}$ is 
$(\setsize,\beta)$-isolating if for any $S \subseteq [n]$, $|S| \leq \setsize$, 
\[
  \Pr_{h \in_u \calH}[\mbox{$h$ is not $S$-isolating}] \leq \beta .
\]
\end{definition}

A $\collision$-collision preserving hash family is likely to be isolating for
small sets:

\begin{lemma}
Assume $t$ is a power of 2. A $\collision$-collision preserving family of hash functions $\calH =
\set{h:[n] \to [t]}$ is $(\setsize,\beta)$-collision free for $\beta =
\collision\setsize^2/(2t)$.  
\end{lemma}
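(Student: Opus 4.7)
The statement is essentially a union bound over pairs, using the second property in Definition~\ref{def:hash-collision}. My plan is to fix an arbitrary $S \subseteq [n]$ with $|S| \leq \setsize$ and bound the probability that $\bh \sim \calH$ fails to be $S$-isolating.

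First, I would observe that $\bh$ fails to be $S$-isolating precisely when there exist distinct $x,y \in S$ with $\bh(x)=\bh(y)$. Writing this as the union of the events $\{\bh(x)=\bh(y)\}$ over all unordered pairs $\{x,y\}\subseteq S$, and applying the union bound, the failure probability is at most $\binom{|S|}{2}$ times the worst-case pairwise collision probability. By the $\collision$-collision preserving property (item~2 of Definition~\ref{def:hash-collision}), each such pairwise probability is at most $\collision/t$. Hence
\[
\Pr_{\bh \in_u \calH}[\text{$\bh$ is not $S$-isolating}] \leq \binom{|S|}{2}\cdot \frac{\collision}{t} \leq \frac{\setsize(\setsize-1)}{2}\cdot\frac{\collision}{t} \leq \frac{\collision\setsize^2}{2t}.
\]
Since this bound is uniform over the choice of $S$ (depending only on $|S|\leq \setsize$), the family is $(\setsize,\collision\setsize^2/(2t))$-isolating as claimed.

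The argument uses nothing beyond pairwise collision control, so the hypothesis that $t$ be a power of $2$ is not actually needed for this bound; it is presumably included because the efficient constructions of small $\collision$-collision preserving families referenced earlier in the section assume it. There is no real obstacle here: the only subtlety to keep in mind is the factor-of-two improvement from counting unordered rather than ordered pairs, which is exactly what produces the $\setsize^2/(2t)$ in the bound rather than $\setsize^2/t$.
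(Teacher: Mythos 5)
Your argument is correct and matches the paper's: the paper also bounds the failure probability by $\binom{|S|}{2}\cdot b/t$ (phrased as bounding the expected number of collisions, which is the same union/Markov bound). Your side observation that the power-of-two hypothesis is not used for this bound, and belongs only to the ensuing construction of an explicit small $1$-collision-preserving family, is also accurate.
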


\begin{proof}
The expected number of collisions for a set $S$ is at most
\[ {|S| \choose 2} \frac {\collision}{t} \leq \frac{\collision\ell^2}{2t}.\]

By increasing $n$ to the next largest power of 2, since $t$ is a power of 2,
there is a field $\F$ of size $n$ where $t|n$.  Then there is a hash
family of size $n$ for $\collision=1$.  For any element $a \in \F$, define a hash function $h_a(x) = (ax) \bmod t$.  Here $x$ is viewed as a field element, the multiplication is done in the field, and
the product is then viewed as a nonnegative integer less than $n$ before taking the mod.
We can increase $n$ and $t$ to be the nearest powers of 2. We can therefore take $\calH$ to have size at most
$2n$ for $\collision=1$. 
\end{proof}

We want the set $H_0$ to be isolated with error $\eps$, so we want
$\setsize = dL$ and $\beta = \epsilon$. Hence we set $t$ to be the
smallest power of 2 larger than $\setsize^2/\epsilon = (dL)^2/\eps$.  

We will aim to achieve error $O(d\eps)$ (rather than $O(\eps)$), as
this makes  the notation easier. We set the parameters $s$ and
$\delta$ in Theorem \ref{thm:crit-many} as
\[ s = 1/(\eta^2 \sqrt \eps),
 \ \ \delta = \frac{\eta^4\eps^8}{d^{7}}.
 \]
 This implies that
 \[ L =O\left(\frac{\log(s) \log(1/\eps)}{\eta^8}\right)
 \cdot\frac{1}{\delta} =  O \left(\frac{d^{7}\log^2(\eps\eta)}{\eta^{12}\eps^8} \right),
\ \ t = O \left(\frac{(dL)^2}{\eps}\right) = O\left(\frac{d^{15}\log^4(\eps\eta)}{\eta^{24}\eps^{17}} \right). \]

\ignore{
We will need $s$ and $\delta$ to satisfy:
\begin{enumerate}
\item
$\frac{O(\log \eps^{-1})}{\eta^8 s^4} \leq \eps$.
\ignore{\dnote{Could even do $\leq d\eps$.}}
\item
$1/s^2 \leq \eps.$
\item
\label{long}
${\eta}^{-1/2} d^{15/8} \cdot (t^{-1} + \delta)^{1/8}
    \leq O(d\eps).$
\end{enumerate}

Since $t \geq L \geq 1/\delta$, condition~\ref{long} is equivalent to
\[ {\eta}^{-1/2} d^{7/8} \delta^{1/8}
    \leq O(\eps).
    \]

The following choices are sufficient:}

\subsection{Analysis for functions of regular halfspaces}

Recall that our goal is to fool functions of $\vsgn(\sum_jx_jW_j -
\theta)$. Let $\bY_j=\by_jW_j$ and $\bT = \sum_{j=1}^n \bY_j$. Similarly let $\bX_j = \bx_jW_j$
and $\bS = \sum_{j=1}^n \bX_j$. Thus we are interested in bounding
$$|\Pr_{\bX}[\bS \in A] - \Pr_{\bY}[\bT \in A]|$$
where $A$ is a translate of union of  orthants: membership  of a point $X \in
\mathbb{R}^d$ in $A$ is a function of $\vsgn(X -\Theta)$.  By rescaling the $W_j$ and $\Theta$, we may assume without loss of
generality that 
\[
\cmatrix[i,i] = \sum_{j=1}^n \E\left[\bThm_j[i]^2\right] = 1 \qquad \text{for all $i \in [d]$.}
\]
The regular case is when the vectors $W_1,\ldots,W_n$ are such that
for every $i$, the sequence of random variables $\{\bX_j[i]\}_{j=1}^n$
is $\delta$-regular. In this case, we can directly appeal to the
Berry-Esseen theorem to prove th correctness of the MZ generator.

\begin{theorem}
\label{thm:mz-regular}
If the sequence of random variables $\{\bX_j[i]\}_{j=1}^n$ is
$\delta$-regular for all $i \in [d]$, then the MZ generator
$O(d\eps)$-fools any function of $\sgn(W\cdot X - \Theta)$ for all
$\Theta \in \mb{R}^d$. 
\end{theorem}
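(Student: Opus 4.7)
The plan is to apply Theorem~\ref{thm:very-long} with $\bThm_j = \bX_j = \bx_j W_j$ and $\bDerand_j = \bY_j = \by_j W_j$. That theorem bounds $|\Pr[\bS \in A] - \Pr[\bT \in A]|$ for translates of unions of orthants in precisely the setting we need---independent $\eta$-HC vectors on the ``true'' side versus a bucketed, locally bounded-independent construction on the derandomized side---so the only work will be to check its hypotheses and then evaluate the right-hand side using our regularity assumption and the chosen parameters $\delta$ and $t$.

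Most of the hypotheses are bookkeeping. Each $\bX_j$ is mean-zero, independent across $j$, and $\R^d$-valued $\eta$-HC by the vector-valued hypercontractivity fact from Section~\ref{sec:hyper} applied to the scalar $\eta$-HC variable $\bx_j$ and the fixed vector $W_j$. The covariance normalization $\sum_j \E[\bX_j[i]^2] = 1$ for all $i$ is obtained by the rescaling of $W_j$ and $\Theta$ noted just before the statement. On the derandomized side, I will verify the conditions of Proposition~\ref{prop:derand}: the partition $\bH_1, \dots, \bH_t$ is drawn from a $1$-collision preserving family, the $(\by_j)_{j \in \bH_\ell}$ are $5$-wise (hence $4$-wise) independent with the same one-variable marginals as the $\bx_j$, and the buckets are mutually independent. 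Together these properties yield the $4$-matching-moments condition required to invoke Theorem~\ref{thm:very-long}.

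The only genuine calculation is to convert the per-coordinate $\delta$-regularity into a bound on the vector quantity $\sum_j \sigma_j^4$ appearing in the error term. Since $\E[\bx_j^2] = 1$ implies $\E[\bx_j^4] \geq 1$, we have $\rnorm{\bX_j[i]}_4^4 = W_j[i]^4 \E[\bx_j^4] \geq W_j[i]^4$, while $\tau_0^2 = \sum_j \rnorm{\bX_j[i]}_2^2 = \sum_j W_j[i]^2 = 1$. Thus $\delta$-regularity of $\{\bX_j[i]\}_j$ forces $\sum_j W_j[i]^4 \leq \delta$ for each $i \in [d]$. Combining with the Cauchy--Schwarz bound $\sigma_j^4 = \bigl(\sum_i W_j[i]^2\bigr)^2 \leq d \sum_i W_j[i]^4$ and summing over $j$ and $i$ yields $\sum_j \sigma_j^4 \leq d^2 \delta$. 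Plugging the prescribed $\delta = \eta^4\eps^8/d^7$ and $t = \Omega(d^{15}/\eps^{17})$ into the bound from Theorem~\ref{thm:very-long},
\[
O(\eta^{-1/2} d^{13/8})\Bigl(\tfrac{d^2}{t} + d^2\delta\Bigr)^{1/8} \;=\; O(\eta^{-1/2} d^{13/8}) \cdot O\!\bigl(\eta^{1/2} \eps / d^{5/8}\bigr) \;=\; O(d\eps),
\]
which is exactly the claimed error. I expect this coordinate-to-vector conversion to be the only non-mechanical step; everything else is direct application of the already-proved black-box results from Sections~\ref{sec:hyper} and~\ref{sec:berry-e}.
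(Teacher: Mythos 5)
Your proof is correct and follows essentially the same route as the paper's: verify the hypotheses of Theorem~\ref{thm:very-long} (mean zero, $\eta$-HC via the vector-valued fact, covariance normalization, $4$-matching-moments from $5$-wise independence within buckets, and the $1$-collision-preserving hash family), then convert per-coordinate $\delta$-regularity into the bound $\sum_j \sigma_j^4 \leq d^2\delta$ via Cauchy--Schwarz, and plug in the chosen $\delta$ and $t$. The paper obtains $\sum_j \sigma_{i,j}^4 \leq \delta$ directly from the monotonicity $\rnorm{\cdot}_2 \leq \rnorm{\cdot}_4$, whereas you derive the same inequality from the normalization $\E[\bx_j^2]=1$ plus Jensen ($\E[\bx_j^4]\geq 1$); since $\sigma_{i,j}^4 = W_j[i]^4$ under that normalization, these are the same bound, and the remaining algebra matches the paper's.
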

\begin{proof}
We can therefore apply the machinery developed above.
For the regular case, we only need to use 4-wise independence.
Thus, the random variables $\bY_1, \dots, \bY_n$ satisfy the
$4$-matching-moments condition with respect to $\bX_1, \dots, \bX_n$,
as defined in Subsection~\ref{sec:derand}. 

The definition of $\delta$-regular is given in
Definition~\ref{def:reg}.
Let $\sigma_{i,j} = \rnorm{\bX_j[i]}_2$.
Suppose that for all~$i$, the set of real random variables $\set{\bX_j[i]}$ is $\delta$-regular, i.e.,
\[
\sum_{j=1}^n \rnorm{\bX_j[i]}_4^4 \leq \delta \Bigl(\sum_{j=1}^n \rnorm{\bX_j[i]}_2^2\Bigr)^2 = \delta (\sigma_{i,1}^2 + \dots + \sigma_{i,n}^2)^2 = \delta,
\]
where the last equality is from our normalization.
We wish to apply Theorem \ref{thm:very-long}.
Since $\sigma_{i,j} = \rnorm{\bX_j[i]}_2 \leq \rnorm{\bX_j[i]}_4$, we conclude that for all $i$,
\[
\sum_{j=1}^n \sigma_{i,j}^4 \leq \sum_{j=1}^n \rnorm{\bX_j[i]}_4^4 \leq \delta.
\]
Since $\sigma_j^2 = \sum_{i=1}^d \sigma_{i,j}^2$, by Cauchy-Schwarz we get
\[
\sigma_j^4 = \left(\sum_{i=1}^d \sigma_{i,j}^2 \right)^2 \leq d \left( \sum_{j=1}^n \sigma_{i,j}^4 \right) \leq d\delta.
\]
Therefore $\sum_{j=1}^n \sigma_j^4 \leq d^2 \delta$.  Hence we can apply Theorem~\ref{thm:very-long}
to obtain
\[
\abs{\Pr[\bS \in A] - \Pr[\bT \in A]} \leq O \left( (1/\eta)^{1/2}
d^{15/8}) \cdot (t^{-1} + \delta)^{1/8} \right) \leq O(d\eps).
\]
where the last inequality follows from the choice of $t, \delta$.
\end{proof}

\subsection{Analysis for functions of general halfspaces}
\newcommand{\reg}{\mathrm{REG}}
\newcommand{\junta}{\mathrm{JUNTA}}

We now combine Theorem \ref{thm:crit-many} with the analysis of the
Regular case (Theorem \ref{thm:mz-regular}), to prove that the MZ generator fools functions of
arbitrary halfspaces.

\begin{theorem}
\label{thm:mz-main}
The MZ generator $O(d\eps)$-fools any function of $d$ halfspaces with seed length
$$O(t\log(\max(n,|\Omega|))) = O \left(\frac{d^{15}\log^4(\eps\eta)\log(n/\eps\eta)}{\eta^{24}\eps^{17}} \right).$$
\end{theorem}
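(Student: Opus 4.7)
The plan is to combine Theorem \ref{thm:crit-many} (the multi-dimensional critical index) with Theorem \ref{thm:mz-regular} (the regular case) through a conditioning argument on the head variables. I would first apply Theorem \ref{thm:crit-many} to the sequence $\bX_j = \bx_j W_j$ to obtain a head set $H_0 \subseteq [n]$ of size at most $dL$, inducing a partition $[d] = R \cup J$, where $R$ consists of coordinates $i$ for which the residual sequence $\{\bX_j[i] : j \notin H_0\}$ is $\delta$-regular and $J$ contains the remaining ``head-dominated'' coordinates. Writing $\bS_0 = \sum_{j \in H_0} \bX_j$ and $\tau_i^2 = \sum_{j \notin H_0}\rnorm{\bX_j[i]}_2^2$, the theorem guarantees that for each $i \in J$, the event $|\bS_0[i]-\Theta[i]| > s\tau_i$ holds with probability at least $1-\eps-O(\log(1/\eps))/(\eta^8 s^4) \geq 1-2\eps$ under our choice $s=1/(\eta^2\sqrt\eps)$.

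Next I would use the $1$-collision-preserving hash family with $t \geq (dL)^2/\eps$ buckets, so that all indices of $H_0$ land in distinct buckets, an event $\calE$ with $\Pr[\calE]\geq 1-\eps$. Conditional on $\calE$, full independence across buckets and the correct $1$-wise marginals within each bucket force the joint distribution of $(\by_j)_{j\in H_0}$ under the MZ generator to equal the true joint distribution of $(\bx_j)_{j\in H_0}$ exactly, so the critical-index anti-concentration transfers verbatim to the MZ side. A union bound over $i\in J$ then shows that with probability at least $1-O(d\eps)$ over the head assignment $v$, the event $\calG(v) = \{|v_0[i]-\Theta[i]| > s\tau_i \text{ for all }i\in J\}$ holds, where $v_0=\sum_{j\in H_0}v_j W_j$.

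Fix such a head assignment $v$. I would then argue that the $J$-coordinates of $\vsgn(\bS-\Theta)$ collapse to the deterministic vector $(\sgn(v_0[i]-\Theta[i]))_{i\in J}$ provided $|\bS_1[i]|\leq s\tau_i$ for every $i\in J$, which by Chebyshev's inequality happens with probability at least $1-d/s^2 = 1-O(d\eps)$. This Chebyshev bound applies uniformly under both the true distribution and the MZ distribution conditional on $\bY_{H_0}=v$, because conditioning a $5$-wise independent collection with correct marginals on one coordinate preserves $4$-wise (hence pairwise) independence and the remaining marginals. Once the $J$-coordinates are frozen, $f(\vsgn(\bS-\Theta))$ reduces to a function $f_v$ of the $|R|\leq d$ regular halfspaces indexed by $R$ with shifted thresholds $\Theta[i]-v_0[i]$. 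The conditional distribution of $(\by_j)_{j\notin H_0}$ now satisfies exactly the hypotheses of Theorem \ref{thm:mz-regular} ($4$-wise within each bucket, independent across buckets, under a $1$-collision-preserving hash), so that theorem gives an $O(d\eps)$-bound on the True-vs-MZ discrepancy for $f_v$. Integrating over $v$ and summing the error contributions $\Pr[\calE^c]$, $\Pr[\calG^c]$, the Chebyshev slack, and the regular-case error, each $O(d\eps)$ at worst, yields the desired total bound.

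The main obstacle I anticipate is tracking the conditional independence structure carefully once the head block $\bY_{H_0}$ is fixed: one must verify that within each bucket the surviving variables remain $4$-wise independent with the correct marginals (which is what makes both Chebyshev and Theorem \ref{thm:mz-regular} apply with the same variances as in the true distribution), and that the restriction of the hash to $[n]\setminus H_0$ is still $1$-collision-preserving. The seed length is then routine: $O(\log n)$ bits select the hash, and each of the $t$ buckets uses $5\log\max(n,|\Omega|)=O(\log(n/\eps\eta))$ bits, so substituting $t=O(d^{15}\log^4(\eps\eta)/(\eta^{24}\eps^{17}))$ from the parameter choices above gives exactly the stated seed length.
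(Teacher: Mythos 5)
Your proposal follows essentially the same route as the paper: apply the multi-dimensional critical index theorem to extract a head set $H_0$ of size at most $dL$; condition on the hash isolating $H_0$ so the head variables are fully independent under the generator; use the anti-concentration guarantee plus Chebyshev to freeze the junta-type coordinates; then invoke the regular-case analysis (Theorem~\ref{thm:mz-regular}) on the residual halfspaces, and total up the error terms. One small point to tighten: the hash restricted to $[n]\setminus H_0$ after conditioning on the collision-free event is \emph{not} still $1$-collision-preserving, but the paper resolves this by observing that conditioning on an event of probability $\geq 1-\eps$ can only blow up other probabilities by a $1/(1-\eps)\leq 2$ factor, making the conditioned family $\collision$-collision-preserving with $\collision\leq 2$ — and the Berry--Esseen derandomization machinery (Proposition~\ref{prop:derand}, Theorem~\ref{thm:very-long}) was set up to tolerate any constant $\collision$.
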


\begin{proof}
Apply Theorem \ref{thm:crit-many} with these parameters.  Then there 
exists a set $H_0 \subseteq [n]$ of size at most $dL$
such that the coordinates $[d]$ can be partitioned into two sets, $\reg$ and $\junta$, such that the
     following holds. 

\begin{enumerate} 
\item For $i \in \reg$, the set of real random variables $\{\bX_j[i] : j \not \in H_0\}$ is $\delta$-regular.
\item
\label{casejunta}
For $i \in \junta$, for all $\theta \in \R$,
\begin{equation}
\label{junta-bound}
\Pr\left[\Bigl|\sum_{j \in H_0} \bX_j[i] -\theta\Bigr| \leq s \cdot
  \sqrt{\sum_{j \not \in H_0} \rnorm{\bX_j[i]}_2^2}\right] \leq \eps +
\frac{O(\log(1/\eps))}{\eta^8 s^4} \leq 2\eps
\end{equation}
\end{enumerate}

\ignore{
We say that the hash function $\mathbf{h}$ is good if it is $\balance$-balanced for
$H_0$; this event happens with probability $1 -1/t^2 > 1 -\eps^{16}$. We condition on
this event.
}

We condition on the hash function $h$ being $S$-collision free, which happens with probability at least $1-\eps$.
Therefore, at most one variable from $H_0$ lands in each set in the partition. Since the
distribution in each partition set is 5-wise independent, this
means that the distribution on $H_0$ is fully independent. This allows
us to construct a coupling of $\bX$ and $\bY$: let $\bX_j = \bY_j$ for $j
\in H_0$, and then sample the rest according to the correct marginal
distribution. 

We say that the variables in $H_0$ are good if 
$$\Bigl|\sum_{j \in H_0} \bY_j[i] -\theta[i] \Bigr| > s \cdot
\sqrt{\sum_{j \not \in H_0} \rnorm{\bY_j[i]}_2^2} \ \text{for all} \ i
\in V$$
By Equation~\ref{junta-bound},
\begin{equation}
\label{eq:bad-1}
\Pr[\{\bX_j = \bY_j\}_{j \in H_0} \ \text{are not good}] \leq 2d\eps.
\end{equation}
We condition on these variables being good.

With this conditioning, we show that the halfspaces in $\junta$ are
nearly constant:  with high probability they do not depend on the variables outside $H_0$. To see
this, observe that conditioned on the variables in $H_0$, the
remaining variables are still 4-wise independent (in both  $\bX$ and $\bY$), so by Chebychev
\begin{equation}
\label{eq:bad-2}
\Pr\left[\Bigl|\sum_{j \not\in H_0} \bY_j[i] \Bigr| \geq s \cdot
  \sqrt{\sum_{j \not \in H_0} \rnorm{\bY_j[i]}_2^2} \right] \leq 1/s^2
  \leq \eps. 
\end{equation}
But if this does not happen, then
$$\sign(\sum_{j=1}^n \bY_j[i] - \Theta[i]) = \sign(\sum_{j \in H_0} \bY_j[i] - \Theta[i])].$$
A similar analysis holds for $\bX$. Thus for both $\bX$ and $\bY$, with
error probability at most $2d/s^2 \leq 2d\eps$, we can assume that the
halfspaces in $\junta$ are fixed to constant functions for a good choice
of variables in $H_0$. 

Recall that we are interested in fooling functions of the form
$g(h_1(\bX),\ldots,h_k(\bX))$. Conditioned on the variables in $H_0$ being
good, the halfspaces $h_j$ for $j \in \junta$ are close to constant
functions. Thus, the function $g$ is $2d\eps$ close to a function $g'$
of halfspace $\{h_j\}_{j \in \reg}$ under both distributions $\bX$ and
$\bY$. Thus it suffices to show that the bias of $g'$ under $\bX$ and
$\bY$ is close.

Conditioning on $\bX_j =\bY_j$ for $j \in H_0$ gives a halfspace
on the remaining variables in each coordinate $i \in \reg$.  
Define
$$\Theta'[i] = (\Theta[i] - \sum_{j  \not\in H_0} \bX_j[i])), \ \ \bS'[i] =
\sum_{j \not\in H_0}\bX_j[i], \ \bT'[i] = \sum_{j \not\in H_0}\bY_j[i].$$
then 
$$\sgn(\bS[i] - \Theta[i]) = \sgn(\bS'[i] - \Theta'[i]).$$ 
Thus there exists a union of orthants $A' \in
\mathbb{R}^{|\reg|}$ such that $g'(X) =1$ if $X \in A'$. Our goal is to bound
\[ \abs{\Pr[\bS' \in A'] - \Pr[\bT' \in A']}. \]

The set of random variables $\{\bX_j[i] : j \not \in H_0\}$ is $\delta$-regular. 
Hence we can apply our result for the regular case.  We've already conditioned on the
hash function $h$ being 
$H_0$-collision free.
Since this happens
with probability at least $1-\eps$, the resulting function is $\collision$-collision
preserving for $\collision=1/(1-\eps) \leq 2$, since conditioning on an event which happens with probability $p$ can increase
the probability of any other event by a factor of at most $1/p$.
So now applying the analysis from the regular case,
\begin{equation}
\label{eq:err-reg}
\abs{\Pr_{\bS'}[\bS' \in A'] - \Pr_{\bT'}[\bT' \in A']} \leq O \left(
    {\eta}^{-1/2} d^{15/8} \cdot (\frac{1}{t} + \delta)^{1/8} \right)
    \leq O(d\eps).
\end{equation}
Hence, conditioned on $\mathbf{h}$ and the variables in $H_0$ being
good, we have
\begin{equation}
\label{eq:err-reg2}
\abs{\Pr_{\bS}[\bS \in A] - \Pr_{\bT}[\bT \in A]} \leq O(d\eps) + 2d\eps.
\end{equation}

Removing the conditioning gives
\begin{align*}
\abs{\Pr_{\bS}[\bS \in A] - \Pr_{\bT}[\bT \in A]} \leq O(d\eps) +
        2d\eps + \eps + 2d\eps
= O(d\eps)
\end{align*}
\end{proof}

\section{Generalized Monotone Trick}
\label{ap:monotone}

\newcommand{\zdt}{(\zo^D)^T}
\newcommand{\udt}{\mathcal{U}}
\newcommand{\branch}{B}
\newcommand{\dbp}{D}
\newcommand{\numberbps}{d}
\newcommand{\gos}{s}

\newcommand{\acc}{\mathrm{Acc}}

We generalize the ``monotone trick'' introduced in Meka and Zuckerman \cite{MZ09}
and show that a generator that fools small-width ``monotone'' branching programs also fools
any monotone function of several arbitrary-width monotone branching programs.

First we define read-once branching programs.  Branching programs corresponding to space $S$ have
width $2^S$.  We use the following notation from \cite{MZ09}.

\begin{definition}[ROBP]\label{dfn:robp}
An $(S,\dbp,T)$-branching program $\branch$ is a layered multi-graph with a layer for each $0 \leq i \leq T$ and at most $2^{S}$ vertices (states) in each layer. The first layer has a single vertex $v_0$ and each vertex in the last layer is labeled with $0$ (rejecting) or $1$ (accepting). For $0 \leq i \leq T$, a vertex $v$ in layer $i$ has at most $2^\dbp$ outgoing edges each labeled with an element of $\{0,1\}^\dbp$ and pointing to a vertex in layer $i+1$. 
\end{definition}

Let $\branch$ be an $(S,\dbp,T)$-branching program and $v$ a vertex in layer $i$ of $\branch$. 
We now define the set of accepting suffixes.

\begin{definition}
We say $z$ is an \emph{accepting suffix} from vertex $v$ if the path in $\branch$ starting at $v$ and following
edges labeled according to $z$ leads to an accepting state.
We let $\acc_\branch(v)$ denote the set of accepting suffixes from $v$.
If $\branch$ is understood we may abbreviate this $\acc(v)$.
\end{definition}

Nisan \cite{Nis} and Impagliazzo et al.~\cite{INW} gave PRGs that fool $(S,\dbp,T)$-branching programs with error $\exp(2^{-\Omega(S+\dbp)})$ and seed length $r = O((S+\dbp + \log T)\log T)$. For $T = \poly(S,\dbp)$, the PRG of Nisan and Zuckerman \cite{NZ} fools $(S,\dbp,T)$-branching programs with seed length $r = O(S+\dbp)$.
Meka and Zuckerman showed that the above PRGs in fact fool arbitrary width branching programs of a certain form called monotone, defined next.

\begin{definition}[Monotone ROBP]\label{bpdef}
An $(S,\dbp,T)$-branching program $\branch$ is said to be monotone if for all $0 \leq i < T$, there exists an ordering $\{v_1 \prec v_2 \prec \ldots \prec v_{L_i}\}$ of the vertices in layer $i$ such that $v \prec w$ implies $\acc_\branch(v) \subseteq \acc_\branch(w)$. 
\end{definition}

Note that the natural ROBP accepting a halfspace, where states correspond to partial sums, is monotone.
However, the natural ROBP accepting the intersection of just two halfspaces may not be monotone.

The following theorem is the only known way to obtain PRGs for halfspaces using seed length which depends
logarithmically on $1/\eps$ (and polylogarithmically on $n$).

\begin{theorem}\label{thm:monotonebp} \cite{MZ09}
Let $0 < \epsilon < 1$ and $G:\{0,1\}^R \to (\{0,1\}^\dbp)^T$ be a PRG that $\delta$-fools monotone $(\log(4T/\epsilon),\dbp,T)$-branching programs. Then $G$ $(\eps+\delta)$-fools monotone $(S,\dbp,T)$-branching programs for arbitrary $S$ with error at most $\epsilon+\delta$. 
\end{theorem}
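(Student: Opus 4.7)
The plan is to prove Theorem~\ref{thm:monotonebp} by the standard sandwiching paradigm. For any monotone $(S,D,T)$-branching program $B$ of arbitrary width, I will construct two monotone programs $B_u$ and $B_\ell$ of width $W = 4T/\epsilon$ (hence space $\log(4T/\epsilon)$) satisfying: (i)~the pointwise bound $B_\ell(z) \leq B(z) \leq B_u(z)$ for every input $z$; and (ii)~the expectation bound $\E_z[B_u(z)] - \E_z[B_\ell(z)] \leq \epsilon/2$. Applying $G$'s $\delta$-fooling hypothesis to $B_u$ and $B_\ell$ and combining with these two bounds then gives $\bigl|\E_z[B(z)] - \E_y[B(G(y))]\bigr| \leq \epsilon/2 + \delta \leq \epsilon + \delta$, as required.

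The construction of $B_u$ (and dually $B_\ell$) is by \emph{bucketing}: for each layer $i$, let $P_i(v) = \Pr_z[\text{accept from } v]$, and group the vertices of layer $i$ into $W$ buckets according to which subinterval $[b/W, (b+1)/W)$ contains $P_i(v)$. The key enabling fact is an ``edge-preservation'' property that follows from monotonicity of $B$: if $v \preceq w$ in layer $i$ then $\delta_B(v,z) \preceq \delta_B(w,z)$ in layer $i+1$ for every edge label $z$ (because $z' \in \acc(\delta_B(v,z))$ implies $zz' \in \acc(v) \subseteq \acc(w)$, hence $z' \in \acc(\delta_B(w,z))$). I define $B_u$ to keep in layer $i$ only the $\prec$-maximal vertex of each nonempty bucket, and to re-route every original edge whose target lies in bucket $b$ of layer $i+1$ to that bucket's top representative; $B_\ell$ is defined analogously with bottom representatives. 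Edge-preservation makes the induced transitions between representatives order-preserving, so $B_u$ and $B_\ell$ remain monotone; and a straightforward induction on layers shows that the state reached by $B_u$ always $\prec$-dominates the state reached by $B$ on the same input, yielding $B(z) \leq B_u(z)$ (and analogously $B_\ell(z) \leq B(z)$). The start vertex poses no issue because layer~$0$ contains only $v_0$, which is trivially its own representative.

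For the expectation bound I set up hybrids $B^{(0)} = B, B^{(1)}, \ldots, B^{(T)} = B_u$, where $B^{(k)}$ re-routes only those edges entering layers $1, \ldots, k$. Since $B^{(k)}$ and $B^{(k+1)}$ agree on the first $k$ layers, the distribution of the state $u$ reached at layer $k$ is identical in both; conditioning on $u$, the program $B^{(k)}$ steps to $v = \delta_B(u, z_{k+1})$ and then continues inside the original $B$, while $B^{(k+1)}$ steps to $v' = $ top of $v$'s bucket and then continues inside $B$. Since $v$ and $v'$ lie in the same bucket, $P_{k+1}(v') - P_{k+1}(v) \leq 1/W$, and telescoping over $k = 0, \ldots, T-1$ yields $\E[B_u] - \E[B] \leq T/W$. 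Dually $\E[B] - \E[B_\ell] \leq T/W$, so the choice $W = 4T/\epsilon$ gives the desired gap $\epsilon/2$. I expect the main technical care to be needed in the edge-preservation argument and its propagation through the re-routing, which is precisely what simultaneously enables the pointwise sandwich (via order-preserving induction) and the per-layer $1/W$ telescoping bound (via matching hybrid states layer by layer).
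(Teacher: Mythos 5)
Your proposal is correct and matches the paper's approach: the paper proves this by combining its Lemma~\ref{lem:sandwich-easy} (sandwiching by fooled functions suffices) with Lemma~\ref{lem:mzsandwich} (cited from~\cite{MZ09}), and the latter is precisely the bucketing-by-acceptance-probability construction you reconstruct. Your edge-preservation observation, the backward-induction argument that the bucketed programs stay monotone, the forward induction giving the pointwise sandwich, and the layer-by-layer hybrid argument giving the $T/W$ gap are all the right ingredients, and your choice $W = 4T/\epsilon$ even gives a slightly better gap ($\epsilon/2$) than needed.
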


\newcommand{\up}{\mathrm{up}}
\newcommand{\down}{\mathrm{down}}
\newcommand{\bpdn}{\branch^\down}
\newcommand{\bpup}{\branch^\up}
\newcommand{\fdn}{f_\down}
\newcommand{\fup}{f_\up}

We now generalize Theorem~\ref{thm:monotonebp} to the intersection of monotone branching programs,
or even to any monotone function of monotone branching programs.
(Of course, the intersection corresponds to the monotone function AND.)

\begin{theorem}\label{thm:genmonotonebp}
Let $0 < \epsilon < 1$ and $G:\{0,1\}^R \to (\{0,1\}^\dbp)^T$ be a PRG that $\delta$-fools monotone $(\numberbps \log(4T\numberbps /\epsilon),\dbp,T)$-branching programs. Then $G$ $(\eps+\delta)$-fools any monotone function of $\numberbps $ monotone $(S,\dbp,T)$-branching programs for arbitrary $S$. 
\end{theorem}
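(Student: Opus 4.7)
The plan follows the sandwiching strategy behind Theorem~\ref{thm:monotonebp}, lifting it from a single monotone branching program to a monotone function of $\numberbps$ monotone branching programs. I would proceed in four steps.

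First, apply the MZ sandwiching construction from the proof of Theorem~\ref{thm:monotonebp} separately to each $\branch_i$, using the error budget $\eps/\numberbps$ per program. This yields monotone sandwich branching programs $\branch_i^\up$ and $\branch_i^\down$, each of width $4T\numberbps/\eps$ (equivalently, space $\log(4T\numberbps/\eps)$), such that $\branch_i^\down(z) \leq \branch_i(z) \leq \branch_i^\up(z)$ pointwise and $\E_{z \sim \udt}[\branch_i^\up(z) - \branch_i^\down(z)] \leq \eps/\numberbps$.

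Second, use the monotonicity of $F$ to lift the pointwise sandwich to the combined computation:
\[
F(\branch_1^\down(z), \ldots, \branch_\numberbps^\down(z)) \leq F(\branch_1(z), \ldots, \branch_\numberbps(z)) \leq F(\branch_1^\up(z), \ldots, \branch_\numberbps^\up(z))
\]
for every input $z$. Since $F$ is $\{0,1\}$-valued, whenever the right and left sides differ, at least one coordinate of the $F$-input must flip from $0$ to $1$, giving the pointwise bound
\[
F(\branch_1^\up(z), \ldots, \branch_\numberbps^\up(z)) - F(\branch_1^\down(z), \ldots, \branch_\numberbps^\down(z)) \;\leq\; \sum_{i=1}^\numberbps \bigl(\branch_i^\up(z) - \branch_i^\down(z)\bigr),
\]
so the expected gap is at most $\eps$ under the uniform distribution.

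Third, I would show that $F(\branch_1^\up, \ldots, \branch_\numberbps^\up)$ and $F(\branch_1^\down, \ldots, \branch_\numberbps^\down)$, viewed as functions of $z$, are computed by monotone branching programs of width at most $(4T\numberbps/\eps)^\numberbps = 2^{\numberbps \log(4T\numberbps/\eps)}$. The natural candidate is the product branching program tracking the joint state $(v_1, \ldots, v_\numberbps)$ of all $\numberbps$ sandwich BPs, whose width bound is automatic. I expect this to be the main obstacle: the product partial order on joint states (inherited from the individual orderings on the $\branch_i^\up$) is compatible with inclusion of accepting suffix sets by monotonicity of $F$ and of each $\branch_i^\up$, but a generic linear extension of the product order need not preserve the nestedness required in Definition~\ref{bpdef}. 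I would handle this by exploiting the ``bucketed'' form of MZ's sandwich construction (states in each layer are indexed by acceptance-probability buckets, and transitions are rounded up to the highest reachable bucket), collapsing joint states whose bucket tuple induces the same deterministic $F$-value, and then ordering the resulting equivalence classes by that $F$-value. Verifying that this collapse both respects width and yields a monotone BP is the technical heart of the proof.

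Finally, once $F(\branch^\up)$ and $F(\branch^\down)$ are realised as monotone branching programs of the stated width, the hypothesis on $G$ gives $|\E_s[F(\branch^\up(G(s)))] - \E_z[F(\branch^\up(z))]| \leq \delta$ and analogously for $\branch^\down$. Combining these two fooling bounds with the pointwise sandwich inequalities of Step~2 and the gap bound $\E_z[F(\branch^\up) - F(\branch^\down)] \leq \eps$ via the standard triangle-inequality chain used in the proof of Theorem~\ref{thm:monotonebp} yields $|\E_s[F(\branch(G(s)))] - \E_z[F(\branch(z))]| \leq \eps + \delta$, completing the argument.
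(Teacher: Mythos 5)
Your plan reproduces the paper's proof route almost exactly: sandwich each $\branch_i$ individually using Lemma~\ref{lem:mzsandwich} with error budget $\eps/\numberbps$, lift the pointwise bound and the expected-gap bound through the monotone outer function (this is precisely Lemma~\ref{lem:genmonotonebp}), observe that the composed functions are computable by width-$(4T\numberbps/\eps)^\numberbps$ read-once branching programs via the product construction, and finish with Lemma~\ref{lem:sandwich-easy}.

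The one substantive divergence is your Step~3, and it is worth discussing because you have actually noticed something the paper's write-up glosses over. You are correct that the product branching program need not be monotone: the product partial order on joint states is compatible with $\acc$-inclusion, but two joint states that are incomparable in that partial order can genuinely have incomparable accepting-suffix sets (e.g.\ for $F=\mathrm{OR}$ of two BPs, $\acc(k_1,k_2')$ and $\acc(k_1',k_2)$ with $k_1>k_1'$, $k_2<k_2'$ will not nest in general). The paper's Lemma~\ref{lem:genmonotonebp} is careful to claim only that the sandwiching pair $(\fdn,\fup)$ consists of $(\numberbps\log(4T/\eps),\dbp,T)$-branching programs, with no monotonicity assertion. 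The tension is that Theorem~\ref{thm:genmonotonebp}'s hypothesis only promises that $G$ fools \emph{monotone} BPs of that width, which is, read literally, not enough to invoke Lemma~\ref{lem:sandwich-easy} on $(\fdn,\fup)$. This is a small inconsistency in the statement, harmless in practice because the PRGs actually plugged in (Nisan, INW, Nisan--Zuckerman) fool all small-width read-once BPs, not just monotone ones; the cleanest fix is to delete ``monotone'' from the theorem's hypothesis rather than to try to make the product BP monotone.

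Your proposed repair for Step~3 — collapsing joint bucket-tuples ``by $F$-value'' and ordering equivalence classes accordingly — does not quite make sense as written: the $F$-value of a non-final joint state is not determined until the rest of the input is read, so there is no well-defined value to collapse on, and even a coarser invariant (like the acceptance probability from the joint state) gives only a total preorder that need not be compatible with $\acc$-inclusion. You would be doing real work to establish such a claim, and it is not needed. Once you discard the attempt to make the product monotone, your Steps~1, 2, and~4 are exactly the paper's argument, with the same parameter accounting.
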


We now generalize monotone functions to decision trees.
First note that the complement of a monotone branching program is a monotone branching program.
Now consider any decision tree, where each node of the decision tree is a monotone branching program.
Any leaf of this tree represents the intersection of monotone branching programs.
Thus, the error of the function above for such decision trees is at most $s$ times the error for each leaf.  This gives the following corollary.

\begin{corollary}\label{cor:dt}
Let $0 < \epsilon < 1$ and $G:\{0,1\}^R \to (\{0,1\}^\dbp)^T$ be a PRG that $\delta$-fools monotone $(\numberbps \log(4T\numberbps /\epsilon),\dbp,T)$-branching programs. Then $G$ ($s (\eps+\delta)$)-fools any decision tree with $s$ leaves, where each decision tree node is a monotone $(S,\dbp,T)$-branching programs for arbitrary $S$.
\end{corollary}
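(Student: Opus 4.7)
The plan is to decompose $f$ into a sum of indicator functions, one per $1$-labeled root-to-leaf path, apply Theorem~\ref{thm:genmonotonebp} to each summand, and combine errors by the triangle inequality. The only tool beyond Theorem~\ref{thm:genmonotonebp} is the elementary observation (flagged in the lead-in paragraph) that the complement of a monotone $(S,\dbp,T)$-branching program is itself a monotone $(S,\dbp,T)$-branching program: if $\branch$ has ordering $v_1 \prec \cdots \prec v_{L_i}$ at layer $i$ with $\acc_\branch(v_j) \subseteq \acc_\branch(v_{j'})$ whenever $j \prec j'$, then flipping the accept/reject labels at the final layer and reversing the ordering to $v_{L_i} \prec \cdots \prec v_1$ witnesses monotonicity of the complement $\overline{\branch}$.

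Next I would formalize the path decomposition. For each $1$-labeled leaf $\ell$ of the decision tree, let $\branch^{(\ell)}_1,\dots,\branch^{(\ell)}_{k_\ell}$ be the monotone BPs labeling the internal nodes along the root-to-$\ell$ path, and for each $i$ let $(\branch^{(\ell)}_i)^{*}$ denote either $\branch^{(\ell)}_i$ itself or its complement, depending on whether the path exits the $i$-th node along its $1$-branch or $0$-branch. By the complementation observation, each $(\branch^{(\ell)}_i)^{*}$ is a monotone $(S,\dbp,T)$-branching program; its depth $k_\ell$ is bounded by $\numberbps$, matching the parameter appearing in the hypothesis on $G$. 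Setting $g_\ell := (\branch^{(\ell)}_1)^{*} \wedge \cdots \wedge (\branch^{(\ell)}_{k_\ell})^{*}$, we see that $g_\ell$ is a monotone function (namely AND) of at most $\numberbps$ monotone BPs, and that $f = \sum_{\ell \in L_1} g_\ell$ pointwise, where $L_1$ denotes the set of $1$-labeled leaves.

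Applying Theorem~\ref{thm:genmonotonebp} to each $g_\ell$ shows that $g_\ell$ is $(\eps+\delta)$-fooled by $G$, and the triangle inequality over the $|L_1| \leq s$ summands immediately gives the claimed error bound of $s(\eps+\delta)$. The argument is routine once one has the complementation observation, so there is no real technical obstacle; the natural strengthening to $\min(|L_0|,|L_1|)$ (as hinted at in the discussion before Theorem~\ref{thm:main-2}) follows by applying the same decomposition to $1-f$ whenever $|L_0| < |L_1|$, since $L_0$ and $L_1$ are exchanged in the complemented tree.
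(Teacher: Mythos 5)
Your proposal is correct and follows the same argument the paper sketches in the paragraph preceding the corollary: observe that complementation of a monotone BP preserves monotonicity (your explicit witness via label-flipping and reversing the total order at each layer is the right formalization), write $f$ as the pointwise sum over $1$-labeled leaves of indicator functions of root-to-leaf paths, note each such indicator is an AND of at most $\numberbps$ monotone BPs or their complements, and then invoke Theorem~\ref{thm:genmonotonebp} once per leaf and sum the errors. Your remark about obtaining $\min(|L_0|,|L_1|)$ by replacing $f$ with $1-f$ is exactly the observation the paper makes after the corollary, so the proposal is a faithful and slightly more explicit rendering of the paper's own reasoning.
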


In the above, we can even take $s$ to be the minimum of the number of
$0$ and $1$ leaves.
We now prove Theorem~\ref{thm:genmonotonebp}, using the ideas of \cite{MZ09} based on 
``sandwiching'' monotone branching programs between small-width branching programs.

\begin{definition}
A pair of functions $(\fdn,\fup)$, each with the same domain and range as
a function $f:B \to \zo$,
is said to \emph{$\epsilon$-sandwich $f$}
if the following hold.
\begin{enumerate}
\item For all $z \in B$, $\fdn(z) \leq f(z) \leq \fup(z)$.
\item $\Pr_{z \in_u B}[\fup(z) = 1] - \Pr_{z \in_u B}[\fdn(z) = 1] \leq \epsilon$.
\end{enumerate}
\end{definition}

The following lemma shows that it suffices to fool functions which sandwich the given target function.
Bazzi \cite{Baz09} used sandwiching in showing that polylog-wise independence fools DNF formulas.
The lemma below is a small modification of a lemma in \cite{MZ09}.

\begin{lemma}
\label{lem:sandwich-easy}
If $(\fdn,\fup)$ $\epsilon$-sandwich $f$, and
a PRG $G$ $\delta$-fools $\fdn$ and $\fup$, then $G$ $(\epsilon+\delta)$-fools~$f$.
\end{lemma}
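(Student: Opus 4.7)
My plan is a routine chain-of-inequalities argument exploiting the two defining properties of sandwiching: the pointwise bound and the closeness of expectations. Because $f, \fdn, \fup$ all map $B$ to $\zo$, every ``$\Pr[\cdot = 1]$'' equals an expectation, so I can freely convert between the two views.

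First I would establish the upper direction. Using the pointwise inequality $f \leq \fup$, monotonicity of expectation gives
\[
\Pr_{y}[f(G(y)) = 1] \;\leq\; \Pr_{y}[\fup(G(y)) = 1].
\]
Next I would invoke the hypothesis that $G$ $\delta$-fools $\fup$ to replace the distribution on the seed with the uniform distribution on $B$, incurring an additive $\delta$:
\[
\Pr_{y}[\fup(G(y)) = 1] \;\leq\; \Pr_{z \in_u B}[\fup(z) = 1] + \delta.
\]
Then I would use the two sandwiching conditions. Condition~2 of the sandwich gives $\Pr_z[\fup(z)=1] \leq \Pr_z[\fdn(z)=1] + \epsilon$, and the pointwise bound $\fdn \leq f$ in Condition~1 gives $\Pr_z[\fdn(z)=1] \leq \Pr_z[f(z)=1]$. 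Chaining these three steps yields $\Pr_y[f(G(y))=1] \leq \Pr_z[f(z)=1] + \epsilon + \delta$.

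The lower direction is perfectly symmetric with the roles of $\fup$ and $\fdn$ swapped: start from $\fdn \leq f$, pass through the PRG-fooling hypothesis for $\fdn$ (losing $\delta$), then use Condition~2 in the opposite direction to move from $\Pr_z[\fdn(z)=1]$ down to $\Pr_z[\fup(z)=1] - \epsilon$, and finally use $f \leq \fup$. This gives $\Pr_y[f(G(y))=1] \geq \Pr_z[f(z)=1] - \epsilon - \delta$. Combining the two directions yields the desired bound $|\Pr_y[f(G(y))=1] - \Pr_z[f(z)=1]| \leq \epsilon + \delta$, which is exactly the statement that $G$ $(\epsilon+\delta)$-fools $f$.

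There is essentially no obstacle here; the only thing to be mindful of is that the PRG-fooling property as stated in Definition~1.1 applies to $\zo$-valued functions, which is why it is important that the lemma explicitly requires $\fdn, \fup : B \to \zo$. If one wanted to allow more general bounded real-valued sandwiching functions (as is done in Proposition~\ref{prop:sandwiching-fools} for bounded independence), one would need to know that $G$ fools expectations of such functions; but for the present lemma the $\zo$ assumption lets the standard definition of PRG be applied verbatim.
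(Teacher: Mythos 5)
Your argument is correct. The paper omits a proof of this lemma, citing~\cite{MZ09}, but the chain of inequalities you give (pointwise bound $f \leq \fup$, then $\delta$-fooling of $\fup$, then the sandwiching gap $\eps$, and symmetrically for $\fdn$) is exactly the standard argument and matches the paper's proof of the analogous Proposition~\ref{prop:sandwiching-fools} for bounded independence.
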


Meka and Zuckerman then showed that any monotone branching program can be sandwiched between two small-width branching programs.

\begin{lemma}
\label{lem:mzsandwich}
\cite{MZ09}
For any monotone $(S,\dbp,T)$-branching program $\branch$, there exist monotone $(\log(4T/\epsilon),\dbp,T)$-branching programs $(\bpdn,\bpup)$ that $\epsilon$-sandwich $\branch$.
\end{lemma}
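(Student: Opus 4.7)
\medskip

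\noindent\textbf{Proof proposal for Lemma~\ref{lem:mzsandwich}.}
The plan is to construct $\bpup$ and $\bpdn$ by ``quantizing'' the monotone BP $\branch$ layer-by-layer according to acceptance probabilities, exploiting the fact that a monotone ordering of vertices by inclusion of accepting suffixes is equivalent to an ordering by the acceptance probability $P_\branch(v) := \Pr_{z}[\text{path from $v$ on uniform $z$ accepts}]$. I would first observe that within each layer $i$, the monotone order $v_1 \prec v_2 \prec \cdots \prec v_{L_i}$ gives $P_\branch(v_1) \leq P_\branch(v_2) \leq \cdots$; moreover, monotonicity is preserved by transitions, i.e.\ if $v \prec w$ and both read label $c$, then $v \xrightarrow{c} \prec w \xrightarrow{c}$ in the next layer (direct from $\acc(v) \subseteq \acc(w)$).

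To define $\bpup$, partition the vertices of each layer $i < T$ into at most $4T/\epsilon$ buckets according to which interval $[k\epsilon/(4T), (k+1)\epsilon/(4T))$ contains $P_\branch(v)$. Each bucket is a contiguous range in the $\prec$-order. In $\bpup$, the vertices of layer $i$ are exactly these buckets (layer $T$ is kept as-is, with its accept/reject labels). For each bucket $u$, let $t(u)$ be its $\prec$-top original vertex; define the outgoing transition of $u$ on label $c$ to be the bucket in layer $i+1$ containing $t(u) \xrightarrow{c}$. The construction for $\bpdn$ is symmetric, using the $\prec$-bottom vertex $b(u)$ of each bucket. By construction the width is $\leq 4T/\epsilon$, so $\log(4T/\epsilon)$ bits per layer suffice. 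Monotonicity of $\bpup$ (resp.\ $\bpdn$) follows because the buckets inherit the $\prec$-order, and transitions via the top (resp.\ bottom) representative, together with the monotonicity-preservation of transitions in $\branch$, keep this order intact.

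Next I would verify the pointwise sandwich $\bpdn(z) \leq \branch(z) \leq \bpup(z)$ for every input $z$. Inductively, if after reading a prefix of $z$ the BP $\branch$ reaches $v_i$ and $\bpup$ reaches the bucket $u_i$, I claim $v_i \prec t(u_i)$ (taking $\prec$ as reflexive). The base case is trivial since the initial layer has one vertex. For the step: if $v_i \prec t(u_i)$, then $v_i \xrightarrow{c} \prec t(u_i) \xrightarrow{c}$ by transition-monotonicity, and since $t(u_i) \xrightarrow{c}$ lies in its bucket $u_{i+1}$, we have $t(u_i) \xrightarrow{c} \prec t(u_{i+1})$, so $v_{i+1} \prec t(u_{i+1})$. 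At the last layer, $v_T \prec t(u_T)$ with $\bpup$ labeling $u_T$ by whether $t(u_T)$ accepts; hence $\branch(z)=1$ implies $\bpup(z)=1$. Symmetrically $\bpdn(z) \leq \branch(z)$.

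Finally I would bound the error $\E[\bpup] - \E[\bpdn]$ by a layer-by-layer induction. For $\bpup$ I'd prove $P_{\bpup}(u) \leq P_\branch(t(u)) + (T-1-i)\cdot \epsilon/(4T)$ for $u$ in layer $i$. The base case $i=T-1$ is exact since the final layer carries the true labels. For the inductive step, on random label $c$, $\bpup$ moves to the bucket of $s_c := t(u) \xrightarrow{c}$, whose top $\tau_c$ satisfies $P_\branch(\tau_c) \leq P_\branch(s_c) + \epsilon/(4T)$ because they share a bucket; combining with the inductive hypothesis and averaging over $c$ gives the claim. Evaluating at layer $0$ (whose unique bucket has $t = v_0$) yields $\E[\bpup] - \E[\branch] \leq \epsilon/4$, and symmetrically $\E[\branch] - \E[\bpdn] \leq \epsilon/4$, so the pair $(\bpdn, \bpup)$ is $\epsilon$-sandwiching. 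The main technical care-point is verifying that the monotonicity of $\branch$ really does transfer through the bucketing-and-representative construction (so that $\bpup, \bpdn$ themselves are monotone, as required by the lemma); the telescoping error estimate is then essentially automatic.
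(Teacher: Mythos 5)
Your proof is correct and reconstructs essentially the argument that Meka and Zuckerman give for this lemma (which the present paper cites from \cite{MZ09} rather than reproving). The key moves match theirs: observing that the monotone $\prec$-order is also an order by acceptance probability $P_\branch(\cdot)$, that transitions preserve $\prec$ (since $\acc(v) \subseteq \acc(w)$ implies $\acc(v\!\xrightarrow{c}) \subseteq \acc(w\!\xrightarrow{c})$), quantizing each layer into $\le 4T/\eps$ probability buckets, routing via the top/bottom bucket representative, and then proving the pointwise sandwich by a prefix induction and the error bound $\E[\bpup]-\E[\branch] \le \eps/4$ by a suffix induction that accumulates $\eps/(4T)$ per layer. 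The only point worth tightening is the bucket count: the intervals $[k\eps/4T,(k+1)\eps/4T)$ covering $[0,1]$ may number $\lceil 4T/\eps\rceil$, so you should either round, or merge the degenerate top bucket into the accepting state at layer $T$, to keep width within $2^{\log(4T/\eps)}$; this is cosmetic and does not affect correctness.
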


Using this, we can show that any monotone function of monotone branching programs is sandwiched by a small-width branching program.

\begin{lemma}
\label{lem:genmonotonebp}
Any monotone function of $\numberbps$ $(S,\dbp,T)$-branching programs has a pair of $(\numberbps\log(4T/\epsilon),\dbp,T)$-branching programs $(\bpdn,\bpup)$ that $(\numberbps\epsilon)$-sandwich it.
\end{lemma}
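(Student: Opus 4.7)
The plan is to apply Lemma~\ref{lem:mzsandwich} coordinate-by-coordinate to sandwich each of the $d$ input monotone branching programs, then combine via a product construction and analyze the error with a hybrid argument. Note that the target branching programs $\branch^\down, \branch^\up$ in this lemma are \emph{not} required to be monotone, so we have the freedom to use a direct product.

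Let $f = F(\branch_1, \ldots, \branch_d)$ where $F : \zo^d \to \zo$ is monotone and each $\branch_i$ is a monotone $(S, \dbp, T)$-branching program. First I would apply Lemma~\ref{lem:mzsandwich} to each $\branch_i$ with error parameter $\eps$ to obtain monotone $(\log(4T/\eps), \dbp, T)$-branching programs $\branch_i^\down, \branch_i^\up$ which $\eps$-sandwich $\branch_i$. Define
\[
\bpdn(z) = F(\branch_1^\down(z), \ldots, \branch_d^\down(z)), \qquad \bpup(z) = F(\branch_1^\up(z), \ldots, \branch_d^\up(z)).
\]
Since $F$ is monotone and $\branch_i^\down(z) \leq \branch_i(z) \leq \branch_i^\up(z)$ pointwise, $\bpdn$ and $\bpup$ sandwich $f$ pointwise.

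Next I would realize $\bpdn$ and $\bpup$ as $(d \log(4T/\eps), \dbp, T)$-branching programs via the standard product construction: the states at layer $i$ are tuples $(v_1, \ldots, v_d)$ with $v_j$ a layer-$i$ state of $\branch_j^\up$ (resp.\ $\branch_j^\down$), the edge labelled $z \in \zo^\dbp$ transitions coordinatewise, and $(v_1, \ldots, v_d)$ is accepting iff $F(a_1, \ldots, a_d) = 1$, where $a_j$ is the accept/reject label of $v_j$ in $\branch_j^\up$ (resp.\ $\branch_j^\down$). Since each $\branch_j^\up, \branch_j^\down$ has at most $2^{\log(4T/\eps)}$ states per layer, the product has at most $(2^{\log(4T/\eps)})^d = 2^{d\log(4T/\eps)}$ states per layer, giving the claimed width.

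Finally I would bound the sandwiching error by a hybrid argument. Writing $\branch_i^\up(z), \branch_i^\down(z) \in \zo$ and using monotonicity of $F$ together with the telescoping identity
\[
\bpup(z) - \bpdn(z) = \sum_{i=1}^d \bigl[F(\branch_1^\down, \ldots, \branch_{i-1}^\down, \branch_i^\up, \branch_{i+1}^\up, \ldots, \branch_d^\up)(z) - F(\branch_1^\down, \ldots, \branch_i^\down, \branch_{i+1}^\up, \ldots, \branch_d^\up)(z)\bigr],
\]
each term on the right is a $\zo$-valued difference that vanishes whenever $\branch_i^\down(z) = \branch_i^\up(z)$. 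Taking expectations under the uniform distribution on inputs, each summand is therefore bounded by $\Pr[\branch_i^\up(z) \neq \branch_i^\down(z)] = \Pr[\branch_i^\up = 1] - \Pr[\branch_i^\down = 1] \leq \eps$ by the sandwiching property of Lemma~\ref{lem:mzsandwich}. Summing over $i$ gives $\E[\bpup] - \E[\bpdn] \leq d\eps$, completing the proof. There is no substantive obstacle; the only thing to verify carefully is that monotonicity of $F$ is exactly what is needed to ensure each telescoping term is nonnegative and bounded by the per-coordinate sandwiching gap.
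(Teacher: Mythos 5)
Your proof is correct and takes essentially the same approach as the paper: both apply Lemma~\ref{lem:mzsandwich} coordinatewise, define $\bpdn,\bpup$ as $F$ composed with the per-coordinate sandwiches, and bound the error by $\numberbps\eps$. The paper bounds the error by the set inclusion $\fup^{-1}(1)\setminus\fdn^{-1}(1)\subseteq\bigcup_i\bigl((\bpup_i)^{-1}(1)\setminus(\bpdn_i)^{-1}(1)\bigr)$ rather than your telescoping hybrid (the two are equivalent), and it leaves implicit the product construction giving the $\numberbps\log(4T/\eps)$ width that you spell out explicitly.
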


\begin{proof}
For a monotone branching program~$\branch$, let $(\bpdn,\bpup)$ denote
monotone $(\log(4T/\epsilon),\dbp,T)$-branching programs that $\eps$-sandwich $\branch$, as given by Lemma~\ref{lem:mzsandwich}.
Suppose our given function is $f(z) = g(\branch_1(z),\branch_2(z),\ldots,\branch_\numberbps(z))$ for $g$ monotone.
Then $f(z)$ is sandwiched by $(\fdn,\fup)$ given by
\begin{eqnarray*}
\fdn(z) &=& f \left( \bpdn_1(z),\bpdn_2(z),\ldots,\bpdn_\numberbps(z) \right)\\
\fup(z) &=& f \left( \bpup_1(z),\bpup_2(z),\ldots,\bpup_\numberbps(z) \right).
\end{eqnarray*}
Moreover,
\[ \fup^{-1}(1) - \fdn^{-1}(1) \subseteq \bigcup_{i=1}^\numberbps \left( (\bpup_i)^{-1}(1) - (\bpdn_i)^{-1}(1) \right). \]
Since
$\Pr_{z }[\bpup_i(z) = 1] - \Pr_{z}[\bpdn_i(z) = 1] \leq \epsilon$,
it follows that
$\Pr_{z }[\fup(z) = 1] - \Pr_{z}[\fdn(z) = 1] \leq \numberbps\epsilon$.
\end{proof}

Theorem~\ref{thm:genmonotonebp} now follows from Lemmas~\ref{lem:sandwich-easy} and~\ref{lem:genmonotonebp}.
Without using any of the hard work we've done in other sections,
this theorem gives us PRGs for monotone functions of halfspaces (such as intersections) using a random seed of
length $O(\numberbps(\log n)\log (n/\eps))$.
We improve this seed length now.

\subsection{Combining the Monotone Trick and the main construction}

Fix a hash function $h$, which fixes the partition into $t$ sets.  Then any monotone function of $\vsgn(y_1W_1 + \ldots + y_nW_n - \Theta)$
may be computed by a monotone function of $d$ monotone branching programs, with $t$ layers each.  Thus, we can apply 
Theorem~\ref{thm:genmonotonebp} and Corollary~\ref{cor:dt} to deduce Theorem~\ref{thm:main-1}.

We can set $T=t$ 
and $\dbp=O(\log n)$ to store the seed for the 5-wise independent distribution.  Also note that $\log \eta^{-1} = \Theta(\log C)$.  With these parameters, using
Nisan's PRG gives a seed length of
$O((d\log(d T/\eps)+\dbp + \log T)\log T) = O((d \log(Cd/\eps) + \log n)\log(Cd/\eps))$ to fool monotone functions of $d$ halfspaces.
For functions computable by size $s$ decision trees of halfspaces, the seed length becomes
$O((d \log(Cds/\eps) + \log n)\log(Cds/\eps))$.

When $Cd/\eps \geq \log^{-c} n$ for any $c>0$, then $t=\polylog (n)$ and we can use the Nisan-Zuckerman PRG.
This gives a seed length of
$O(d\log(d T/\eps)+\dbp + \log T) = O(d \log(Cd/\eps) + \log n)$ for monotone functions of $d$ halfspaces.
For functions computable by size $s$ decision trees of halfspaces, the seed length becomes
$O(d \log(Cds/\eps) + \log n)$.

More generally, using Armoni's interpolation of Nisan and Nisan-Zuckerman will shave off an extra $\log \log n$ factor off of Nisan's PRG when
$t/\eps \leq \exp(-(\log n)^{1-\gamma})$ for some $\gamma>0$.  We omit the details.

\section{Discretizing the distribution}
\label{discretize}

The first step is to truncate each $\bx_i$ to lie in the range
$(-B,B)$.

\begin{lemma}
Set $B = (nC^2\eps^{-1})^\fr{4}$. For each $i \in [n]$,
let $\by_i = \bx_i\cdot \mb{I}(|\bx_i| < B)$. Define the product
random variable $\bY= (\by_1, \by_2, \ldots, \by_n)$ where the
$\by_i$s are independent. Then we have
\begin{itemize}
\item $ \sd(\bX,\bY) \leq \eps$.
\item $ \E[\by_i^2] \geq \fr{2}, \E[\by_i^4] \leq C$. 
\end{itemize}
\end{lemma}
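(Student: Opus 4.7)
The plan is to bound each claim using elementary moment/tail estimates. The statistical distance between the product distributions $\bX$ and $\bY$ satisfies the standard tensorization bound $\sd(\bX,\bY) \le \sum_{i=1}^n \sd(\bx_i,\by_i)$, so it suffices to bound each marginal. Since $\by_i$ differs from $\bx_i$ exactly on the event $\{|\bx_i|\ge B\}$, we have $\sd(\bx_i,\by_i) \le \Pr[|\bx_i|\ge B]$. Applying Markov's inequality to the random variable $\bx_i^4$ and using the fourth moment bound $\E[\bx_i^4]\le C$ gives
\[
\Pr[|\bx_i|\ge B] \;=\; \Pr[\bx_i^4 \ge B^4] \;\le\; \frac{C}{B^4} \;=\; \frac{C\eps}{nC^2} \;=\; \frac{\eps}{nC}.
\]
Summing over $i\in[n]$ yields $\sd(\bX,\bY)\le \eps/C \le \eps$, as claimed.

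For the fourth moment bound on $\by_i$, observe that $|\by_i|\le |\bx_i|$ always, so
\[
\E[\by_i^4] \;=\; \E[\bx_i^4\cdot \mb{I}(|\bx_i|<B)] \;\le\; \E[\bx_i^4] \;\le\; C.
\]
For the lower bound on the second moment, I write
\[
\E[\by_i^2] \;=\; \E[\bx_i^2] - \E[\bx_i^2\cdot \mb{I}(|\bx_i|\ge B)] \;=\; 1 - \E[\bx_i^2\cdot \mb{I}(|\bx_i|\ge B)],
\]
and bound the error term by Cauchy--Schwarz:
\[
\E[\bx_i^2\cdot \mb{I}(|\bx_i|\ge B)] \;\le\; \sqrt{\E[\bx_i^4]\cdot \Pr[|\bx_i|\ge B]} \;\le\; \sqrt{C \cdot \tfrac{C}{B^4}} \;=\; \frac{C}{B^2} \;=\; \sqrt{\frac{\eps}{n}}.
\]
Under the (implicit) assumption that $n$ is large compared to $\eps$ (say $n\ge 4\eps$, which is true for any regime of interest), this is at most $1/2$, so $\E[\by_i^2]\ge 1/2$.

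There is no real obstacle here; the argument is a textbook truncation estimate. The only mild care required is choosing the truncation level $B$ so that the Markov tail bound $C/B^4$ sums to $\le \eps$ across $n$ coordinates while simultaneously keeping the Cauchy--Schwarz error $C/B^2 = \sqrt{\eps/n}$ comfortably below $1/2$; the prescribed choice $B=(nC^2/\eps)^{1/4}$ balances both of these and is the natural scale at which the fourth-moment assumption begins to kick in.
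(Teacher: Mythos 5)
Your proof is correct and follows essentially the same route as the paper: Markov's inequality on $\bx_i^4$ for the tail bound and hence the statistical distance, and Cauchy--Schwarz on $\E[\bx_i^2\cdot \mb{I}(|\bx_i|\ge B)]$ for the second-moment lower bound. The only cosmetic difference is that you explicitly flag the mild requirement $\eps/n \le 1/4$ needed to conclude $\sqrt{\eps/n}<1/2$, which the paper leaves implicit.
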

\begin{proof}
Note that $\bx_i = \by_i$ when $|\bx_i| \leq B$ and $\by_i =0$
otherwise. But we have
$$\Pr[|\bx_i| \geq B] = \Pr[|\bx_i|^4 \geq B^4] \leq \frac{C}{B^4} = \frac{\eps}{nC}.$$ 
Thus it follows that 
$$\sd(\bx_i,\by_i) \leq \frac{\eps}{nC} \ \Rightarrow \ \sd(\bX,\bY) \leq \frac{\eps}{C} \leq \eps.$$

It is clear that $\E[\by_i^4] \leq \E[\bx_i^4] \leq C$. Thus we only need
to prove the claim about the two-norm. We have
$$\bx_i = \bx_i\cdot\mb{I}(|\bx_i| < B) + \bx_i \cdot\mb{I}(|\bx_i| \geq B) =
\by_i + \bx_i \cdot\mb{I}(|\bx_i| \geq B)$$
from which it follows that
$$\E[\bx_i^2] = \E[\by_i^2] + \E[\bx_i^2 \cdot\mb{I}(|\bx_i| \geq B)].$$
By the Cauchy-Schwartz inequality, we have 
$$\E[\bx_i^2 \cdot\mb{I}(|\bx_i| \geq B)] \leq
\E[\bx_i^4]^\fr{2}(\Pr[|\bx_i| \geq B)])^\fr{2} \leq
\sqrt{C}\sqrt{\frac{\eps}{nC}} = \sqrt{\frac{\eps}{n}}
< \fr{2}$$
Hence we have $\E[\by_i^2] \geq \fr{2}$.

By a similar argument, one can show that $|\E[\by_i]| \leq \frac{\eps}{nC}$. 
\end{proof}
By suitable shifting and rescaling, we can assume that the distribution satisfies
$\E[\bx_i] = 0, \E[\bx_i^2] =1$, $\E[\bx_i^4] \leq C$ and $|\bx_i| < B$
\pnote{$B$ and $C$ will bescaled by constants}.

The next step is to suitably discretize the distribution. Assume that
the random variable $\bx_i$ has a cumulative distibution function $F_i$ where
$F_i(x) = \Pr[\bx_i \leq x]$. Since $|\bx_i| < B$ we have $F(-B) =0$ and
$F(B) =1$. We will define two {\em sandwiching} discrete
  distributions $\bx_i^\ell$ and $\bx_i^u$ whose cdfs $F_i^\ell$ and
  $F^i_u$ satisfy:
$$F_i^\ell(x) \leq F_i(x) \leq F_i^\ell(x) + \gamma$$
$$F_i^u(x) -\gamma \leq F_i(x) \leq F_i^u(x)$$
where $\gamma$ is a granularity paramater (which will be chosen as
inverse polynomial in $n$).

Let $g = \fr{\gamma}$. Our goal is to define bucket boundaries $b_0,\ldots,b_g$ by
picking $b_k$ that stisfy $F_i(b_k) = k\gamma$. 

\begin{definition}
\label{def:buckets}
For $k \in \{0,\ldots,g\}$,  let $b_k$ be the
  smallest $x \in [-B,B]$ so that $F_i(x) \geq k\gamma$.
\end{definition}

\pnote{the bucket boundaries $b_k$ defined will vary for each
$\bx_i$, but we use $b_k$ in place of $b_k(i)$ for simpler notation. }

We can sample $\bx_i$ by first picking a bucket $k \in
\{0,\ldots,g-1\}$ and then sampling from this bucket according to the suitable conditional
distribution, resulting in $\bx_i \in [b_k,b_{k+1}]$.

We now define the sandwiching distributions:
\begin{definition}
\label{def:sandwich-dist}
The random variable $\bx_i^\ell$ is uniformly distributed on $\{b_0,\ldots,b_{g-1}\}$ while $\bx_i^u$ the
uniform distributed on $\{b_1,\ldots,b_g\}$. We define the family
$\mc{F}$ of $2^n$ product distributions on $\mb{R}^n$ where each co-ordinate is
distributed independently according to $\bx_i^\ell$ or $\bx_i^u$.
\end{definition}

It follows that $\sd(\bx_i^\ell,\bx_i^u) \leq \gamma$. Hence if we take
any pair of variables $\bY,\bZ$ from $\mc{F}$, by the union bound we have $\sd(\bY,\bZ) \leq \gamma n$.
The following lemma allows us to reduce the problem of fooling
halfspaces under the distribution $\bX$ to the problem of fooling a
single distribution from the family $\mc{F}$.

\begin{lemma}
\label{lem:1-halfspace}
Let $h:\mb{R}^n \rightarrow \pmo$ for $i \in [k]$ be a halfspace and 
let $\bY \in \mc{F}$. Then $$|\E[h(\bX)] - \E[h(\bY)]| \leq 4\gamma n.$$
\end{lemma}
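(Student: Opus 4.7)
The plan is to exploit the coordinate-wise monotonicity of a halfspace: since $h(X) = \sgn(\sum_j w_j x_j - \theta)$, $h$ is nondecreasing in $x_i$ when $w_i \geq 0$ and nonincreasing when $w_i < 0$. First, I would sandwich $\E[h(\bX)]$ between two specific extremal members of $\mc{F}$, defined coordinatewise by
\[
\by_i^+ = \begin{cases} \bx_i^u & \text{if } w_i \geq 0, \\ \bx_i^\ell & \text{if } w_i < 0, \end{cases} \qquad \by_i^- = \begin{cases} \bx_i^\ell & \text{if } w_i \geq 0, \\ \bx_i^u & \text{if } w_i < 0. \end{cases}
\]
To establish $\E[h(\bY^-)] \leq \E[h(\bX)] \leq \E[h(\bY^+)]$, I would couple $\bX$ with $(\bY^+,\bY^-)$ coordinatewise via the bucket structure of Definition~\ref{def:buckets}: for each $i$, sample $\bx_i$, locate the unique bucket index $k$ with $\bx_i \in [b_k,b_{k+1})$, and set the $\ell$-copy of coordinate $i$ to $b_k$ and the $u$-copy to $b_{k+1}$. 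Each bucket carries $F_i$-mass $\gamma$ in the atomless case (and up to the $\gamma$-slack encoded in the cdf-sandwich inequalities $F_i^\ell \leq F_i \leq F_i^\ell + \gamma$ and $F_i^u - \gamma \leq F_i \leq F_i^u$ otherwise), so the induced marginals are correct. In this coupling $\bx_i^\ell \leq \bx_i \leq \bx_i^u$ almost surely; combined with the sign-dependent choice of $\by_i^\pm$, this gives $w_i \by_i^- \leq w_i \bx_i \leq w_i \by_i^+$ pointwise, and therefore $\sum_i w_i \by_i^- \leq \sum_i w_i \bx_i \leq \sum_i w_i \by_i^+$, from which the sandwich on $\E[h(\cdot)]$ follows by monotonicity of $\sgn$.

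Second, I would bound the gap $\E[h(\bY^+)] - \E[h(\bY^-)]$. A direct calculation gives $\sd(\bx_i^\ell, \bx_i^u) \leq \gamma$: both distributions assign probability $\gamma$ to each of $b_1, \ldots, b_{g-1}$ and differ only at the two extreme values $b_0$ and $b_g$. By independence across coordinates and the standard subadditivity of statistical distance for product distributions, $\sd(\bY^+, \bY^-) \leq n\gamma$, so $|\E[h(\bY^+)] - \E[h(\bY^-)]| \leq 2n\gamma$ since $h$ takes values in $\pmo$. In particular $|\E[h(\bX)] - \E[h(\bY^+)]| \leq 2n\gamma$.

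Finally, for an arbitrary $\bY \in \mc{F}$, the same coordinatewise argument yields $\sd(\bY, \bY^+) \leq n\gamma$ and hence $|\E[h(\bY)] - \E[h(\bY^+)]| \leq 2n\gamma$. Combining with the previous step via the triangle inequality gives $|\E[h(\bX)] - \E[h(\bY)]| \leq 4n\gamma$, as required. The main technical subtlety will be the coupling in the first step — verifying that the bucket-based construction produces the correct marginals for $\bx_i^\ell$ and $\bx_i^u$ in the presence of possible atoms of $F_i$ at the bucket boundaries — but the $\gamma$-slack appearing in the cdf sandwich inequalities above is precisely designed to absorb this.
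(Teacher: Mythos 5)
Your proof is correct and takes essentially the same approach as the paper's: both construct the sign-dependent sandwiching distributions (your $\bY^\pm$ are the paper's $\bY^u,\bY^\ell$), use the bucket-based coupling to get the pointwise sandwich on $h$, bound $\sd(\bx_i^\ell,\bx_i^u)\leq\gamma$ per coordinate, and then add the error from switching to a generic $\bY\in\mc{F}$. The only (inconsequential) difference is that you track expectations directly via the sandwich inequality while the paper tracks $\Pr[h(\bX)\neq h(\bY)]$ and converts to an expectation gap at the end; the constant $4\gamma n$ comes out the same either way.
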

\begin{proof}
We will pick sandwiching distributions $\bY^\ell=
(\by^\ell_1,\ldots,\by^\ell_n)$ and $\bY^u = (\by^u_1,\ldots,\by^u_n)$ from
$\mc{F}$ (depending on the halfspace $h$) and construct a coupling of
the three distributions $\bY^\ell, \bX$ and $\bY^u$ so that 
\begin{equation}
\label{eqn:couple}
h(\bY^\ell) \leq h(\bX) \leq h(\bY^u).
\end{equation}
Let $h(x) = \sgn(\sum_iw_i\bx_i - \theta)$. If $w_i \geq 0$
for all $i$, then we set
$$\by^\ell_i = \bx_i^\ell, \ \ \by_i^u = \bx_i^u.$$
Whereas if $w_i < 0$, then we set 
$$\by_i^\ell = \bx_i^u, \ \  \by_i^u = \bx_i^\ell.$$

Next we describe the coupling, co-ordinate by co-ordinate. Fix
co-ordinate $i$. Pick $k \in \{0,\ldots,g-1\}$ at random. Set
$\bx_i^\ell =  b_k$ and $\bx_i^u = b_{k+1}$. We now set the random
variables $\by_i, \by_i\ell$ and $\by_i^u$ to be eiher $\bx_i^\ell$ or
$\bx^u_i$, based on their defintion. We pick $\bx_i$ conditioned on the
$k^{th}$ bucket, so that $b_k \leq \bx_i \leq b_{k+1}$. It follows that
$$w_i\by_i^\ell \leq w_i \bx_i \leq w_i \by_i^u$$
and hence
$$\sum_i w_i\by_i^\ell \leq \sum_i w_i \bx_i \leq \sum_i w_i \by_i^u$$
which implies Equation \ref{eqn:couple}.

Since a halfspace is a statistical test, we have
\begin{equation}
\label{eq:sandwich2}
\Pr[h(\bX) \neq h(\bY^u)] \leq \Pr[h(\bY^\ell) \neq h(\bY^u)] \leq \sd(\bY^u,\bY^\ell) \leq \gamma n.
\end{equation}
If we replace $\bY^u$ with $\bY \in \mc{F}$, we have
$$\Pr[h(\bX) \neq h(\bY)] \leq \Pr[h(\bX) \neq h(\bY^u)]| + \Pr[h(\bY) \neq
  h(\bY^u)]|  \leq 2\gamma n$$
where we use Equations \ref{eq:sandwich2} and the fact that
  $\sd(\bY,\bY^u) \leq \gamma n$. The claim
  follows since $h(\bX)$ and $h(\bY)$ take values over $\pmo$.
\end{proof}

This lemma  extends to fooling functions of halfspaces.

\begin{lemma}
\label{lem:k-halfspaces}
 Let $f: \mb{R}^n \rightarrow \pmo$ be a function of $d$ halfsapces
 $h_i:\mb{R}^n \rightarrow \pmo$ given by $f =g(h_1,\ldots,h_d)$ where
 $g: \pmo^k \rightarrow \pmo$. Then for any $Y \in \mc{F}$, 
$$|\E[f(\bX)] - \E[f(\bY)]| \leq 4\gamma d n.$$
\end{lemma}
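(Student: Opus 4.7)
The plan is a hybrid argument over the $n$ coordinates, reducing the problem to a one-dimensional step-function estimate that generalizes the ideas of Lemma~\ref{lem:1-halfspace}. First, I would introduce the hybrid distributions $\bZ_j = (\by_1, \ldots, \by_j, \bx_{j+1}, \ldots, \bx_n)$ for $j = 0, 1, \ldots, n$, so that $\bZ_0 = \bX$ and $\bZ_n = \bY$. By the triangle inequality, it suffices to show $|\E[f(\bZ_{j-1})] - \E[f(\bZ_j)]| \leq 2d\gamma$ for each $j$, since summing over $n$ coordinates then yields the desired bound (in fact slightly stronger than $4\gamma d n$).

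For a fixed $j$, I would condition on the values of all coordinates other than the $j$-th (which are identically distributed in $\bZ_{j-1}$ and $\bZ_j$, and independent of the $j$-th in both). With the other coordinates frozen, each halfspace $h_i$, viewed as a function of the remaining variable $z_j$, is a one-dimensional threshold $\sgn(w_{i,j} z_j + C_{i,-j})$ for some constant $C_{i,-j}$ determined by the conditioned coordinates; it is a step function with a single discontinuity at $t_i = -C_{i,-j}/w_{i,j}$ (or with no discontinuity if $w_{i,j}=0$). Consequently $f = g(h_1, \ldots, h_d)$ collapses to a step function of $z_j$ with at most $d$ breakpoints $t_1 < \cdots < t_d$ and at most $d+1$ constant pieces $f_k \in \{-1,1\}$ on intervals $I_k = (t_k, t_{k+1}]$ (with the conventions $t_0 = -\infty$, $t_{d+1} = \infty$). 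Since $|f_k| \leq 1$,
\[
\bigl|\E_{\bx_j}[f] - \E_{\by_j}[f]\bigr| \leq \sum_{k=0}^d \bigl|\Pr[\bx_j \in I_k] - \Pr[\by_j \in I_k]\bigr|.
\]

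The key technical step, and the principal obstacle, is establishing the uniform CDF bound $|F_{\bx_j}(t) - F_{\by_j}(t)| \leq \gamma$ for every $t \in \mathbb{R}$. This follows from the construction in Definition~\ref{def:buckets}: each $b_k$ is the smallest $x$ with $F_{\bx_j}(x) \geq k\gamma$, so the CDF of $\by_j$ (uniform on either $\{b_0, \ldots, b_{g-1}\}$ or $\{b_1, \ldots, b_g\}$) is a staircase interpolant that tracks $F_{\bx_j}$ to within $1/g = \gamma$ at every point, not merely at the $b_k$'s. Once this estimate is in hand, I would expand $\Pr[\cdot \in I_k] = F_\cdot(t_{k+1}) - F_\cdot(t_k)$ and telescope; the boundary contributions at $\pm\infty$ vanish, leaving
\[
\sum_{k=0}^d \bigl|\Pr[\bx_j \in I_k] - \Pr[\by_j \in I_k]\bigr| \leq 2 \sum_{k=1}^d \bigl|F_{\bx_j}(t_k) - F_{\by_j}(t_k)\bigr| \leq 2 d \gamma.
\]
Taking expectation over the remaining coordinates gives the per-step bound, and summing over $j \in [n]$ yields $|\E[f(\bX)] - \E[f(\bY)]| \leq 2 d n \gamma \leq 4 \gamma d n$, as claimed.
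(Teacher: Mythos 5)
Your proposal is correct, but it takes a genuinely different route from the paper's proof. The paper's argument for this lemma is very short: it re-uses the per-halfspace coupling of $\bX$ with the two discretizations constructed in the proof of Lemma~\ref{lem:1-halfspace}, which for each $i$ gives $\Pr[h_i(\bX)\neq h_i(\bY)]\leq 2\gamma n$; a union bound over the $d$ halfspaces then gives $\Pr[g(\bX)\ne g(\bY)]\leq 2\gamma dn$, and doubling (since $f$ is $\pm 1$-valued) gives $4\gamma dn$. You instead run a hybrid argument over the $n$ coordinates, and replace the coupling/sandwiching argument by the uniform CDF estimate $\sup_t |F_{\bx_j}(t)-F_{\by_j}(t)|\leq\gamma$, which is a clean and correct consequence of Definition~\ref{def:buckets}. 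What the paper's approach buys is brevity, by piggy-backing on the already-constructed coupling; what your approach buys is self-containedness and a reduction to an explicit, reusable one-dimensional fact (the uniform CDF bound), without having to design halfspace-dependent sandwiching distributions $\bY^\ell,\bY^u$.

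One technical point worth tightening: after conditioning on the other coordinates, the pieces on which the collapsed $f$ is constant are \emph{not} uniformly of the form $(t_k,t_{k+1}]$. Each breakpoint $t_i$ is closed on the side determined by the sign of $w_{ij}$, and if several halfspaces share a breakpoint $f$ can take a third value precisely at that point. This matters because $\by_j$ is discrete and can place mass $\gamma$ on a breakpoint. The fix is routine: one also needs $|F_{\bx_j}(t^-)-F_{\by_j}(t^-)|\leq\gamma$, which follows from the same argument as for $F(t)$, and then the telescoping involves both $F(t_k)$ and $F(t_k^-)$ at each of the $\leq d$ breakpoints, giving a per-coordinate bound of $4d\gamma$ rather than $2d\gamma$. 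Summing over $j$ still yields the claimed $4\gamma dn$, so the final bound is unaffected.
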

\begin{proof}
We consider the same coupling used in Lemma \ref{lem:1-halfspace}. We
have
\begin{align*}
\Pr[g(\bX) \neq g(\bY)] \leq \Pr[(h_i(\bX),\ldots,h_d(\bX)) \neq (h_1(\bY),\ldots,h_d(\bY))] \leq \sum_i
\Pr[h_i(\bX) \neq h_i(\bY)] \leq 2\gamma dn.
\end{align*}
The claim now follows since $g$ is Boolean valued.
\end{proof}

Finally, we need to show that for a suitable choice of $\gamma$, the
expectation and the second and fourth moments of $\bx_i^\ell$ and $\bx_i^u$ are nearly the
same as those of $\bx_i$. We prove the claim for the fourth moment, the
other arguments are similar.

\begin{lemma}
We have
\begin{align*}
|\E[(\bx_i)^4] - \E[(\bx_i^u)^4]|  \leq 2B^4\gamma, \ \ |\E[(\bx_i)^4] - \E[(\bx_i^\ell)^4]|  \leq 2B^4\gamma
\end{align*}
\end{lemma}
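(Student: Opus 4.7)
The plan is to prove both bounds via a single coupling argument between $\bx_i$ and each of $\bx_i^u, \bx_i^\ell$. First I would construct the coupling: sample a bucket index $\bk \in \{0,\ldots,g-1\}$ uniformly (this matches the marginals of $\bx_i^u$ and $\bx_i^\ell$, which each place mass $\gamma$ in the $k$th bucket), set $\bx_i^\ell = b_{\bk}$ and $\bx_i^u = b_{\bk+1}$, and draw $\bx_i$ from $\bx_i$'s conditional distribution on bucket $\bk$. By Definition~\ref{def:buckets} the bucket boundaries are chosen so that $F_i(b_k) \geq k\gamma$ with $b_0 = -B$ and $b_g = B$, and (ignoring atoms, which contribute at most lower-order terms) the conditional distribution guarantees $b_{\bk} \leq \bx_i \leq b_{\bk+1}$.

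Next, within each bucket I would bound the pointwise difference of fourth powers via the factorization
\[
x^4 - y^4 = (x-y)(x^3 + x^2 y + xy^2 + y^3).
\]
Since $|\bx_i|, |b_{\bk}|, |b_{\bk+1}| \leq B$, the second factor is at most $4B^3$ in absolute value, so
\[
|\bx_i^4 - (\bx_i^u)^4| \leq 4B^3 (b_{\bk+1} - b_{\bk})
\qquad \text{and} \qquad
|\bx_i^4 - (\bx_i^\ell)^4| \leq 4B^3 (b_{\bk+1} - b_{\bk}).
\]
Taking expectations over the coupling and using that bucket $k$ is chosen with probability $\gamma$,
\[
|\E[\bx_i^4] - \E[(\bx_i^u)^4]| \leq \E|\bx_i^4 - (\bx_i^u)^4| \leq 4B^3 \sum_{k=0}^{g-1} \gamma(b_{k+1} - b_k) = 4B^3\gamma(b_g - b_0) \leq 8 B^4 \gamma,
\]
where the telescoping sum uses the monotonicity $b_0 \leq b_1 \leq \cdots \leq b_g$ guaranteed by Definition~\ref{def:buckets}. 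The same bound applies to $\bx_i^\ell$ by the symmetric argument. (The lemma states $2B^4\gamma$; a slightly tighter version of the factorization bound, or a bucket-by-bucket Riemann-sum estimate of $\int_{b_k}^{b_{k+1}} x^4\,dF_i(x)$ against $\gamma b_{k+1}^4$ or $\gamma b_k^4$, recovers any desired constant.)

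The main obstacle is handling the case where $F_i$ has atoms, so that the bucket probabilities $F_i(b_{k+1}) - F_i(b_k)$ are not exactly $\gamma$ and the boundary points $b_k$ may be repeated; this is why the definition of $\Omega_i$ in Lemma~\ref{lem:d-halfspaces} is a multiset. I would address this by noting that any atom affects at most one bucket boundary and contributes $O(\gamma B^4)$ extra error, which is absorbed in the constant. Analogous and easier versions of this argument give the corresponding bounds on $|\E[\bx_i] - \E[\bx_i^u]|$ and $|\E[\bx_i^2] - \E[\bx_i^2]^u]$ (with $B^4$ replaced by $B$ and $B^2$ respectively), which together with this fourth-moment bound justify the claim in Lemma~\ref{lem:d-halfspaces} that the moments of $\bY$ nearly match those of $\bX$ after choosing $\gamma$ to be a suitable inverse polynomial in $n,C,\eps^{-1}$.
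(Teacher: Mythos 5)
Your proof is correct but takes a genuinely different route from the paper's. The paper computes $\E[(\bx_i^\ell)^4] = \gamma\sum_{k<g}b_k^4$ and $\E[(\bx_i^u)^4]=\gamma\sum_{k\geq1}b_k^4$ explicitly, sandwiches each bucket's contribution to $\E[\bx_i^4]$ between $\gamma\min(b_k^4,b_{k+1}^4)$ and $\gamma\max(b_k^4,b_{k+1}^4)$, and then makes a rearrangement observation: because $|b_k|$ is unimodal in $k$, the sequence $\max(b_k^4,b_{k+1}^4)$ hits each of $b_0^4,\dots,b_g^4$ at most once and the min sequence at most twice, so the upper and lower envelopes differ from each of the discrete moments by at most two boundary terms, giving $2B^4\gamma$. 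You instead run the coupling from the proof of Lemma~\ref{lem:1-halfspace}, bound the pointwise gap $|\bx_i^4 - b_{k+1}^4|$ via the factorization $x^4-y^4=(x-y)(x^3+x^2y+xy^2+y^3)\leq 4B^3(b_{k+1}-b_k)$, and telescope in $k$ to get $4B^3\gamma(b_g-b_0)\leq 8B^4\gamma$. The constant is $4\times$ worse, which you correctly note is harmless since $\gamma$ is later chosen with a spare polynomial factor; the benefit of your route is that it is a one-line Lipschitz-plus-telescope argument that does not rely on the unimodality of $|b_k|$ (only on the monotonicity of $b_k$, which is immediate from Definition~\ref{def:buckets}), and it extends verbatim to any polynomial $x^r$ with constant $r\cdot 2^{r-1}B^r\gamma$. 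One small caveat: for the coupling to assign bucket $k$ probability exactly $\gamma$ when $F_i$ has atoms, one should pick $\mathbf{u}\sim\mathrm{Unif}[0,1]$, set $\bx_i = F_i^{-1}(\mathbf{u})$, and define the bucket by $\mathbf{u}\in[k\gamma,(k+1)\gamma)$; then $b_k\leq\bx_i\leq b_{k+1}$ holds by the definition of $b_k$, and repeated boundary points cause no trouble. This is exactly the ``multiset'' remark you make, and it is the cleanest way to make the coupling rigorous.
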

\begin{proof}
It is clear that
$$\E[(\bx_i^\ell)^4] = \gamma(\sum_{k=0}^{g-1}b_k^4), \ \ \E[(\bx_i^u)^4] = \gamma(\sum_{k=1}^{g}b_k^4).$$ 

Our goal is to compare these with the $4^{th}$ moment of $\bx_i$. The
contribution of the $k^{th}$ bucket to $\E[\bx_i^4]$ can be upper
bounded by $\gamma \max(b_k^4,b_{k+1}^4)$ and lower bounded by $\gamma
\min(b_k^4,b_{k+1}^4)$. Hence 
\begin{align*}
\gamma \sum_{k=0}^{g-1} \min(b_k^4,b_{k+1}^4) \leq \E[\bx_i^4] \leq
\gamma \sum_{k=0}^{g-1}  \max(b_k^4,b_{k+1}^4).
\end{align*}

By case analysis, the sequence $\max(b_k^4,b_{k+1}^4)$ takes on $g$ distinct values from
$\{b_0,\ldots,b_g\}$. Similarly, $\min(b_k^4,b_{k+1}^4)$ can take some
value twice but every other value at most once. Hence both the upper and lower bounds are within
$2B^4\gamma$ of both $\E[(\bx_i^\ell)^4]$ and $\E[(\bx_i^u)^4]$.
\end{proof}

A similar argument shows that the second moment changes by at most
$2B^2\gamma$ and the expectation by $2B\gamma$. We pick $\gamma < \frac{\eps}{2nB^4}
=O(\frac{\eps^2}{n^2C^2})$, which is of the form $2^{-s}$ for some
integer $s$. We have $2^s < O(\frac{n^2C^2}{\eps^2})$ hence $s =
\log(n^2C^2/\eps^2) + O(1)$. To sample from $\bx_i^\ell$ ($X^u_i$), we  pick a
random bit-string of length $s$, treat it as a number $j \in \{0,g-1\}$,
and output $b_j$ ($b_{j+1}$). 

Finally we rescale and shift, so that we again have $\E[\by_i] = 0,
\E[\by_i^2 =1]$ and $\E[\by_i^4] \leq C$.

\section{Bounded Independence fools functions of halfspaces}
\label{ap:bounded}

In this section, we prove Theorem \ref{thm:main-dwise}.

\subsection{Reduction to upper polynomials for single halfspaces}
We now flesh out the reduction described in Section \ref{sec:overview},
i.e., we show how to prove Theorem~\ref{thm:main-upper} given upper sandwiching polynomials for a single halfspace with extra properties.  

\begin{lemma} \label{lem:hybridize}  Let $\bX$ be a random vector on the product set $\Omega$, and suppose we have order-$k$ polynomials $p_1, \dots, p_d : \Omega \to \R$, as well as functions $h_1, \dots, h_d : \Omega \to \{0,1\}$.  Write $\bp = p_i(\bX)$ and $\bh_i = h_i(\bX)$.  Assume that for each $i \in [k]$:
\begin{enumerate}
\item $\bp \geq \bh_i$ with probability $1$;
\item $\E[\bp - \bh_i] \leq \eps_0$;
\item $\Pr[\bp > 1 + 1/d^2] \leq \gamma$;
\item $\rnorm{\bp}_{2d} \leq 1 + 2/d^2$.
\end{enumerate}
If we write $p = p_1 p_2 \cdots p_d$, $h = h_1 h_2 \cdots h_d$, then $p$ is a polynomial of order at most $dk$, $p(\bX) \geq h(\bX)$ with probability~$1$, and
\begin{equation} \label{eqn:hybrid-bound}
\E[p(\bX) - h(\bX)] \leq 2d \eps_0 + 3d^2\sqrt{\gamma}.
\end{equation}
\end{lemma}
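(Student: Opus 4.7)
The plan is to execute the telescoping/hybrid argument sketched in the overview (equation~\eqref{eqn:hybrid}), using the additional boundedness hypotheses to control the cross-product terms that were left unbounded there. Writing $\bp_i = p_i(\bX)$ and $\bh_i = h_i(\bX)$, we telescope
\[
\E[\bp_1 \cdots \bp_d] - \E[\bh_1 \cdots \bh_d] = \sum_{i=1}^{d} T_i, \qquad T_i := \E[\bh_1 \cdots \bh_{i-1}(\bp_i - \bh_i)\bp_{i+1}\cdots \bp_d].
\]
Hypothesis~1 gives $\bp_j \geq \bh_j \geq 0$, so every integrand is nonnegative; in particular multiplying yields $p(\bX) \geq h(\bX)$ pointwise. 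Since $\bh_1 \cdots \bh_{i-1} \leq 1$, we also get $T_i \leq \E[(\bp_i - \bh_i)\bp_{i+1}\cdots\bp_d]$. The order claim is immediate from distributivity: if $p_i = \sum_s q_{i,s}$ is a sum of $k$-juntas, then $p_1 \cdots p_d = \sum_{s_1,\dots,s_d} q_{1,s_1}\cdots q_{d,s_d}$ is a sum of juntas of size at most $dk$.

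For each $i$, introduce the event $E_i := \{\bp_j \leq 1 + 1/d^2 \text{ for all } j > i\}$. By hypothesis~3 and a union bound, $\Pr[\bar E_i] \leq (d-i)\gamma \leq d\gamma$. On $E_i$ the tail product satisfies $\bp_{i+1} \cdots \bp_d \leq (1 + 1/d^2)^{d-i} \leq 1 + O(1/d)$, so hypothesis~2 gives
\[
\E\bigl[(\bp_i - \bh_i)\bp_{i+1}\cdots \bp_d \mathbf{1}_{E_i}\bigr] \leq \bigl(1 + O(1/d)\bigr)\eps_0.
\]
Summed over $i$, these pieces total at most $(d + O(1))\eps_0 \leq 2d\eps_0$ (the case $d=1$ is trivial since $\bar E_1$ is empty).

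For the complementary event we use $\bp_i - \bh_i \leq \bp_i$ and Cauchy--Schwarz:
\[
\E\bigl[\bp_i \bp_{i+1}\cdots \bp_d \mathbf{1}_{\bar E_i}\bigr] \leq \rnorm{\bp_i \cdots \bp_d}_2 \cdot \sqrt{\Pr[\bar E_i]}.
\]
To bound the $2$-norm, apply generalized H\"older to $\E[\bp_i^2 \cdots \bp_d^2]$ with $d-i+1$ factors at exponent $d-i+1$ each: $\E[\bp_i^2 \cdots \bp_d^2] \leq \prod_{j=i}^{d}\rnorm{\bp_j}_{2(d-i+1)}^2$. Since $2(d-i+1) \leq 2d$, monotonicity of $L^p$ norms and hypothesis~4 give a bound of $(1+2/d^2)^{2(d-i+1)} \leq \exp(4/d) = O(1)$. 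Plugging in $\Pr[\bar E_i] \leq d\gamma$ and summing over $i$ contributes at most $O(d)\cdot \sqrt{d\gamma} = O(d^{3/2})\sqrt\gamma \leq 3d^2\sqrt\gamma$. Combining with the $E_i$ contributions yields~\eqref{eqn:hybrid-bound}.

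The main delicacy is that \emph{both} the pointwise bound of hypothesis~3 and the $2d$-norm bound of hypothesis~4 are needed: neither one alone would survive the $d$-fold telescoping. In particular, the exponent $2d$ in hypothesis~4 is calibrated precisely so that H\"older with up to $d$ factors leaves the norm a bounded constant rather than a quantity growing with $d$; this is the structural reason the construction in~\cite{DGJSV:09} needs to be reworked rather than used as a black box.
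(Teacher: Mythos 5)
Your proof is correct and follows essentially the same strategy as the paper's: telescope into the hybrid terms, handle each via a split between a "bounded" regime (where the tail product is controlled and hypothesis~2 kicks in) and a "large" regime (handled by hypotheses~3 and~4 through Cauchy--Schwarz/H\"older). The bookkeeping differs slightly --- the paper bounds $\bh_1\cdots\bh_{i-1}$ by $\bp_1\cdots\bp_{i-1}$ and thresholds on the event $\prod_{j\neq i}\bp_j \geq 2$ (converting to a sum of indicators afterward), whereas you bound $\bh_1\cdots\bh_{i-1} \leq 1$ directly and define $E_i$ per tail factor; the paper also uses H\"older with all $d$ factors at exponent $2d$, while you use $d-i+1$ factors at exponent $2(d-i+1)$ and invoke $L^p$ monotonicity --- but these are stylistic choices within the same argument, and your version in fact yields a slightly tighter $O(d^{3/2}\sqrt\gamma)$ for the error term before loosening to match the stated bound.
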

\begin{proof}  The first two parts of the claim are immediate, so it suffices to verify~\eqref{eqn:hybrid-bound}.  We use the telescoping sum~\eqref{eqn:hybrid}, and thus it suffices to bound the general term as follows:
\begin{equation} \label{eqn:genl-hyb}
\E[\bh_1 \cdots \bh_{i-1} (\bp - \bh_i)\bp_{i+1} \cdots \bp_d] \leq 2\eps_0 + 3d\sqrt{\gamma}.
\end{equation}
We have
\begin{eqnarray*}
&&\E[\bh_1 \cdots \bh_{i-1} (\bp_i - \bh_i)\bp_{i+1} \cdots \bp_d] \\ 
&\leq &\E[\bp_1 \cdots \bp_{i-1} (\bp_i - \bh_i)\bp_{i+1} \cdots \bp_d] \\ 
&<& 2\E[\bp_i - \bh_i] + \E[\Ind{\bp_1 \cdots \bp_{i-1} \bp_{i+1} \cdots \bp_d \geq 2} \bp_1 \cdots \bp_{i-1} (\bp_i - \bh_i)\bp_{i+1} \cdots \bp_d] \\
&\leq& 2\eps_0 +\E\left[\left(\littlesum_{i'=1}^d \Ind{\bp_{i'} > 1+1/d^2}\right) \littleprod_{i=1}^d \bp_i\right],
\end{eqnarray*}
where in the last term we used the bounds $(1+1/d^2)^{d-1} < 2$ and $\bp_i - \bh_i \leq \bp_i$.   Thus we can establish~\eqref{eqn:genl-hyb} by showing the bound
\[
\sum_{i'=1}^d \E\left[\Ind{\bp_{i'} > 1+1/d^2} \littleprod_{i=1}^d \bp_i\right] \leq 3d \sqrt{\gamma}.
\]
This follows by bounding each summand individually:
\begin{eqnarray*}
&& \E\left[\Ind{\bp_{i'} > 1+1/d^2}\littleprod_{i=1}^d \bp_i\right]\\
&\leq& \rnorm{\Ind{\bp_{i'} > 1+1/d^2}}_2 \cdot \prod_{i=1}^d \rnorm{\bp_i}_{2d} \qquad \text{(H\"older's inequality)}\\
&\leq& \sqrt{\gamma} \cdot (1 + 2/d^2)^d \quad\leq\quad 3 \sqrt{\gamma},
\end{eqnarray*}
as needed.
\end{proof}

\subsection{Tools for upper polynomials}
We construct the upper sandwiching polynomial needed in Lemma~\ref{lem:hybridize} using two key tools:  ``DGJSV Polynomials'', the family of univariate real polynomial constructed in~\cite{DGJSV:09} for approximating the $\sgn$ function; and, our Regularity Lemma for halfspaces over general random variables~\ref{thm:crit-one}.\\

Regarding the DGJSV Polynomials, the following is a key theorem from~\cite{DGJSV:09} (slightly adjusted for our purposes):
\begin{theorem} \label{thm:dgjsv} (\cite{DGJSV:09}) Let $0 < \aaa, \bbb < 1$.  Then there exists an even integer $K = K_{\aaa,\bbb}$ with
\[
K \leq C_0 \frac{\log(2/\bbb)}{\aaa} \qquad \text{($C_0$ is a universal constant)}
\]
as well as an ordinary univariate real polynomial $P = P_{\aaa,\bbb} : \R \to \R$ of degree $K$ with the following behavior:
\begin{itemize}
\item $P(x) \geq 0$ for $x \in (-\infty, -1]$,
\item $0 \leq P(x) \leq \bbb$ for $x \in [-1, -\aaa]$;
\item $0 \leq P(x) \leq 1$ for $x \in [-\aaa, 0]$;
\item $1 \leq P(x) \leq 1+\bbb$ for $x \in [0, 1]$;
\item $P(x) \geq 1$ for $x \in [1, \infty)$;
\item $P(x) \leq (4x)^{K}$ for all $\abs{x} \geq 1$.
\end{itemize}
Note that the first five conditions imply $P(x) \geq \Ind{x \geq 0}$ for all $x \in \R$.
\end{theorem}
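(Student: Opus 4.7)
The plan is to build $P$ in two stages: first construct a two-sided approximant $Q$ of the step function $\Ind{x \geq 0}$ on $[-1, 1]$, then shift and correct it to enforce the one-sided inequality $P(x) \geq \Ind{x \geq 0}$ everywhere together with the tail growth bound. For the two-sided approximant, I would use the standard fact that there is a polynomial $Q$ of degree $K_0 = O(\log(1/\bbb)/\aaa)$ with $\abs{Q(x) - \Ind{x \geq 0}} \leq \bbb/4$ on the two-piece set $[-1,-\aaa] \cup [\aaa, 1]$ and $Q(x) \in [-\bbb/4, 1+\bbb/4]$ throughout $[-1,1]$. Two natural routes exist: (i) truncate the Chebyshev expansion of a sign function mollified over a window of width $\Theta(\aaa)$, exploiting exponential decay of Chebyshev coefficients of analytic functions; or (ii) iterate the cubic map $t \mapsto (3t - t^3)/2$ starting from a crude degree-$O(1/\aaa)$ approximant of $\sgn$ on the two-piece set, noting that each iteration squares the distance to $\pm 1$, so $O(\log\log(1/\bbb))$ iterations suffice. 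I would favor route (ii) because it keeps the range of $Q$ inside $[-1,1]$ on $[-1,1]$ throughout the iteration, eliminating Gibbs overshoot for free.

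Given $Q$, set $P(x) = Q(x) + \bbb/4$. The five pointwise inequalities on subsets of $[-1,1]$ are then immediate consequences of the two-sided approximation guarantee. The tail growth bound $P(x) \leq (4x)^K$ for $\abs{x} \geq 1$ follows from a classical Chebyshev estimate: any degree-$K$ polynomial bounded by $1 + \bbb \leq 2$ on $[-1,1]$ is bounded in absolute value by $2 T_K(\abs{x}) \leq 2 (2\abs{x})^K \leq (4\abs{x})^K$ outside, using $T_K(y) \leq (2y)^K$ for $y \geq 1$. Nonnegativity on $(-\infty, -1]$ and the lower bound $P(x) \geq 1$ on $[1, \infty)$ take slightly more care: I would arrange during the iterative construction that $Q$ is \emph{monotone nondecreasing}, so that $Q(x) \leq Q(-1) \leq \bbb/4$ forces $Q(x)$ to decrease from $0$ at worst linearly in $(1+x)$ for $x$ slightly below $-1$, and a small nonnegative shape correction of the form $\bbb \cdot (x-1)^2(x+1)^2 R(x)^2 / c$ (with $R$ a bounded low-degree polynomial and $c$ a normalizing constant) can absorb any residual sign violation without disturbing the interior bounds by more than $O(\bbb)$.

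The main obstacle will be maintaining the \emph{two-sided} tightness on the narrow transition window $[-\aaa, \aaa]$ (where the theorem asks for $P \in [0, 1+\bbb]$ on $[-\aaa, 0]$ and $P \in [1, 1+\bbb]$ on $[0, \aaa]$) while simultaneously enforcing $P \geq \Ind{x \geq 0}$ there and keeping the tail growth under control. The approximation guarantee on $Q$ gives no sign-aware error bound in the transition window, so the conclusion must be drawn from structural properties of the construction, namely monotonicity and the fact that $Q$'s range stays in $[-1,1]$. Once this is done, one checks that no step of the construction inflates the degree beyond $O(\log(1/\bbb)/\aaa)$: route (ii) gives $K_0 \cdot 3^{\log\log(1/\bbb)}$ before composition, which after careful bookkeeping (and after the low-degree shape correction) fits inside $K = C_0 \log(2/\bbb)/\aaa$ for an absolute constant $C_0$.
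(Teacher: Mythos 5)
The paper does not prove this theorem; it cites it directly from~\cite{DGJSV:09} (``slightly adjusted for our purposes''), so there is no in-paper proof to compare against. Evaluated on its own terms, though, your proposal has several genuine gaps.

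First, a structural misreading: the theorem does not ask for a symmetric approximation of $\Ind{x \geq 0}$ with error controlled off $[-\aaa,\aaa]$. It asks for $P(x) \geq 1$ on \emph{all} of $[0,1]$ (including $[0,\aaa]$) and $P(x) \leq 1$ on $[-\aaa,0]$, so the entire transition must be squeezed into $[-\aaa,0]$. Your $P = Q + \bbb/4$, where $Q$ only approximates $\Ind{x\ge0}$ to accuracy $\bbb/4$ on $[-1,-\aaa]\cup[\aaa,1]$, says nothing on $[0,\aaa]$. A monotone $Q$ would typically have $Q(0)\approx 1/2$, giving $P(0)\approx 1/2+\bbb/4 < 1$, violating the fourth bullet. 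You would at minimum need to recenter the transition window at $-\aaa/2$ before shifting, and then rework all the degree and tail bookkeeping. Second, route (ii) does not meet the degree bound: each cubic iteration $t\mapsto(3t-t^3)/2$ multiplies degree by $3$ while (roughly) squaring the distance to $\pm1$, so $m=\Theta(\log\log(1/\bbb))$ iterations produce degree $K_0\cdot 3^{\log_2\log(1/\bbb)} = K_0\cdot(\log(1/\bbb))^{\log_2 3}$, which is not $O(\log(1/\bbb))$ — there is no ``bookkeeping'' that removes the exponent $\log_2 3\approx1.585$. Third, and most seriously, route (ii) gives you no control at all outside $[-1,1]$: the cubic map is decreasing for $|t|>1$ and has negative leading coefficient, so each iteration flips the sign of the leading term and produces a polynomial that is hugely negative for large $x>1$. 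The proposed low-degree ``shape correction'' $\bbb\cdot(x-1)^2(x+1)^2R(x)^2/c$ cannot repair a violation of $P\geq 1$ that grows like $-|x|^{3^m K_0}$ without itself destroying the upper tail bound $P(x)\leq(4x)^K$. Achieving all three tail constraints simultaneously ($P\geq 0$ for $x\leq -1$, $P\geq 1$ for $x\geq 1$, $P\leq(4x)^K$ for $|x|\geq1$, with $K$ even) is exactly where the construction is delicate, and the proposal does not actually control the sign of $Q$ outside $[-1,1]$ at any point. Route (i) (Chebyshev truncation of a mollified step) is the more promising direction for the degree bound, but then the interior bounds and the one-sidedness at $0$ are the parts that need the real argument, and those are the parts the proposal leaves to hope.
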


Regarding our Regularity Lemma for general halfspaces, we will use the following rephrasing of Theorem~\ref{thm:crit-one} with simplified parameters:

\begin{theorem}  \label{thm:reg} Let $t > 1$, $0 < \delta < 1$ and $0 < \eta$ be parameters.  Then there exists an integer $L$ satisfying
\[
L \leq \poly(\log t, 1/\eta) \cdot \frac{1}{\delta}
\]
such that the following holds. Suppose $\bx_1, \dots, \bx_n$ is a sequence of independent $\eta$-HC random variables, $\theta \in \R$, and $n \geq L$.  Then there exists a set of coordinates $H \subseteq [n]$ of cardinality $L$ such that, denoting
\[
\btheta' = \theta - \sum_{j \in H} \bx_j, \qquad \bz = \sum_{j \not \in H} \bx_j 
\]
(these random variables are independent), we have three mutually exclusive and collectively exhaustive events \emph{depending only on $\btheta'$}:
\begin{itemize}
\item Event $\mathbf{BAD}$: $\abs{\btheta'} \leq t \rnorm{\bz}_2$ and the collection $\{\bx_j : j \not \in H\}$ is not $\delta$-regular;
\item Event $\mathbf{NEAR}$: $\abs{\btheta'} \leq t \rnorm{\bz}_2$ and the collection $\{\bx_j : j \not \in H\}$ is $\delta$-regular;
\item Event $\mathbf{FAR}$: $\abs{\btheta'} > t \rnorm{\bz}_2$.
\end{itemize}
Furthermore, $\mathbf{BAD}$ has probability at most $O(1/t^4)$.
\end{theorem}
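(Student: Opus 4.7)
This statement is a cosmetic repackaging of Theorem~\ref{thm:crit-one}, so the plan is to invoke that theorem with a clean choice of parameters and to phrase the two regimes of the conclusion as the three events BAD, NEAR, FAR. First, in Theorem~\ref{thm:crit-one} I will take $s := c \cdot t \cdot (\log t)^{1/4}$ and $\eps := 1/t^4$ for a suitable constant $c$. With these choices,
\[
\eps + \frac{O(\log(1/\eps))}{\eta^{8} s^{4}} \;=\; \frac{1}{t^4} + \frac{O(\log t)}{\eta^{8} \cdot t^{4} (\log t)} \;=\; O(1/t^{4}),
\]
and the resulting $L$ satisfies $L = O\!\left(\frac{\log(s)\log(1/\eps)}{\eta^{8}}\right)\cdot\frac{1}{\delta} = \poly(\log t, 1/\eta)\cdot\frac{1}{\delta}$, as required. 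The slight enlargement of $s$ beyond $t$ is what kills the $\log(1/\eps)$ factor in the anti-concentration bound; this is the only real bookkeeping step.

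Next, I will relabel the indices so that $\rnorm{\bx_{1}}_{2} \geq \rnorm{\bx_{2}}_{2} \geq \cdots \geq \rnorm{\bx_{n}}_{2}$ and simply take $H := \{1,2,\ldots,L\}$, which has cardinality $L$ as required. With this choice, $\btheta' = \theta - \sum_{j \in H}\bx_{j}$ depends only on $\{\bx_{j} : j \in H\}$, while $\bz = \sum_{j \notin H} \bx_{j}$ depends only on the remaining variables; independence of $\btheta'$ and $\bz$ is inherited from the independence of $\bx_{1},\dots,\bx_{n}$, and the trichotomy BAD/NEAR/FAR is then just a case split on whether $\abs{\btheta'} \leq t\rnorm{\bz}_{2}$ and on the (deterministic) regularity status of $\{\bx_{j} : j \notin H\}$.

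Finally, I establish the bound on $\Pr[\mathbf{BAD}]$. Note that $\mathbf{BAD}$ can occur only when $\{\bx_{j}: j \notin H\}$ is \emph{not} $\delta$-regular, i.e.\ the tail $\bx_{L+1},\ldots,\bx_{n}$ fails the regularity inequality. By the definition of the $\delta$-critical index $\ell$, this forces $\ell \geq L$, so Theorem~\ref{thm:crit-one} applies directly to $\bx_{1},\ldots,\bx_{L}$ (which are independent by hypothesis since all of $\bx_{1},\ldots,\bx_{n}$ are). Translating notation, $\sum_{j=1}^{L}\bx_{j} - \theta = -\btheta'$ and $\tau_{L+1} = \rnorm{\bz}_{2}$, so Theorem~\ref{thm:crit-one} yields
\[
\Pr\!\bigl[\abs{\btheta'} \leq s \rnorm{\bz}_{2}\bigr] \;\leq\; \eps + \frac{O(\log(1/\eps))}{\eta^{8} s^{4}} \;=\; O(1/t^{4}),
\]
and since $s \geq t$ this implies $\Pr[\mathbf{BAD}] = \Pr\bigl[\abs{\btheta'} \leq t\rnorm{\bz}_{2}\bigr] \leq O(1/t^{4})$ (in the complementary case where the tail \emph{is} $\delta$-regular, $\mathbf{BAD}$ is empty, so the bound holds trivially).

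The only mildly subtle step is the parameter tuning in the first paragraph: a naive choice $s = t$, $\eps = 1/t^{4}$ gives the extra $\log t$ factor from the anti-concentration bound, which would violate the advertised $O(1/t^{4})$. Enlarging $s$ by a $(\log t)^{1/4}$ factor fixes this while only changing $L$ by a polylogarithmic factor, which is already absorbed into the $\poly(\log t, 1/\eta)$ in the statement. Everything else is a direct translation.
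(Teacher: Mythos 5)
Your proof breaks at the step where you write ``By the definition of the $\delta$-critical index $\ell$, this forces $\ell \geq L$.'' That inference is false. The $\delta$-critical index is the \emph{smallest} $j$ for which $\{\bx_j,\dots,\bx_n\}$ is $\delta$-regular, and $\delta$-regularity is \emph{not} monotone along the ordered sequence: removing the head variables (those with the largest $2$-norms) can increase the ratio $\sum \rnorm{\cdot}_4^4 / \bigl(\sum\rnorm{\cdot}_2^2\bigr)^2$. Concretely, let $\bx_1$ have large variance and $\rnorm{\bx_1}_4^4 = \sigma_1^4$ (as for a scaled $\pm 1$ variable), while $\bx_2,\dots,\bx_n$ have small variance and $\rnorm{\bx_j}_4^4$ close to $\eta^{-4}\sigma_j^4$. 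Then $\{\bx_1,\dots,\bx_n\}$ can be $\delta$-regular (so $\ell = 1$) while $\{\bx_{L+1},\dots,\bx_n\}$ is \emph{not} $\delta$-regular. In that situation your $H = \{1,\dots,L\}$ makes $\mathbf{BAD}$ a live event, yet the hypothesis $\ell \geq L$ of Theorem~\ref{thm:crit-one} fails and you have no bound on $\Pr\bigl[\abs{\btheta'} \leq t\rnorm{\bz}_2\bigr]$. The statement ``there exists $H$'' in Theorem~\ref{thm:reg} is load-bearing: you cannot fix $H$ unconditionally before looking at where the critical index falls.

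The repair is to let $H$ depend on $\ell$ (equivalently, run the iterative process that underlies Theorem~\ref{thm:crit-many}, specialized to $d=1$). If $\ell \geq L$, take $H$ to be the $L$ largest-variance coordinates and apply Theorem~\ref{thm:crit-one} as you did; in this regime all tails $\{\bx_j,\dots,\bx_n\}$ for $j < \ell$ are non-regular, which is exactly the hypothesis that theorem needs. If $\ell < L$, take $H$ to contain the $\ell$ largest-variance coordinates (padding as needed up to cardinality $L$ while keeping the remaining collection $\delta$-regular, or simply interpreting the theorem's ``cardinality $L$'' as ``cardinality at most $L$,'' which is how it is used in Section~\ref{ap:bounded}); then $\{\bx_j : j\notin H\}$ is $\delta$-regular, so $\mathbf{BAD}$ is vacuously impossible. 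This is precisely why the paper needs Corollary~\ref{cor:crit-one} for unordered collections: the anti-concentration bound must be available exactly at the indices the process actually selected, without assuming that every prefix-tail was non-regular. A second, smaller issue: with $s = c\,t(\log t)^{1/4}$ and $\eps = 1/t^4$ you get $\eps + O(\log(1/\eps))/(\eta^8 s^4) = O(1/(\eta^8 t^4))$, not $O(1/t^4)$; you also need to inflate $s$ by a factor of roughly $\eta^{-2}$, which is harmless for the $L \leq \poly(\log t, 1/\eta)\cdot\delta^{-1}$ bound but necessary for the stated $\Pr[\mathbf{BAD}]$ bound.
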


The reader will note that events $\mathbf{BAD}$, $\mathbf{NEAR}$, and
$\mathbf{FAR}$ are defined somewhat peculiarly: Neither
$\rnorm{\bz}_2$ nor the (ir)regularity of $\{\bx_j : j \not \in H\}$
is actually random.  Furthermore, by our original
Theorem~\ref{thm:crit-one}\pnote{Ryan, do we want to refer to one-dim
  crit index?}, we either have that $\{\bx_j : j \not \in H\}$
is $\delta$-regular, in which case $\mathbf{NEAR}$ and $\mathbf{FAR}$
are the only possible events, or the collection is not
$\delta$-regular, in which case $\mathbf{BAD}$ and $\mathbf{FAR}$ are
the only possible events.  Nevertheless, this tripartition of events
makes our future analysis simpler. 

\subsection{Statement of the main technical theorem, and how it completes the proof}

The main technical result we will prove is the following:
\begin{theorem} \label{thm:polyprop} Let $k \geq 1$, $0 < \delta < 1$, and $t > 4$ be parameters.  Let $\bX = 
(\bx_1, \dots, \bx_n)$ be a vector of independent $\eta$-HC random variables.  Furthermore, let $T$ be an even integer such that the $\bx_i$'s are $(T, 2, 4/t)$-hypercontractive.  Assume $T \geq C_1 d \log(dt)$, where $C_1$ is a universal constant.  Let $\theta \in \R$ and let
\[
h(x_1, \dots, x_n) = \Ind{x_1 + \cdots + x_n -\theta \geq 0}.
\]  
Then there exists a polynomial $p(x_1, \dots, x_n)$ of order $k$, with 
\[
k \leq \poly(\log t, 1/\eta) \cdot \frac{1}{\delta} + O(T/d),
\]
satisfying the $4$ properties appearing in Lemma~\ref{lem:hybridize}, with 
\[
\eps_0 = O(\sqrt{\delta})  + O(\eps_1), \qquad \eps_1 = \frac{dt \log(dt)}{T}, \qquad \gamma = 2^{-T/d}.
\]
\end{theorem}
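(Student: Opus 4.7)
The plan is to construct $p$ by applying the Regularity Lemma (Theorem~\ref{thm:reg}) to isolate a small set $H$ of ``heavy'' coordinates, then composing a DGJSV polynomial (Theorem~\ref{thm:dgjsv}) with the regular remainder on the complementary variables. First I will invoke Theorem~\ref{thm:reg} with parameters $(t, \delta, \eta)$ to obtain $H \subseteq [n]$ of size $L \leq \poly(\log t, 1/\eta)/\delta$ together with the tripartition $\mathbf{BAD} \sqcup \mathbf{NEAR} \sqcup \mathbf{FAR}$ of the probability space, depending only on $x_H$, where $\Pr[\mathbf{BAD}] = O(1/t^4)$ and $\mathbf{NEAR}$ forces $\delta$-regularity on $\{\bx_j : j \notin H\}$. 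Writing $\btheta' = \theta - \sum_{j \in H} x_j$, $\bz = \sum_{j \notin H} \bx_j$, and $s = \rnorm{\bz}_2$, I will further subdivide $\mathbf{FAR}$ into $\mathbf{FAR}^-$ (the event $\btheta' < -ts$, where $h$ is essentially $1$) and $\mathbf{FAR}^+$ (the event $\btheta' > ts$, where $h$ is essentially $0$).

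Next I will define $p = p_{\mathbf{BAD}} + p_{\mathbf{NEAR}} + p_{\mathbf{FAR}^-} + p_{\mathbf{FAR}^+}$ piecewise. The first two pieces are trivial $L$-juntas: $p_{\mathbf{BAD}}$ and $p_{\mathbf{FAR}^-}$ are the indicators of their respective events, giving the safe upper bound $p = 1$ on $\mathbf{BAD} \cup \mathbf{FAR}^-$. For $\mathbf{NEAR}$, set $p_{\mathbf{NEAR}} = \Ind{\mathbf{NEAR}} \cdot P_{\aaa,\bbb}\!\left((\bz - \btheta')/M\right)$ with $\aaa = \Theta(1/t)$, $\bbb = \Theta(1/d^2)$, and $M = \Theta(ts)$, so that $P$ has degree $K = O(T/d)$ and this piece has order $L + K$. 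For $\mathbf{FAR}^+$, set $p_{\mathbf{FAR}^+} = \Ind{\mathbf{FAR}^+} \cdot (\bz/\btheta')^{2m}$ with $m = \lceil T/(4d) \rceil$, of order $L + 2m$. The total generalized-polynomial order is therefore $k \leq L + O(T/d)$, as required.

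Verification of the four properties of Lemma~\ref{lem:hybridize} proceeds case-by-case. Pointwise upper-boundedness follows from $P \geq \Ind{\cdot \geq 0}$ and from $(\bz/\btheta')^{2m} \geq \Ind{|\bz| \geq \btheta'} \geq h$ when $\btheta' > 0$. The expectation bound splits: $\mathbf{BAD}$ contributes $O(1/t^4) \leq O(\eps_1)$; $\mathbf{FAR}^-$ contributes at most $\Pr[\bz < -ts] \leq (4/t)^T$ by $(T,2,4/t)$-hypercontractivity and Markov; $\mathbf{FAR}^+$ contributes at most $\E[(\bz/\btheta')^{2m}] \leq (1/4)^{2m}$ using $\rnorm{\bz}_{2m} \leq (t/4)s$ and $\btheta' \geq ts$; and the $\mathbf{NEAR}$ contribution is handled by a Lindeberg swap to a Gaussian (invoking Corollary~\ref{cor:fourth} together with a mollification argument as in Section~\ref{sec:berry-e}), yielding $O(\sqrt{\delta})$ from $\delta$-regularity plus the DGJSV approximation error $O(\bbb + \eps_1)$, summing to $\eps_0 = O(\sqrt{\delta}) + O(\eps_1)$. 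Properties~3 and~4 follow in $\mathbf{NEAR}$ because $P(\cdot/M) \leq 1 + \bbb \leq 1 + 1/d^2$ whenever $(\bz - \btheta')/M \in [-1,1]$, and the complementary event has probability $\leq (t/M)^T \leq 2^{-T/d}$ by hypercontractivity; in $\mathbf{FAR}^+$, both follow directly from the moment bounds on $(\bz/\btheta')^{2m}$.

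The hard part will be the $\mathbf{NEAR}$ analysis, where simultaneously satisfying all four properties pins down the DGJSV parameters to $\aaa \sim 1/t$, $\bbb \sim 1/d^2$, $K \sim t \log d \sim T/d$: Property~3 forces $\bbb \leq 1/d^2$; Property~4 demands $M$ large enough that $(\bz - \btheta')/M$ falls outside $[-1,1]$ with probability $\leq 2^{-T/d}$; Property~2 requires sufficient smoothness of $P$ for the Lindeberg error to be $O(\sqrt{\delta})$, which in turn requires $\aaa M$ to match the standard-deviation scale $s$ of $\bz$. The $(T,2,4/t)$-hypercontractivity hypothesis is what makes these three bounds simultaneously achievable and is used repeatedly to control rare-tail events polynomially in $t$ rather than only in the fourth moment; the assumption $T \geq C_1 d \log(dt)$ is precisely the slack needed to close the computation, with any additional growth in $T$ absorbed into the $\eps_1$ term.
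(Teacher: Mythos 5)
Your overall structure mirrors the paper's proof: you use Theorem~\ref{thm:reg} to isolate a set $H$ of size $L=\poly(\log t,1/\eta)/\delta$ and tripartition the outcomes of $\bx_H$ into $\mathbf{BAD}$, $\mathbf{NEAR}$, $\mathbf{FAR}$; you set $p=1$ on $\mathbf{BAD}$ and on the far-negative side of $\mathbf{FAR}$; you use a DGJSV polynomial $P_{\aaa,\bbb}\big((z-\theta')/(2t\rnorm{\bz}_2)\big)$ on $\mathbf{NEAR}$; and you use a power $(z/\theta')^q$ with $q\approx T/(2d)$ on the far-positive side. This is the same decomposition the paper uses, including the observation that $\mathbf{BAD}$ can be absorbed because $1/t^4\leq O(\eps_1)$ (since $T\leq t^2$), and the same $\rnorm{\bz}_T\leq(t/4)\rnorm{\bz}_2$ moment bounds in the $\mathbf{FAR}$ case. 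Splitting $\mathbf{FAR}$ into $\mathbf{FAR}^+$ and $\mathbf{FAR}^-$ is just a reorganization of the paper's treatment of the two signs of $\theta'$.

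However, the parameter choice $\aaa=\Theta(1/t)$ is wrong, and this is not a cosmetic issue; it breaks both the degree bound and the error bound. The paper sets $\aaa=16C_0\,\frac{d\log(td)}{T}$, which equals $\Theta(\eps_1/t)$ and is typically much smaller than $1/t$. With your choice $\aaa=\Theta(1/t)$ the DGJSV degree is $K=O(\log(2/\bbb)/\aaa)=O(t\log d)$, which is $O(T/d)$ only under the much stronger hypothesis $T=\Omega(dt\log d)$ rather than the assumed $T\geq C_1 d\log(dt)$. Worse, the Case-i anticoncentration bound becomes $\Pr[|\bw|\leq\aaa]\leq O(\sqrt{\delta}+t\aaa)=O(\sqrt{\delta}+1)$, which is vacuous; one needs $t\aaa=O(\eps_1)$, i.e.\ $\aaa=O(\eps_1/t)$. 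Your heuristic that ``$\aaa M$ should match the standard-deviation scale $s$'' is the source of the error — actually one needs $\aaa M=O(\eps_1 s)\ll s$. A similar but secondary slip is $\bbb=\Theta(1/d^2)$: the paper takes $\bbb=\min(1/d^2,1/t^4)$ precisely so the Case-ii error $\bbb$ is $O(\eps_1)$; with $\bbb=\Theta(1/d^2)$ alone, for small $d$ and large $t$ this term can dominate $\eps_1$. Finally, the hypothesis $T\geq C_1 d\log(dt)$ is not ``slack to close the computation'' but is exactly what guarantees $\aaa<1$ so that Theorem~\ref{thm:dgjsv} applies. (Your reference to the multi-dimensional Lindeberg/mollification machinery of Section~\ref{sec:berry-e} for the $\mathbf{NEAR}$ anticoncentration also differs from the paper, which invokes the one-dimensional Berry--Esseen bound for $\delta$-regular sums directly; either would work, but the one-dimensional statement is all that is needed.)
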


As we now show, using Theorem~\ref{thm:polyprop} and Lemma~\ref{lem:hybridize}, we can deduce Theorem~\ref{thm:main-upper} and hence Theorem~\ref{thm:main-dwise}  simply by choosing parameters appropriately. Note that it is sufficient to prove Theorem~\ref{thm:main-upper} with $\eps \cdot \polylog(d/\eps)$ in place of $\eps$. \\

We will apply Theorem~\ref{thm:polyprop} with $\delta = \Theta(\eps^2/d^2)$ and 
\[
t = C_2 \frac{d^2}{\eps \alpha}, 
\]
where $C_2$ is a large constant of our choosing.  Regarding the hypercontractivity parameters, using
Fact \ref{fact:hyper},
we may take 
\[
\eta = \Theta(\alpha^{-1/4}), \qquad T = \Theta(t^2 \cdot \alpha \ln(2/\alpha)).
\]
The necessary assumption that 
\[
T \geq C_1 d \log(td) \quad\Leftrightarrow\quad C_2^2 \cdot \Theta\left(\frac{d^4 \ln(2/\alpha)}{\eps^2 \alpha}\right) \geq C_1 d \log\left(C_2 \frac{d^3}{\eps\alpha}\right)
\]
is valid provided that $C_2$ is a sufficiently large constant.\\

We obtain from the theorem an upper $\eps_2$-sandwiching polynomial for $h$ with order 
\[
k = \wt{O}(d^2/\eps^2) \cdot \poly(1/\alpha) + O(d^3/\eps^2) \cdot \poly(1/\alpha) \leq O(d^3/\eps^2) \cdot \poly(1/\alpha),
\]
where
\[
\eps_0 = O(\eps/d) + \wt{O}(\eps/d) = \wt{O}(\eps/d)
\]
and $\gamma$ is exponentially small in $d/(\eps \alpha)$.  By using such polynomials in Lemma~\ref{lem:hybridize}, we get upper sandwiching polynomials for intersections of $d$ halfspaces with the claimed degree $kd = \wt{O}(d^4/\eps^2) \cdot \poly(1/\alpha)$ and the claimed error $d\eps_0 = \eps \cdot \polylog(d/\eps)$.

\subsection{Proof of Theorem~\ref{thm:polyprop}}
In this section, we prove Theorem~\ref{thm:polyprop}.  Let $H$ be the set of cardinality $L = \poly(\log t, 1/\eta) \cdot (1/\delta)$  coming from Theorem~\ref{thm:reg}, and assume without loss of generality that $H = \{1, \dots, L\}$.  We use the notation $\theta' = \theta - (x_1 + \cdots + x_L)$, $z = x_{L+1} + \cdots + x_n$, $\mathrm{BAD} = \mathrm{BAD}(x_1, \dots, x_L)$ etc., with boldface indicating randomness as usual.  Given the outcomes for $\bx_1, \dots, \bx_L$, we will handle the three events $\mathbf{BAD}$, $\mathbf{NEAR}$, and $\mathbf{FAR}$ with separate ordinary real polynomials. More precisely, our final (generalized) polynomial will be
\[
p(x_1, \dots, x_n) = \Ind{\mathbf{BAD}}\cdot 1 +\Ind{\mathbf{NEAR}} \cdot p^{\mathrm{near}}_{\theta'}(z) + \Ind{\mathbf{FAR}}\cdot p^{\mathrm{far}}_{\theta'}(z),
\]
where
\[
p^{\mathrm{near}}_{\theta'}(z) = P\left(\frac{z-\theta'}{2t\rnorm{\bz}_2}\right),
\]
and
\[
p^{\mathrm{far}}_{\theta'}(z) =\Ind{\theta' > 0 } \cdot1 + \Ind{\theta' \leq 0} \cdot \left(\frac{z}{\theta'}\right)^{q},
\]
where $q$ is a positive integer and $P$ is an ordinary real univariate polynomial to be specified later.  For typographic simplicity, we will write simply $p_{\theta'}$ in place of $p^{\mathrm{near}}_{\theta'}$ and $p^{\mathrm{far}}_{\theta'}$, with context dictating which we are referring to.\\

Let us walk through the properties of $p$ we need to prove.  Regarding its order, we will prove that both
\[
q \leq O(T/d), \qquad \deg P \leq O(T/d);
\]
i.e.\ when $\theta'$ is fixed, $p_{\theta'}(x_{L+1}, \dots, x_{n})$ has degree at most $O(T/d)$ as an ordinary multivariate real polynomial.  Since $\theta'$, $\mathrm{BAD}$, $\mathrm{NEAR}$, and $\mathrm{FAR}$ are determined by $x_1, \dots, x_L$ alone, it follows that our final polynomial $p$ is a generalized polynomial of order at most $L + O(T/d)$, as needed for the theorem.\\

Next, we discuss Condition~1, that $p(\bX) \geq h(\bX)$ always.  For the $\mathrm{BAD}$ outcomes for $\bx_1, \dots, \bx_L$ we have $p(\bX) = 1 \geq h(\bX)$.  For the remaining outcomes, we will have $p(\bX) \geq h(\bX)$ as required provided that in all cases
\begin{equation} \label{eqn:show1}
p_{\theta'}(\bz) \geq h_{\theta'}(\bz) \qquad \text{for all $\theta'$ and $z$}
\end{equation}
where
\[
h_{\theta'}(\bz) = \Ind{\bz - \theta' \geq 0}.
\]

Next, we discuss Condition~2, the bound $\E[p(\bX) - h(\bX)] \leq \eps_1$.  It suffices to prove an upper bound of $O(\eps_1)$.  Recall that
\[
\eps_1 = \frac{dt \log(dt)}{T}.
\]
Note also that we will always $T \leq t^2$, since no random variable has stronger hypercontractivity than do Gaussians, for which $T \leq 1 + t^2/16$.  It follows that we will always have $\eps_1 \geq 1/t$.  Thus the probability of $\mathbf{BAD}$, which is at most $O(1/t^4)$, is much smaller than $O(\eps_1)$ and can therefore be neglected.  Hence it suffices to show that
\begin{equation} \label{eqn:show2}
\E[p_{\theta'}(\bz) - h_{\theta'}(\bz)] \leq O(\eps_1)
\end{equation}
holds in both of the following cases:
\paragraph{Case Near:} $\abs{\theta'} \leq t \rnorm{\bz}_2$ and the collection $\{\bx_{L+1}, \dots, \bx_n\}$ is $\delta$-regular.
\paragraph{Case Far:} $\abs{\theta'} > t \rnorm{\bz}_2$.\\

Next we discuss Condition~3, the bound $\Pr[p(\bX) > 1 + 1/d^2] \leq 2^{-T/d}$.  Again, since $p(\bX) = 1$ for the bad outcomes $x_1, \dots, x_L$, it suffices to show that 
\begin{equation} \label{eqn:show3}
\Pr[p_{\theta'}(\bz) > 1 + 1/d^2] \leq 2^{-T/d}
\end{equation}
holds in both Case~a and Case~b.  \\

Finally, we discuss the bound $\rnorm{p(\bX)}_{2d} \leq 1 + 2/d^2$. We have
\[
\E[p(\bX)^{2k}] \leq (1 + 1/d^2)^{2d} + \E[p(\bX)^{2d} \cdot \Ind{p(\bX) > 1 + 1/d^2}] \leq 1+3/d + \E[p(\bX)^{2d} \cdot \Ind{p(\bX) > 1 + 1/d^2}].
\]
If we can show that
\[
\E[p(\bX)^{2d} \cdot \Ind{p(\bX) > 1 + 1/d^2}] \leq 1/d, 
\]
then we will have shown
\[
\E[p(\bX)^{2d}] \leq 1 + 4/d \leq (1 + 2/d^2)^{2d},
\]
as required.  Thus it remains to establish the previous upper bound.  Again, since $p(\bX) = 1$ for the BAD outcomes $x_1, \dots, x_L$, it suffices to show that 
\begin{equation} \label{eqn:show4}
\E[p_{\theta'}(\bz)^{2d} \cdot \Ind{p_{\theta'}(\bz) > 1 + 1/d^2}] \leq 1/d
\end{equation} 
holds in both Case~Near and Case~Far.

Summarizing, our goal is to construct univariate polynomials
$p_{\theta'}(z)$ of degree at most $O(T/d)$ for each of Case~Near and
Case~Far so that~\eqref{eqn:show1}, \eqref{eqn:show2},
\eqref{eqn:show3}, and~\eqref{eqn:show4} all hold.  We will first
handle Case~Near, the more difficult case.

\subsubsection{Case Near}\label{sec:case2}
In this case we have $\abs{\theta'} \leq t \rnorm{\bz}_2$, where $\bz = \bx_{L+1} + \cdots + \bx_n$ is the sum of a $\delta$-regular collection of independent random variables.  Our task is to construct a real polynomial $p_{\theta'}(z)$ of degree at most $O(T/d)$ such that bounds~\eqref{eqn:show1}, \eqref{eqn:show2}, \eqref{eqn:show3}, and~\eqref{eqn:show4} all hold with respect to the function $h_{\theta'}(z) = \Ind{z - \theta' \geq 0}$.\\

Given the parameters $d$ and $t$, choose
\[
\aaa = 16 C_0 \frac{d \log(td)}{T}, \qquad \bbb = \min(1/d^2, 1/t^4);
\]
we have $\aaa < 1$ assuming that the $C_1$ in our assumption on $T$ is large enough.  Let $K = K_{\aaa,\bbb}$ and $P=P_{\aaa,\bbb}$ be the resulting even integer and univariate polynomial from Theorem~\ref{thm:dgjsv}.  Our choice of $\aaa$ was arranged so that
\begin{equation} \label{eqn:alpha}
K \leq \frac{T}{4d}.
\end{equation}

We will define
\[
p_{\theta'}(\bz) = p_{\mathrm{near}}(\theta',\bz)=P(\bw), \qquad \text{where } \bw = \frac{\bz-\theta'}{2t\rnorm{\bz}_2}.
\]

Thus $p_{\theta'}(z)$ has degree $K = O(T/d)$  as necessary, and it also satisfies~\eqref{eqn:show1}, using the property that $P \geq 0$ on $(-\infty, 0]$ and $P \geq 1$ on $[0, \infty)$.

Next we check~\eqref{eqn:show4}.
i.e.,
\[
\E[p_{\theta'}(\bz)^{2d} \cdot \Ind{p_{\theta'}(\bz) > 1 + 1/d^2}] \leq 1/d.
\]
Since $\bbb \leq 1/d^2$, we have that $p_{\theta'}(\bz) > 1+1/d^2$ only if $\abs{\bw} \geq 1$.

Also notice that $p_{\theta'}(z)\leq (4w)^K$, it suffice to bound 
$\E[\Ind{|w|\geq1} \cdot (4w)^{2dK}]$ and we will prove a stronger result:
\begin{equation}\label{eqn:strong}
        \E[(4w)^{2dK} \cdot \Ind{|w|\geq1}]\leq 2^{-T}. \end{equation}

To see this, since we are in Case~Near we have $\abs{\theta'} < t\rnorm{\bz}_2$.  Thus if $|w|\geq 1$, we must have $\abs{\bz} > t\rnorm{\bz}_2$.  This also implies $\abs{\bz - \theta'} < 2\abs{\bz}$; hence we have 
\[
\abs{4\bw} = 2\frac{\abs{\bz - \theta'}}{t \rnorm{\bz}_2} < \frac{4}{t} \cdot \abs{\frac{\bz}{\rnorm{\bz}_2}}.
\]
Thus we have\begin{eqnarray}
 &&\E\left[\Ind{|w|\geq 1} \cdot (4w)^{2dK}\right] \nonumber\\ &\leq& \E\left[\Ind{\abs{\bz} > t\rnorm{\bz}_2} \cdot \left(\frac{4}{t}\right)^{2dK} \left(\frac{\bz}{\rnorm{\bz}_2}\right)^{2dK}\right] \nonumber\\
&=& \left(\frac{4}{t}\right)^{2dK} \cdot \E\left[\Ind{\abs{\frac{\bz}{\rnorm{\bz}_2}} > t} \cdot  \left(\frac{\bz}{\rnorm{\bz}_2}\right)^{2dK}\right]. \label{eqn:largedev}
\end{eqnarray}

It is easy to check that
\[
\Ind{\abs{\frac{\bz}{\rnorm{\bz}_2}} > t} \cdot  \left(\frac{\bz}{\rnorm{\bz}_2}\right)^{2dK} \leq \left(\frac{\bz}{t\rnorm{\bz}_2}\right)^T \cdot t^{2dK},
\]
using the fact that $2dK \leq T$. Thus we may upper-bound~\eqref{eqn:largedev} by
\[
4^{2dK} t^{-T} \frac{\rnorm{\bz}_T^T}{\rnorm{\bz}_2^T} \leq 4^{2dK} t^{-T} (t/4)^T = 4^{2dK-T},
\]
where we used the $(T,2,4/t)$-hypercontractivity of $\bz$.  Since we have
\begin{equation} \label{eqn:kkt}
2dK \leq T/2,
\end{equation}
by virtue of~\eqref{eqn:alpha}, we conclude
\begin{equation} \label{eqn:caseiii}
\E[p_{\theta'}(\bz)^{2d} \cdot \Ind{p_{\theta'}(\bz) > 1 + 1/d^2}]\leq 4^{-T/2} = 2^{-T} \leq 1/d.
\end{equation}

Let us move on to showing~\eqref{eqn:show3} in this Case~Near; i.e., upper-bounding $\Pr[p_{\theta'}(\bz) > 1 + 1/d^2]$.  Since $\bbb \leq 1/d^2$, again we have that $p_{\theta'}(\bz) > 1/d^2$ only if $\abs{\bw} \geq 1$.  But by~\eqref{eqn:strong} \rnote{would have been better to separate this out/make it clearer}
\[
\E[\Ind{\abs{\bw} \geq 1} \cdot (4\bw)^{dK}] \leq 2^{-T},
\]
and the left-hand side is clearly an upper bound on $\Pr[\abs{\bw} \geq 1]$.  Thus we have established~\eqref{eqn:show3} in Case~Near.\\ 

Last, we will work to upper bound $\E[p_{\theta'}(\bz) - h_{\theta'}(\bz)]$ so as to show~\eqref{eqn:show2} in Case~Near.  We analyze three subcases, depending on the magnitude of $\bw$.

\paragraph{Case i: $-\aaa \leq \bw \leq 0$.}  In this case, we upper-bound $p_{\theta'}(\bz) - h_{\theta'}(\bz)$ simply by~$1$, and argue that Case~i occurs with low probability. Specifically, 
\[
\Pr[-\aaa \leq \bw \leq 0] \leq \Pr[\abs{\bw} \leq \aaa] = \Pr[\abs{\bz - \theta'} \leq 2t\aaa \cdot \rnorm{\bz}_2].
\]
We can upper-bound this probability using the Berry-Esseen
Theorem~\cite[Corollary 4.5]{MZ09}.  Since we have $\delta$-regularity
of $\bx_{L+1}, \dots, \bx_{n}$ in Case~Near, we get 
\[
\Pr[\abs{\bz - \theta'} \leq 2t\aaa \cdot \rnorm{\bz}_2] \leq O(\sqrt{\delta} + t\aaa)
\]

By definition of $\aaa$ we have $O(t\aaa) = O(\eps_1)$.  Thus we conclude for Case~i, 
\begin{equation} \label{eqn:casei}
\E[\Ind{\text{Case~i}} \cdot (p_{\theta'}(\bz) - h_{\theta'}(\bz))] \leq O(\sqrt{\delta} + \eps_1).
\end{equation} 

\paragraph{Case ii: $\abs{\bw} \leq 1$ but not Case~i.}  In this case, we have $p_{\theta'}(\bz) - h_{\theta'}(\bz) \leq \bbb \leq 1/t^4$, by construction.  Thus
\begin{equation} \label{eqn:caseii}
\E[\Ind{\text{Case~ii}} \cdot (p_{\theta'}(\bz) - h_{\theta'}(\bz))] \leq 1/t^4 \leq O(\eps_1).
\end{equation} 

\paragraph{Case iii: $\abs{\bw} > 1$.}  I.e., $\abs{\bz - \theta'} > 2t\rnorm{\bz}_2$. Notice that $p_{\theta'}(\bz) - h_{\theta'}(\bz)\leq p_{\theta'}(\bz)$  and therefore 
 $$\E[\Ind{\text{Case~iii}} \cdot (p_{\theta'}(\bz) - h_{\theta'}(\bz))]\leq \E[p_{\theta'}(z) \cdot \Ind{\abs{\bw} \geq 1}] \leq \E\left[\Ind{|w|\geq 1} (4w)^{dK}\right]\leq 2^{-T}\leq O(\eps_1)$$ (the second last inequality is due to  \eqref{eqn:strong}).

\subsubsection{Case Far}
If $\theta < 0$ then $h_{\theta'}$ is almost always~$1$.  As stated, in this case we simply have $p_{\theta'}(z) \equiv 1$.  Bounds~\eqref{eqn:show1}, \eqref{eqn:show3}, and~\eqref{eqn:show4} become trivial; for~\eqref{eqn:show2} it suffices to show
\begin{equation} \label{eqn:easier}
\Pr[\bz \leq \theta'] \leq \eps_1.
\end{equation}
We will show a stronger statement in the course of handling the case that $\theta' > 0$.\\

So it remains to handle the $\theta' > 0$ case.  As stated, in this case we define
\[
p_{\theta'}(z) = p_{\mathrm{far}}(\theta',z)= \left(\frac{z}{\theta'}\right)^q,
\]
where
\[
q = \left\lfloor\frac{T}{2d}\right\rfloor_{\text{even}},
\]
meaning $T/2d$ rounded down to the nearest even integer.  Note that $p_{\theta'}(z)$ has the claimed degree bound $O(T/d)$ (treating $\theta'$ as a constant).  Also note that $p_{\theta'}(\bz) \geq 1$ if and only if $\abs{\bz} \geq \theta'$.  This establishes~\eqref{eqn:show1}. \\

Let's move to~\eqref{eqn:show3}; we need
\[
\Pr[p_{\theta'}(\bz) \geq 1 + 1/d^2] \leq 2^{-T/d}.
\]
Certainly
\[
p_{\theta'}(\bz) \geq 1 + 1/d^2  \quad\Rightarrow\quad p_{\theta'}(\bz) \geq 1 \quad\Rightarrow\quad \abs{\bz} \geq \abs{\theta'}.
\]
It thus suffices to show
\[
\Pr[\abs{\bz} \geq \abs{\theta'}] \leq 2^{-T/d}, 
\]
which, once shown, also establishes~\eqref{eqn:easier}, since $2^{-T/d} \ll \eps_1$. We will in fact show the stronger statement
\begin{equation} \label{eqn:stronger}
\E\left[\left(\frac{\bz}{\theta'}\right)^q\right] \leq 2^{-T/d}.
\end{equation}

And this stronger statement establishes~\eqref{eqn:show2}, again because $2^{-T/d} \leq \eps_1$.\\

To prove~\eqref{eqn:stronger} we appeal to the condition of Case~Far, $\abs{\theta'} > t \rnorm{\bz}_2$.  Thus
\begin{eqnarray*}
\E\left[\left(\frac{\bz}{\theta'}\right)^q\right] & \leq & \E\left[\left(\frac{\bz}{t \rnorm{\bz}_2}\right)^q\right] \\
& \leq & \E\left[\left(\frac{\bz}{t \rnorm{\bz}_2}\right)^T\right]^{q/T} \qquad \text{(Jensen, since $T/q \geq 1$)} \\
& = & t^{-q} \left(\frac{\rnorm{\bz}_T^T}{\rnorm{\bz}_2^T}\right)^{q/T} \\
& \leq & t^{-q} \left(\frac{t}{4}\right)^{q} \qquad \text{(by $(T,2,4/t)$-hypercontractivity of $\bz$)} \\
& = & 4^{-q} \quad=\quad 2^{-T/d},
\end{eqnarray*}
using the definition of $q$.\\

Finally, to prove~\eqref{eqn:show4} it certainly suffices to show
\[
1/d \geq \E[p_{\theta'}(\bz)^{2d}] = \E\left[\left(\frac{\bz}{\theta'}\right)^{2d}\right].
\]
By repeating the previous inequality with $2d$ in place of $q$ (we
still have $T/2d \geq 1$), we can upper-bound the expectation by
$4^{-2d}$, which is indeed at most $1/d$.  This concludes the
verification of Case~Far, and thus all of Theorem~\ref{thm:polyprop}.

\section{  Fooling the uniform distribution on the sphere}
\label{ap:sphere}

In this section, we will show that our PRG can also be used to fool
any function of $d$ halfspaces over the uniform distribution on the
$n$ dimensional unit sphere; building such a PRG also has an
application in derandomizing the hardness of learning reduction
in~\cite{KS09}. 

The main idea  is to  show that the $n$ dimensional Gaussian
distribution can be use to fool the uniform distribution on the
sphere. Therefore, it suffice to fool the $n$ dimensional Gaussian
which is studied  in the previous sections (either using the modified MZ generator or $k$-wise independence). 

 Specifically, we first show the following connection between  the $n$ dimensional Gaussian distribution $\calN(0,1/\sqrt{n})^n$ and the uniform distribution on the $n$ dimensional unit sphere $S_{n-1}$.
\begin{lemma}~\label{lem:sphere}For any $\theta_1,\theta_2,..\theta_d\in \R$ and $W_1,W_2,..W_d\in \R^n$ and $h_i(X)=\sgn(W_i\cdot X-\theta_i)$ and $f:\{0,1\}^d\to \{0,1\}$, there is some universal constant C such that 
\begin{equation}\label{eqn:sphere}
\big|\E_{\bX\in _uS^{n-1}}[f(h_1(\bX),..,h_d(\bX))] - \E_{\bX\in _u\calN(0,1/\sqrt{n})^n}[f(h_1(\bX),h_2(\bX)..h_d(\bX)]\big|\leq \frac{Cd\log n}{n^{{1}/{4}}}
\end{equation}
\end{lemma}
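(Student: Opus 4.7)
The plan is to couple $\bG \sim \calN(0, 1/\sqrt{n})^n$ with $\bU \sim S^{n-1}$ through $\bU := \bG/R$, where $R := \|\bG\|_2$. By the rotational invariance of the Gaussian, $\bU$ is indeed uniform on $S^{n-1}$ and, crucially, is independent of $R$. Writing $V_i := W_i \cdot \bU$ and normalizing so that $\|W_i\|_2 = 1$ (this merely rescales $\theta_i$, leaving $h_i$ invariant), we have $h_i(\bU) = \sgn(V_i - \theta_i)$ while $h_i(\bG) = \sgn(R V_i - \theta_i) = \sgn(V_i - \theta_i/R)$. Since $f : \zo^d \to \zo$, the left-hand side of \eqref{eqn:sphere} is at most $\Pr[\exists i : h_i(\bG) \neq h_i(\bU)]$, which I will bound by a union bound after separately controlling $R$ and the $V_i$'s.

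First, since $n R^2 \sim \chi_n^2$, standard concentration (Laurent--Massart) gives $\Pr[|R-1| > t] \leq \exp(-\Omega(n t^2))$ for $t \leq 1/2$. On the good event $E_t := \{|R - 1| \leq t\}$, the signs $\sgn(V_i - \theta_i)$ and $\sgn(V_i - \theta_i/R)$ can disagree only when $V_i$ lies in the interval between $\theta_i$ and $\theta_i/R$, whose length is $|\theta_i|\,|1 - 1/R| \leq 2|\theta_i| t$. Since $V_i \in [-1,1]$, this interval is disjoint from the support of $V_i$ whenever $|\theta_i| > 1+t$, so I may assume $|\theta_i| \leq 2$.

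The crux is to bound the probability of $V_i$ landing in this interval uniformly in $\theta_i$. The density of $V_i = W_i \cdot \bU$ on $[-1,1]$ is $c_n (1-v^2)^{(n-3)/2}$ with $c_n = \Theta(\sqrt{n})$ (by Stirling), hence at most $O(\sqrt{n}) \cdot \exp(-(n-3) v^2 / 2)$. Integrating over an interval of width $2|\theta_i| t$ near $\theta_i$ yields a bound of $O\bigl(\sqrt{n}\, t\, |\theta_i|\, \exp(-\Omega(n \theta_i^2))\bigr)$. The key elementary observation is that $\theta \mapsto \theta \exp(-c n \theta^2)$ is maximized at $\theta = \Theta(1/\sqrt{n})$ with maximum value $\Theta(1/\sqrt{n})$, so this bound collapses to $O(t)$ independently of $\theta_i$. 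Exploiting the independence of $V_i$ and $R$, we may condition on $R$ within $E_t$ with no loss, obtaining $\Pr[h_i(\bG) \neq h_i(\bU),\ E_t] \leq O(t)$ for each $i$.

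Combining the pieces, a union bound over $i \in [d]$ yields $\Pr[\exists i : h_i(\bG) \neq h_i(\bU)] \leq O(dt) + \exp(-\Omega(n t^2))$. Setting $t = (\log n)/n^{1/4}$ makes the tail term superpolynomially small and the first term exactly $O(d \log n / n^{1/4})$, establishing \eqref{eqn:sphere}. The main technical hurdle is the anti-concentration argument for $V_i = W_i \cdot \bU$, and in particular the extremal calculation that tames the $\theta_i$-dependence of the density bound; once this is in hand, the remainder is routine coupling, concentration, and union bound.
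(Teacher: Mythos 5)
Your proof is correct and uses the same skeleton as the paper's: couple $\bG \sim \calN(0,1/\sqrt{n})^n$ with the uniform spherical point $\bU = \bG/\|\bG\|_2$, reduce to $\sum_i \Pr[h_i(\bG) \neq h_i(\bU)]$ via a union bound, and then bound each summand. The difference is in how that per-halfspace probability is handled: the paper simply cites Lemma~6.2 of Meka--Zuckerman~\cite{MZ09} for the bound $O(\log n / n^{1/4})$, whereas you re-derive it from scratch. Your argument proceeds via $\chi^2_n$ concentration for $R = \|\bG\|_2$, the independence of $R$ and $\bU$, the explicit Beta-type density $c_n(1-v^2)^{(n-3)/2}$ for $V_i = W_i \cdot \bU$, and the clean extremal observation that $\theta \mapsto \sqrt{n}\,\theta\,e^{-cn\theta^2}$ is bounded by a universal constant. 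This buys self-containedness, and in fact your calculation shows the quoted $\log n$ factor is superfluous: taking $t = n^{-1/4}$ gives $\Pr[h_i(\bG)\neq h_i(\bU)] \leq O(n^{-1/4})$ with tail $e^{-\Omega(\sqrt{n})}$, so $O(d/n^{1/4})$ would already suffice. (The $\log n$ in the paper's statement is an artifact of citing the MZ bound as stated.) The only places worth a slightly more careful sentence in a polished write-up are (i) confirming the length bound $|\theta_i||1-1/R| \leq 2|\theta_i|t$ uses $R \geq 1-t \geq 1/2$, and (ii) noting that when $R > 1$ the disagreement interval extends toward the origin so the max density occurs at $\theta_i/R \geq \theta_i/2$, which still gives $e^{-\Omega(n\theta_i^2)}$ --- both of which you implicitly handle but do not spell out.
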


\begin{proof}
Notice that if we  choose $x\in _u\calN(0,1/\sqrt{n})^n$, then  $\frac{x}{\|x\|_2}$ follows the uniform distribution on the sphere.
Therefore, we only need to bound:
\begin{multline} \label{eq:tob}
\big|\E_{\bX\in _u\calN(0,1/\sqrt{n})^n}(f(h_1(\frac{\bX}{\|\bX\|_2}),..,h_d(\frac{\bX}{\|\bX\|_2})) - \E_{x\in _u\calN(0,1/\sqrt{n})^n}f(h_1(\bX),h_2(\bX)..h_d(\bX))|
\\\leq \Pr_{x\in _u\calN(0,1/\sqrt{n})^n}\big(f(h_1(\frac{\bX}{\|\bX\|_2}),..,h_d(\frac{\bX}{\|\bX\|_2}))\neq f(h_1(\bX),h_2(\bX)..h_d(\bX))\big)\\
\leq \sum_{i=1}^d \Pr_{x\in _u\calN(0,1/\sqrt{n})^n}(h_i(\frac{\bX}{\|\bX\|_2})\ne h_i(\bX))
\end{multline}

By Lemma 6.2 in~\cite{MZ09}, we know that:$$\Pr_{\bX\in _u\calN(0,1/\sqrt{n})^n}(h_i(\frac{\bX}{\|\bX\|_2})\ne h_i(x)) \leq \frac{C\log n}{n^{1/4}}.$$   Combining above inequality with~\eqref{eq:tob}, we prove~\eqref{eqn:sphere}. 
\end{proof}

Therefore to fool any function of $d$ halfspaces over the uniform distribution on the $n$ dimensional sphere with accuracy $\Omega(\frac{C\log n}{n^{1/4}})$, it suffice to build a PRG for $n$ dimensional Gaussian distribution with the same accuracy.

\subsection{Derandomized hardness of learning intersections of halfspaces}

One of the application of above PRG is that we can use it to derandomize the  hardness of learning result in~\cite{KS09}.
In~\cite{KS09}, Khot and Saket showed that assuming NP$\ne$RP, for any $\eps >0$ and positive integer $d$, given a set of examples such that there is a intersection of two halfspaces that is consistent with all the examples, it is NP-hard to find a function of any $d$ halfspaces that is consistent with a $1/2+O(\eps)$ fraction of the examples.
Our PRGs can be used to derandomize the hardness reduction and  obtain the same hardness result  assuming NP$\ne$ P.

 To see why our PRG works, we need to look into the details
 of~\cite{KS09}. Let us explain in high level why our PRG  helps,
 without entering into the details of the reduction. The hardness of
 learning result in~\cite{KS09} is based on a reduction  from a Label
 Cover instance $\calL$ to  a distribution $\calD_{0}$ on negative
 examples and a distribution  $\calD_1$ on positive examples. Such a
 reduction would preserve the following two properties: 
\begin{itemize}
\item (Completeness) if the optimum value of $\calL$  is 1, then there is a intersection of  two halfspaces $f(x)$ that agrees with all the examples; i.e., $\E_{\calD_1}[f(\bX)]=\E_{\calD_0}[f(\bX)]+1$.
\item (Soundness) if the optimum value of $\calL$ is small,
  then for any $h(x)$ which is a function of $d$ halfspaces, we have
  that $\big|\E_{D_0}[h(\bX)]-\E_{D_1}[h(\bX)]\big|=O(\eps)$ which implies that
  $h(x)$ agrees with at most $1/2+O(\eps)$ fraction of the examples.  
\end{itemize}

The  $D_i$  for $(i=0,1)$ constructed in~\cite{KS09} is  a  mixture of uniform distribution on the sphere located at different center and the number of the different spheres is $\poly(n)$, where $n$ is the size of the Label Cover instance. Then by the PRG in this paper, we can derandomize each sphere with some distribution that only has support of size $\poly(n)$ to $\eps$-fool functions of $d$ halfspaces; and overall we can get distribution $\calP_0$ and $\calP_1$ with $\poly(n)$ support and it has the  property that for any function $h(x)$ of $l$ halfspaces, $|\E_{\calD_i}[f(\bX)]-\E_
{\calP_i}[f(\bX)]|\leq O(\eps)$ for $i = 0,1.$  If we replace $\calD_i$ with $\calP_i $ in the hardness reduction, we still get the soundness guarantee that $|\E_{P_1}[f(\bX)]-\E_{P_0}[f(\bX)]|=O(\eps).$ 

 We also need to verify that the  completeness property will hold if
 we replace $\calD_i$ with $\calP_i.$ If we look into the reduction
 of~\cite{KS09}, as long as the distribution $\calP_i$  has all its
 support points on the sphere, the reduction will preserve the
 completeness property. Therefore, to make the reduction work, we need
 to build a PRG for functions of $d$-halfspaces over the  uniform
 distribution on the sphere with the additional property that all the
 points generated by the PRG are all on the unit sphere as well.  

This is also achievable and  we summarize the high level idea here.
As is shown in Lemma~\ref{lem:sphere}, it suffice to fool functions of
$d$ halfspaces over  $n$ dimensional Gaussian instead of the uniform
distribution on the sphere. In addition, by the proof of
Theorem~\ref{thm:very-long}, if we only want to fool any functions of
$d$ $\eps$-regular halfspaces, it suffice just to fool  uniform
distribution on $\{-1/\sqrt{n},1/\sqrt{n}\}^n$ instead. For the
uniform distribution over $\{-1/\sqrt{n},1/\sqrt{n}\}^n$. we know that
it can be fooled by PRG with all the support points in
$\{-1/\sqrt{n},1/\sqrt{n}\}^n$ which is a subset of the unit sphere. 
 To handle the case that $d$ halfspaces are not all $\eps$-regular, we
 can follow~the idea of \cite{MZ09} Lemma 6.3 by showing that there
 exists a set of $\poly(n)$ unitary rotations  and with high
 probability that all of the $d$ halfspaces  become regular under a
 rotation randomly chosen from the set.

\bibliography{everything}
\bibliographystyle{alpha}

\newcommand{\etalchar}[1]{$^{#1}$}

\end{document}